\documentclass[prx,aps,twocolumn,showpacs,floatfix]{revtex4-1}
%%%%%%%%%%%%%%%%%%%%%%%%%%%%%%%%%%%%%%%%%%%%%%%%%%%%%%%%%%%%%%%%%%%%%%%%%%%%%%%%%%%%%%%%%%%%%%%%%%%%%%%%%%%%%%%%%%%%%%%%%%%%%%%%%%%%%%%%%%%%%%%%%%%%%%%%%%%%%%%%%%%%%%%%%%%%%%%%%%%%%%%%%%%%%%%%%%%%%%%%%%%%%%%%%%%%%%%%%%%%%%%%%%%%%%%%%%%%%%%%%%%%%%%%%%%%
\usepackage{amsfonts}
\usepackage{amssymb}
\usepackage{hyperref}
\usepackage{graphicx}
\usepackage{dcolumn}
\usepackage{bm,amsmath,verbatim}
\usepackage{mathrsfs}
\usepackage{color}
\usepackage{sidecap}
\allowdisplaybreaks
\usepackage{soul}

\usepackage{amsmath,amssymb,amsthm}
\usepackage{subfigure}
\usepackage{amsfonts,amssymb}
\usepackage{float}

\newtheorem{theorem}{Theorem}[section]

\setcounter{MaxMatrixCols}{10}
%TCIDATA{OutputFilter=LATEX.DLL}
%TCIDATA{Version=5.50.0.2953}
%TCIDATA{<META NAME="SaveForMode" CONTENT="1">}
%TCIDATA{BibliographyScheme=Manual}
%TCIDATA{LastRevised=Friday, November 04, 2016 04:11:10}
%TCIDATA{<META NAME="GraphicsSave" CONTENT="32">}
%TCIDATA{Language=American English}

\hypersetup{colorlinks,
linkcolor=blue,          citecolor=blue,        filecolor=blue,      urlcolor=blue           }

\def\be{\begin{equation}}
\def\ee{\end{equation}}
\def\bea{\begin{eqnarray}}
\def\eea{\end{eqnarray}}
\def\bse{\begin{subequations}}
\def\ese{\end{subequations}}

\def\be{\begin{eqnarray}}
\def\ee{\end{eqnarray}}

\newcommand{\ket}[1]{|#1\rangle}
\newcommand{\bra}[1]{\langle #1 |}

\begin{document}

\title{Type-II quadrupole topological insulators}
\author{Yan-Bin Yang$^{1}$}
\author{Kai Li$^{1}$}
\author{L.-M. Duan$^{1}$}
\author{Yong Xu$^{1,2}$}
\email{yongxuphy@tsinghua.edu.cn}
\affiliation{$^{1}$Center for Quantum Information, IIIS, Tsinghua University, Beijing 100084, People's Republic of China}
\affiliation{$^{2}$Shanghai Qi Zhi Institute, Shanghai 200030, People's Republic of China}

\begin{abstract}
Modern theory of electric polarization is formulated by the Berry phase, which,
when quantized, leads to topological phases of matter. Such a formulation has
recently been extended to higher electric multipole moments, through the discovery
of the so-called quadupole topological insulator. It has been established
by a classical electromagnetic theory that in a two-dimensional material the
quantized properties for the quadupole topological insulator should satisfy a
basic relation.
Here we discover a new type of quadrupole topological insulator (dubbed type-II) that violates this relation
due to the breakdown of the correspondence that a Wannier band and an edge energy spectrum close their gaps simultaneously.
We find that, similar to the previously
discovered (referred to as type-I) quadrupole topological insulator, the
type-II hosts topologically protected corner states carrying fractional corner
charges. However, the edge polarizations only occur at a pair of boundaries in the type-II insulating phase, leading to the violation of the classical constraint.
We demonstrate that such new topological phenomena can appear from quench dynamics in non-equilibrium
systems, which can be experimentally observed in ultracold atomic gases.
We also propose an experimental scheme with electric circuits to realize such a new topological phase of
matter. The existence of the new topological insulating phase means that new
multipole topological insulators with distinct properties can exist in broader
contexts beyond classical constraints.
\end{abstract}
\maketitle

\section{Introduction}

Recently, the formulation of electric polarization based on the Berry phase has been extended to higher
electric multipole moments, such as quadrupole moments and octupole moments~\cite{Taylor2017Science,Taylor2017PRB}.
Similar to electric dipole moments, these multipole moments can be quantized due to crystalline symmetries,
such as reflection symmetries, giving rise to multipole topological insulators.
For a quadrupole topological insulator (QTI), besides the quantized quadrupole moment,
the quantized edge polarization and fractional corner charge arise. Such
fractional charges are associated with the appearance of
the topologically protected corner states. The QTI has ignited an intensive study of
higher-order topological insulators with $(n-m)$-dimensional edge states with $m>1$ for a $n$-dimensional system~\cite{Taylor2017Science,Taylor2017PRB,Fritz2012PRL,ZhangFan2013PRL,Slager2015PRB,FangChen2017PRL,Brouwer2017PRL,Bernevig2018SciAdv,Neupert2018NP,
Wan2017arXiv,Ryu2018PRB,Taylor2018PRB,Ezawa2018PRL,Khalaf2018PRB,Brouwer2018PRB,Fulga2018PRB,WangZhong2018PRL,ZhangFan2018PRL,
Roy2019PRB,Brouwer2019PRX,Fengliu2019PRL,SYang2019PRL,Kai2019arxiv},
in stark contrast to the conventional first-order topological insulators with $m=1$. Recently,
the QTI has been experimentally observed~\cite{Huber2018Nature,Bahl2018Nature,Thomale2018NP}.

For a two-dimensional (2D) square classical system with bulk quadrupole moments $q_{xy}$, a classical electromagnetic
theory based on the multipole expansion of an electric potential shows that equal amplitude edge polarizations $p_{x,y}^{\mathrm{edge}~\alpha}$ and corner charges $Q^{\mathrm{corner~}\alpha,\beta}$ can be induced so that $|q_{xy}|=|p_{x}^{\mathrm{edge~} \pm y}|=|p_{y}^{\mathrm{edge~}\pm x}|=|Q^{\mathrm{corner~}\pm x,\pm y}|$~\cite{Taylor2017Science,Taylor2017PRB}. Here,
$p_{x}^{\mathrm{edge~}\beta}$ ($p_{y}^{\mathrm{edge~}\alpha}$) describes the edge polarization per unit length along the $x$ ($y$) direction at the $y$-normal
($x$-normal) boundaries. The edges perpendicular to $y$ ($x$) are labelled by the Greek letters
$\beta=\pm y$ ($\alpha=\pm x$) with the sign denoting their relative positions.
The currently discovered quantum QTIs indeed respect these relations~\cite{Taylor2017Science,Taylor2017PRB} [see Fig.~\ref{fig1}(a)].
Provided that a system has bulk-independent boundary dipole moments besides the bulk quadrupole moments, the relation
is summarized as
$Q^{\mathrm{corner~}+x,+y}=p_y^{\mathrm{edge~}+x}+p_x^{\mathrm{edge~}+y}-q_{xy}$.
Previous research~\cite{Khalaf2019arXiv} has shown that a Wannier band and an edge energy spectrum should close their gaps simultaneously, giving
rise to edge polarizations of equal amplitude at the $x$-normal and $y$-normal boundaries.
The result is consistent with a previously established theory~\cite{Klich2011PRL}. This simultaneous gap closing ensures that
the classical relation has to be respected.

However, we find that the vanishing of the gap of the Wannier band is
not necessarily associated with the vanishing of the gap of the edge energy spectrum.
As a result, the edge polarization along one direction changes while that along the other direction remains,
leading to a novel type of quadrupole topological insulating phase where
$q_{xy}=|Q^{\mathrm{corner~}\pm x,\pm y}|=|p_x^{\mathrm{edge~}\pm y}|=e/2$, but $p_y^{\mathrm{edge~}\pm x}=0$, which is not equal to
$p_x^{\mathrm{edge~}\pm y}$, violating the classical relation [see Fig.~\ref{fig1}(b)].
It is worthwhile to note that the type-II quadrupole insulating phase is
fundamentally different from
the insulator with pure edge polarizations along one direction but without bulk quadrupole moments, where
the classical relation is also respected.
It is also fundamentally different from a topologically trivial bulk insulator attached with a pair of
Su-Schrieffer-Heeger (SSH) models. In this case, the bulk is trivial with zero quadrupole moments
while the type-II quadrupole insulator exhibits nonzero quadrupole moments in the bulk as detailed in
Appendix A.

Furthermore, we find anomalous quadrupole topological phases which have the zero
Berry phase of the Wannier bands (referred to as Wannier-sector polarization) but the nonzero
edge polarization. This tells us that the previously introduced nested Wilson loop formalism~\cite{Taylor2017Science,Taylor2017PRB}
cannot be used to characterize these insulating phases. Such phases arise because
the Wannier Hamiltonian is fundamentally different from a static system Hamiltonian,
given that the energy spectrum of the former is periodic, reminiscent of that of
the effective Hamiltonian in a periodically driven system~\cite{Fulga2018PRB}. This allows the Wannier bands to close their gaps at
either $\nu=0$ or $\nu=\pm 1/2$.
When the Wannier spectrum under open boundary conditions exhibit both
edge states at $\nu=0,\pm1/2$, the Berry phase vanishes; this resembles a
periodically driven system, where although the traditional topological invariant of a Hamiltonian vanishes,
the edge state persists~\cite{Levin2013PRX}. While such anomalous phenomena have also been found in a model with eight
bands, the Berry phase can still be used to characterize the edge polarization~\cite{Fulga2018PRB}. But that
does not work for our minimal quadrupole model. Here we introduce a new topological
invariant for a Wilson line to characterize the edge polarization. The Wannier gap closing at either $\nu=0$
or $\nu=\pm 1/2$ can be reflected by the change of the topological invariants.

While we demonstrate the existence of the type-II QTI based on a particular model,
we further show from a general perspective that it generically occurs in systems
with reflection symmetries and the particle-hole or chiral symmetry in the presence
of long-range hopping; such long-range hopping widely exists in various systems, such as photonic systems~\cite{YDChong2018PRL,Khanikaev2020NatPho},
polar molecules and Rydberg atoms~\cite{Syzranov2014NC,Peter2015PRA,Browaeys2016JPB,Weber2018QST,Leseleuc2019Science}, Shiba lattices~\cite{Menard2015NP,Menard2017NC,Ojanen2015PRL,Franke2018PSS}, and shaken lattices
where the effective hopping strength ratio can be tuned in a wide range~\cite{Holthaus2005PRL,Smith2011PRA,Sengstock2012PRL}.
Particularly,
a new type of corner modes has recently been observed in a photonic kagome crystals with long-range interactions~\cite{Khanikaev2020NatPho}. Based on this understanding, we further construct significantly simplified models that support the type-II QTI.
Furthermore,
the breakdown of the correspondence between the Wannier spectrum gap and the edge spectrum gap suggests that the Wannier band alone can induce
topological phase transitions despite the absence of the energy band gap closing.
Indeed, we find another new topological phase with quantized edge polarizations but without zero-energy corner modes and quadrupole moments.
This transition arises from the relevant Wannier gap closing characterized by the change of the winding number
for a Wilson line.

Similar to the Chern insulator that exhibits a winding of the Berry phase,
Ref.~\cite{Taylor2017Science} introduces a three-dimensional higher-order topological
insulator by breaking the reflection symmetries so that the quadrupole moment,
edge polarizations at all boundaries and corner charges all exhibit a winding, which is
associated with the presence of chiral hinge modes.
Analogously, based on the type-II QTI, we find two types of new three-dimensional higher-order
topological insulators. In one insulator, the winding of the quadrupole moment, corner charges
and edge polarizations at a pair of boundaries appears associated with chiral hinge modes.
In the other one, only the winding for the edge polarizations at a pair of boundaries happens.

Nonequilibrium dynamics under unitary time evolution between distinct topological phases
have been studied in cold atoms from various aspects~\cite{Rigol2015NC,Cooper2015PRL,Vajna2015PRB,Heyl2016PRB,Huang2016PRL,Hu2016PRL,Refael2016PRL,Zhai2017PRL,Weitenberg2018NP,ShuaiChen2018PRL,ShuChen2018PRB,Xiongjun2018scibull,Ueda2018PRL,Yong2018PRB,Cooper2018PRL,Cooper2019PRB,Weitenberg2019NC}. Given that the unitary
evolution does not change the energy spectra of a parent Hamiltonian~\cite{Ueda2018PRL}, the topology
of the evolving states remains unchanged if symmetries of the parent Hamiltonian
are preserved during unitary time evolution~\cite{Rigol2015NC,Cooper2015PRL,Ueda2018PRL}.
Specifically, let us start with a
ground state $|\psi_{\bf k}\rangle$ of an initial Hamiltonian $H_i({\bf k})$ and then suddenly change the
Hamiltonian to $H_f({\bf k})$ by tuning system parameters. The state then evolves under the final Hamiltonian,
i.e., $|\psi_{\bf k}(t)\rangle=e^{-iH_f({\bf k}) t}|\psi_{\bf k}\rangle$.
Since the evolving state is an eigenstate of a
parent Hamiltonian $H_p({\bf k})=e^{-iH_f({\bf k}) t}H_i({\bf k})e^{iH_f({\bf k}) t}$, the topological
properties of the evolving states are dictated by the parent Hamiltonian. If relevant symmetries of $H_p$
do not change as time progresses, it has been shown that $H_i$ and $H_f$ share the same topological property given $\text{det}(H_f)=\text{det}(H_i)$~\cite{Ueda2018PRL}.
On the other hand, a physical quantity,
such as the Berry phase, can change continuously if symmetries of the parent Hamiltonian
are allowed to change during unitary time evolution~\cite{Cooper2018PRL,Cooper2019PRB}.
We here perform an investigation of the quench dynamics across distinct phases in
the Benalcazar-Bernevig-Hughes (BBH) model and find that the type-II QTI phase emerges
as time evolves, even though the symmetries are preserved during the time evolution. This shows that new topological phases can arise due to the Wannier
gap closing in nonequilibrim dynamics.

The paper is organized as follows. In Sec.~\ref{sec2}, we demonstrate the existence of the type-II
anomalous QTI (AQTI) by studying the topological properties of a tight-binding model.
In Sec.~\ref{sec3}, we introduce a topological invariant for a Wilson line to characterize the edge polarization
change due to the Wannier band gap closing and bulk energy gap closing. In Sec.~\ref{sec4}, we
provide a general analysis for the existence of the type-II QTI and show that the type-II phase
generically appears in systems with reflection symmetries and the particle-hole or chiral symmetry
in the presence of long-range hopping. Based on this analysis, we construct several
significantly simplified models supporting the type-II QTI. We also present another novel
topological phase with quantized edge polarizations but without zero-energy corner modes and quadrupole moments.
In Sec.~\ref{sec5},
we study the pumping phenomenon and novel three-dimensional higher-order topological insulators.
In Sec.~\ref{sec6}, we show that the type-II QTI can appear in quench dynamics through unitary time evolution.
In
Sec.~\ref{sec7}, we demonstrate that the quench dynamics can be experimentally realized in cold atoms and
further propose an experimental scheme using electric circuits to realize these new topological phenomena.
Finally, the conclusion is presented in Sec.~\ref{sec8}.

\begin{figure}[t]
  \includegraphics[width=3.3in]{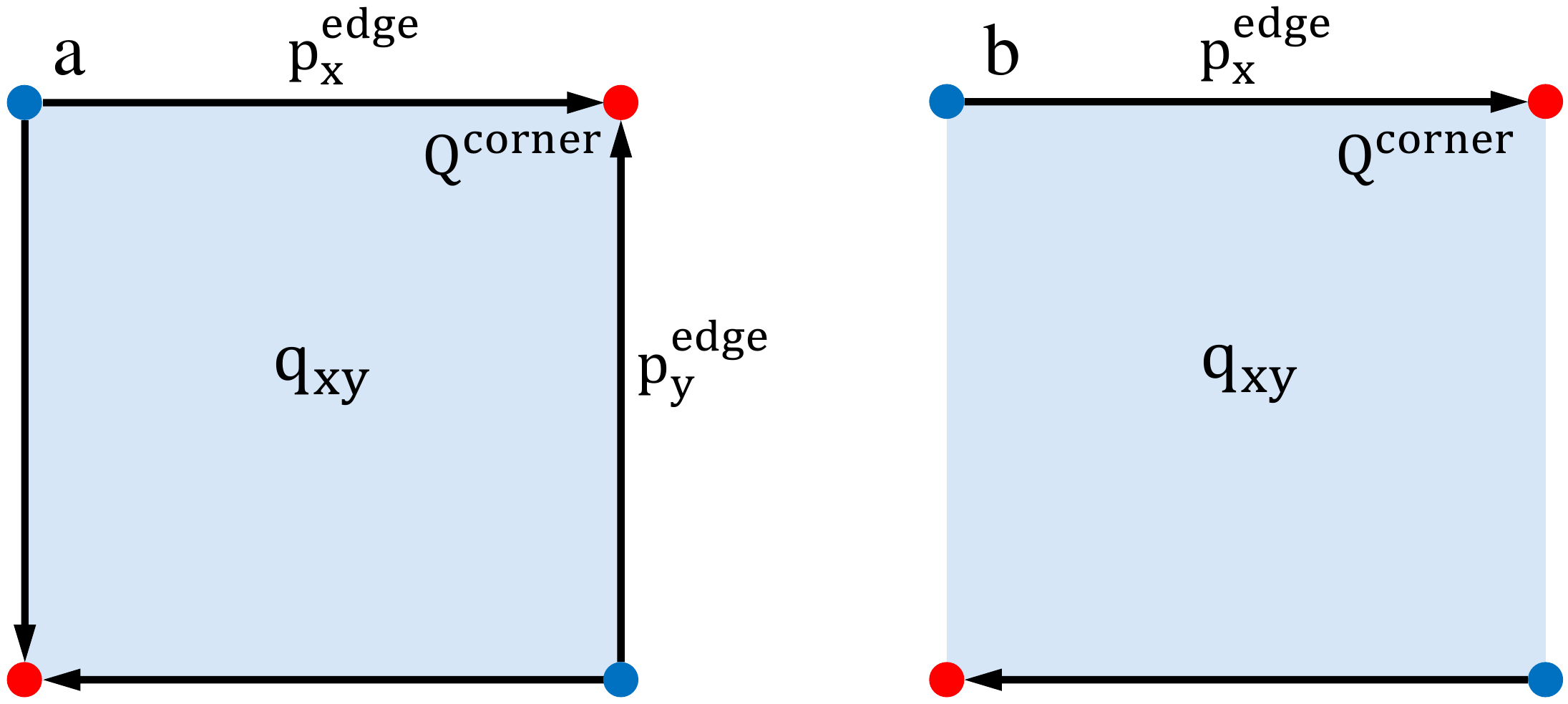}
  \caption{Schematics of edge polarizations and corner charges.
  a, Edge dipole moments exist at all boundaries in
  type-I QTI and b, they exist only
  at the boundaries perpendicular to $y$ in type-II QTIs. Corner
  charges $Q^{\mathrm{corner}}=\pm e/2$ (marked by different colors) appear in both phases.}
\label{fig1}
\end{figure}

\section{The Type-II QTI}
\label{sec2}

To generate the type-II quadrupole topological insulating phase, we consider a 2D
crystal with four sites in each unit cell and long-range hopping between unit cells.
We enforce two reflection symmetries $M_x$:
$x\rightarrow -x$ and $M_y$: $y\rightarrow -y$, in order to maintain
the quantization of the quadrupole moment, corner charges and edge polarizations.
Specifically, the system is described by the following Hamiltonian
\begin{equation}
H=\sum_{\bf R}\sum_{d_x,d_y} \hat{c}^\dagger_{{\bf R}+d_x{\bf e}_x+d_y{\bf e}_y}h_{(d_x d_y)} \hat{c}_{\bf R},
\label{Ham1}
\end{equation}
where $\hat{c}_{\bf R}^\dagger=(\begin{array}{cccc}
 \hat{c}_{{\bf R},1}^\dagger & \hat{c}_{{\bf R},2}^\dagger & \hat{c}_{{\bf R},3}^\dagger & \hat{c}_{{\bf R},4}^\dagger)
\end{array}$ with $\hat{c}_{{\bf R},\alpha}^\dagger$ ($\hat{c}_{{\bf R},\alpha}$) creating (annihilating) an electron
at a sublattice denoted by the index $\alpha=1,2,3,4$ within a unit cell denoted by a lattice vector ${\bf R}=R_x {\bf e}_x+R_y{\bf e}_y$ with $R_x$ and $R_y$ being integers. The sum over $d_x$ and $d_y$ depicts the on-site potential and the
electron tunneling between distinct sublattices within a unit cell when $d_x=d_y=0$ and the tunneling
between neighboring unit cells, otherwise. For simplicity, we choose the lattice constant $a_{x,y}=1$. Here, we consider the case with long-range hopping including up to the hopping
with $(d_x=\pm1,d_y=\pm2)$ and $(d_x=\pm2,d_y=\pm1)$. To be specific,
we choose $h_{(00)}=\gamma(\tau_1\sigma_0+\tau_2\sigma_2)+\Delta \tau_3\sigma_2+\delta \tau_3 \sigma_0$,
$h_{(10)}=t_1 (\tau_1 \sigma_0-i \tau_2 \sigma_3)+t_1^\prime \tau_3 \sigma_2$,
$h_{(01)}=-it_1\tau_2\sigma_-+t_1^\prime \tau_1\sigma_0$, $h_{(11)}=t_2(-i\tau_0\sigma_3-i\tau_3 \sigma_-+\tau_1\sigma_0-i \tau_2 \sigma_3)
-it_2^\prime \tau_2 \sigma_-$, $h_{(20)}=t_2(-i\tau_3 \sigma_3+i\tau_0\sigma_1+\tau_3\sigma_2)$,
$h_{(02)}=-i t_2 \tau_2\sigma_- - it_2^\prime \tau_3 \sigma_-$, $h_{(21)}=t_2^\prime (-\tau_1\sigma_0+i \tau_2 \sigma_3)$
and $h_{(12)}=it_2^\prime \tau_2 \sigma_-$,
where $\sigma,\tau$ denote Pauli matrices for the degrees of freedom within a unit cell and
$\sigma_{\pm}=\sigma_1\pm i \sigma_2$. The other tunnelling matrices
for $d_x\neq 0$ and $d_y\neq 0$ can be obtained by $h_{(-d_x-d_y)}=(h_{(d_xd_y)})^\dagger$, $\hat{m}_x h_{(d_xd_y)}\hat{m}_x^\dagger=h_{(-d_xd_y)}$ and
$\hat{m}_y h_{(d_xd_y)}\hat{m}_y^\dagger=h_{(d_x-d_y)}$ required by the Hermiticity and reflection symmetries of the system
with $\hat{m}_x=\tau_1 \sigma_3$ and $\hat{m}_y=\tau_1 \sigma_1$ for zero $\delta$.
Specifically, we set $\Delta = t_1 = 0.3$, $t_1^\prime=0.2$, $t_2=0.15$, and $t_2^\prime=0.1$.
The Hamiltonian in momentum space can be found in Appendix B. When $\delta=0$, this Hamiltonian also respects
the particle-hole symmetry to protect the zero-energy corner modes, i.e.,
$\mathcal{C}H\mathcal{C}^{-1}=H$ with $\mathcal{C}\hat{c}_{{\bf R},\alpha}\mathcal{C}^{-1}=\sum_{\beta=1}^{4}(\tau_3\sigma_0)_{\alpha\beta}\hat{c}_{{\bf R},\beta}^\dagger$.
The expression for the particle-hole symmetry in momentum space can be found in Appendix B.
Compared with the model in Ref.~\cite{Taylor2017Science}, our model does not preserve
the time-reversal symmetry $\Theta=\kappa$ ($\kappa$ is the complex conjugation operator) and thus the system does not have the $PT=\hat{m}_x \hat{m}_y\kappa$ symmetry, which guarantees the
double degeneracy of the energy bands. Our model breaks this symmetry, lifting the energy degeneracy.

\begin{figure*}[t]
\includegraphics[width=\textwidth]{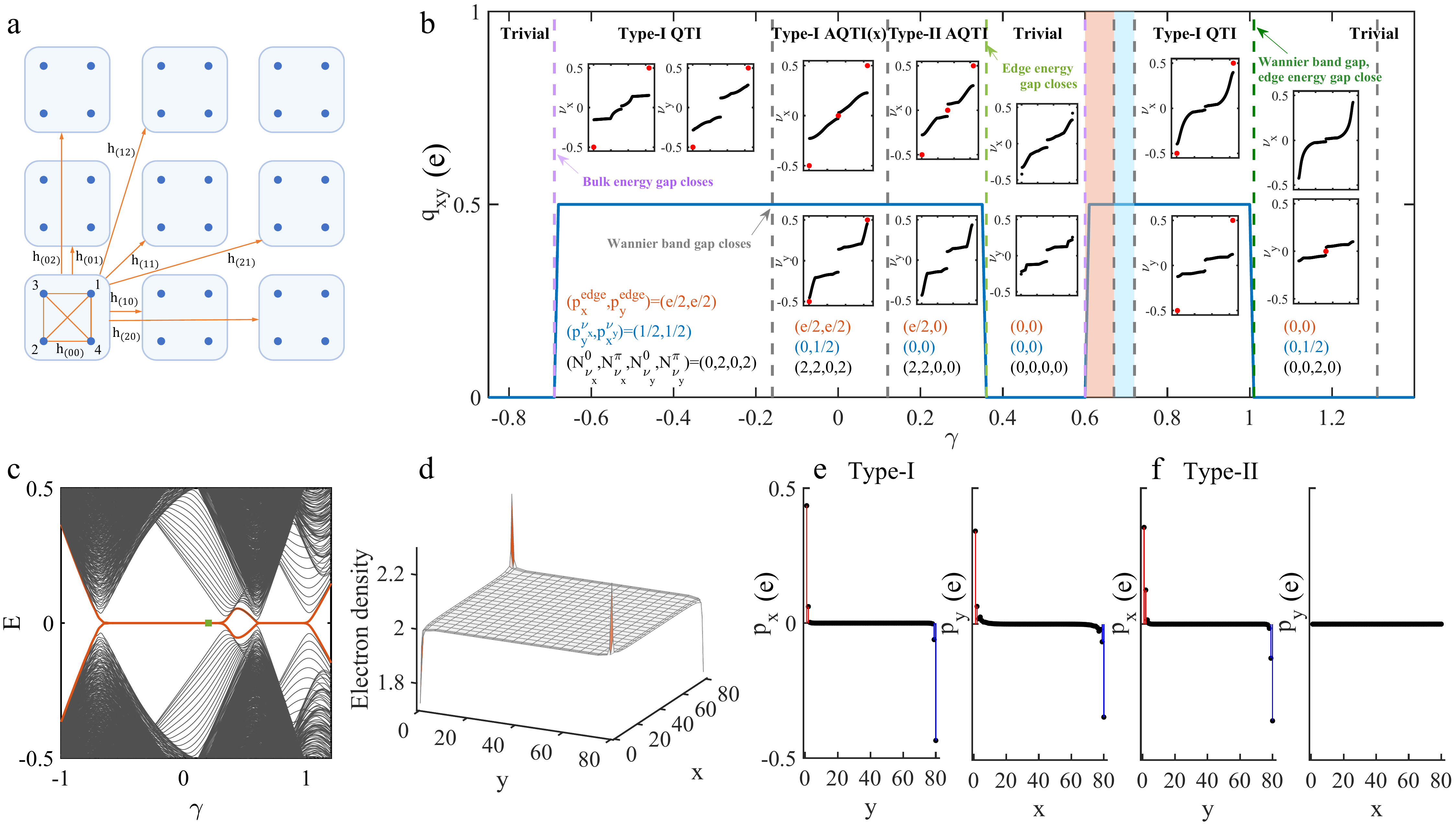}
  \caption{Schematics of our model, phase diagram and topological properties.
  a, Schematics of the tunnelling in our tight-binding model. b, Phase diagram with respect to
  a system parameter $\gamma$. The quadrupole moment evaluated for a $80\times80$ system
  is plotted as a blue line. In the phase diagram, we observe
  the topologically trivial insulator, the type-I QTI, the type-I anomalous
  quadrupole topological insulator (AQTI) (anomalousness exists in the Wannier band $\nu_x$)
  and the type-II AQTI. The subsets display the Wannier spectrum $\nu_x$ ($\nu_y$) in a cylinder
  geometry with periodic boundaries along $x$ ($y$) and open ones along $y$ ($x$) with the isolated Wannier centers
  highlighted by red circles. The edge polarization $(p_x^{\mathrm{edge}},p_y^{\mathrm{edge}})$ calculated using
  the formula~(\ref{edgePEq}), the Wannier-sector polarization
  $(p_y^{\nu_x},p_x^{\nu_y})$, the number of edge states of the Wannier Hamiltonian
  $N_\nu \equiv (N_{\nu_x}^0, N_{\nu_x}^{\pi}, N_{\nu_y}^0, N_{\nu_y}^{\pi})$ are also shown.
  The vertical dashed lines represent the critical points where the bulk energy gap, the edge energy gap, or the
  Wannier spectrum gap vanish. The light red and blue regions denote the type-I AQTI(xy) (anomalousness exists in both Wannier band $\nu_x$ and $\nu_y$) and type-I AQTI(x). Richer phase diagram for the topologically trivial phase can be found in Appendix C. c, The energy spectrum as a function of $\gamma$ for open boundary conditions along both $x$ and $y$ directions with zero-energy corner modes being highlighted by a red line.
  d, The electron density distribution in a typical type-II AQTI phase with the zero-energy corner modes marked by the green square in c. Here, a very small $\delta$ is imposed so that two corner states are occupied.
  e,f, The edge polarization profiles for a type-I AQTI state at $\gamma=-0.1$ and a type-II AQTI state at $\gamma=0.2$, respectively. }
\label{fig2}
\end{figure*}

Our numerical computation shows the presence of an energy gap in momentum space energy spectra with
respect to $\gamma$ unless $\gamma=-0.69$ and $\gamma=0.61$, where the energy gap vanishes.
This implies that the bulk of the system exhibits the insulating property in the gapped regions
at half filling. The insulating feature can also be seen in the energy spectra under
open boundary conditions along both $x$ and $y$
directions [see Fig.~\ref{fig2}(c)]. However, for $-0.69<\gamma<0.34$ and $0.61<\gamma<1.03$, imposing open boundaries render the appearance of four zero-energy states
localized at the corners corresponding to a second-order topological insulator, where corner states exist at
the boundaries of boundaries as shown in Fig.~\ref{fig2}(d).
In other parameter regions, we do not find zero-energy corner states.

These corner states give rise to fractional charges $\pm e/2$ localized at the corners.
Such corner charges are numerically calculated by performing the integration
of the charge density over a quadrant of the system~\cite{Taylor2017PRB},
\begin{equation}
Q^{\mathrm{corner}~-x,-y}=\sum_{R_x=1}^{N_x/2}\sum_{R_y=1}^{N_y/2}\rho({\bf R}),
\end{equation}
where $\rho({\bf R})=2e-e\sum_{n=1}^{N_{\mathrm{occ}}}\sum_{\alpha=1}^4|[u^{n}]^{{\bf R},\alpha}|^2$ is the charge density with
the first term contributed by the atomic positive charges and the second term by the electron distribution
described by the $n$th occupied eigenstate $|u^{n}\rangle$ of our Hamiltonian
under open boundary conditions with $[u^{n}]^{{\bf R},\alpha}$
being the component at the site $\bf R$ with orbital index $\alpha$.
To calculate the corner charge, we include a small $\delta$
so that the fourfold degeneracy of the zero-energy states is lifted, leading to two corner states with positive energy
and the other two with negative energy. At half filling, only two corner states are occupied. Suppose the atoms contribute
$+2e$ charge in each unit cell. This gives us the corner-localized fractional charges $\pm e/2$ in the limit
$\delta \rightarrow 0$.

To show that the insulator with zero-energy corner states is a QTI, we calculate their quadrupole moments
based on the following formula~\cite{Wheeler2018arXiv,Cho2018arXiv}
\begin{equation}
q_{xy}=\frac{1}{2\pi}\mathrm{Im}[\log \langle \Psi_G|\hat{U}_2|\Psi_G\rangle ],
\end{equation}
where $\hat{U}_2=e^{2\pi i\sum_{\bf r}\hat{q}_{xy}({\bf r})}$ with $\hat{q}_{xy}({\bf r})=xy\hat{n}({\bf r})/(L_xL_y)$ being
the quadrupole moment per unit cell measured with respect to $x=y=0$ at the site ${\bf r}$, $L_x$ and $L_y$ are the length of
the system along $x$ and $y$ directions, respectively, the sum is over $(x,y)\in(0,L_x]\times (0,L_y]$, $\hat{n}({\bf r})$
is the number of electrons at the site ${\bf r}$ and $|\Psi_G\rangle$ is the many-body ground state of a system.
Our calculation is performed under periodic boundary conditions. Note that the atomic positive charge contribution
has been deducted.

Our numerical results show that the system has a quantized quadrupole moment
$q_{xy}=e/2$ (protected by the reflection symmetry) in the region where the zero-energy corner modes exist, as shown in Fig.~\ref{fig2}(b). The change of the quadrupole
moment is associated with the vanishing of either a bulk energy gap or an edge energy gap, reflecting
the topological properties of the quadrupole insulating phase. We note that while Ref.~\cite{Watanabe2019PRB} points out
some difficulties for evaluating the quadrupole moment in a generic system using the formula proposed in Ref.~\cite{Wheeler2018arXiv,Cho2018arXiv}, the calculated quadrupole moments in our model are reasonable as verified in Appendix A.

To characterize the edge polarization (for example, the polarization along $x$), we consider
the Wilson loop
\begin{equation}
\mathcal{W}_x=F_{x,k_x+(N_x-1)\delta k_x}\cdots F_{x,k_x}
\end{equation}
and similarly for $\mathcal{W}_y$, where the subscript $x$ ($y$) indicates that the Wilson loop is
defined following the path along $x$ ($y$).
Here $[F_{x,k_x}]^{mn}=\langle u_{k_x+\delta k_x}^m|u_{k_x}^n\rangle$, where $\delta k_x=2\pi/N_x$ with $N_x$ being the number of unit cells along
$x$, and $k_x$ is the quasimomentum along $x$ due to the imposed periodic boundary condition along that direction~\cite{Taylor2017Science,Taylor2017PRB}.
For a system with periodic boundaries along $y$, $|u_{k_x}^n\rangle=|u_{k_x,k_y}^n\rangle$ refers to
the occupied eigenstate
of a system Hamiltonian in momentum space with $n$ being the band index.
Yet, for a system with open boundaries, $|u_{k_x}^n\rangle$
refers to the occupied eigenstate of the Hamiltonian with open boundaries along $y$.
The Wannier Hamiltonian $H_{\mathcal{W}_x}$ is defined by $\mathcal{W}_x\equiv e^{iH_{\mathcal{W}_x}}$ with
its eigenvalues $2\pi\nu_x$ referred to as the Wannier spectrum, where $\nu_x$ is the Wannier center that determines the polarization that each state contributes. Here, the reflection symmetry maintains the vanishing of the total polarization in the bulk~\cite{Taylor2017Science,Taylor2017PRB}. Let us first consider a cylinder geometry with
open boundaries along $y$. In this case, when the Wannier spectrum exhibits isolated eigenvalues at $\nu_x=\pm 1/2$,
with the corresponding eigenstates being localized at two opposite $y$-normal boundaries, the system has
the boundary polarization $p_x^{\mathrm{edge}}$.
The appearance of the edge polarization stems from the topological property of the bulk. There are two routes to the emergence of the edge states in the Wannier spectrum. One is through the change of the
topological property of the Wannier bands, the eigenvalues $\nu_x(k_y)$ of $H_{\mathcal{W}_x}(k_y)$, under periodic
boundary conditions along $y$, by closing the Wannier band gap at $\nu_x=\pm1/2$.
An alternative route is provided by closing either the bulk energy gap or edge energy gap, resulting in an abrupt change of the quadrupole moment.

To distinguish between the type-I and type-II QTIs,
it is necessary to characterize their edge polarization by the Wilson loop.
In a torus geometry, the eigenvalues of the Wilson loop $\mathcal{W}_x(k_y)$ [similarly for $\mathcal{W}_y(k_x)$]
takes the form of $e^{i2\pi \nu_x^j(k_y)}$ with $j=1,\cdots, N_{\mathrm{occ}}$ and $N_{\mathrm{occ}}$
being the number of occupied bands since
$\mathcal{W}_x(k_y)$ is unitary, implying that $H_{\mathcal{W}_x}(k_y)$ has eigenvalues
$2\pi \nu_x^j(k_y)$. Because $e^{i2\pi \nu_x^j(k_y)}$ repeats over intervals of $1$ for $\nu_x^j(k_y)$, we restrict
$\nu_x^j(k_y)$ to $(-0.5,0.5]$. Because of the reflection symmetry $M_x$: $x\rightarrow -x$,
${\pm 2\pi \nu_x^j(k_y)}$ are both eigenvalues of $H_{\mathcal{W}_x}(k_y)$, so that the Wannier centers appear in
pairs $[-\nu_x^j(k_y),\nu_x^j(k_y)]$, maintaining the vanishing of the bulk dipole moments in our model.
The Wannier bands can be gapped with one band $\nu_x^-(k_y)\in(-0.5,0)$ and the other $\nu_x^+(k_y)\in(0,0.5)$
similar to a conventional band. However, it turns out that there are two gaps for the Wannier bands: one is
around $\nu_x=0$ and the other around $\nu_x=\pm 1/2$; the gaps can close at either $\nu_x=0$ or $\nu_x=\pm 1/2$.

Fig.~\ref{fig2} illustrates that in the type-I phase, both Wannier spectra $\nu_x$ and $\nu_y$
under corresponding open boundary conditions
(open along $y$ and $x$, respectively)
exhibit isolated eigenvalues
at $\nu_x=\pm 1/2$, which disappear under periodic boundary conditions,
implying that they are contributed by the boundary states. Their emergence indicates
the presence of the dipole moments at all the four boundaries.
Remarkably, in the type-II phase, only $\nu_x=\pm1/2$ occurs but not for $\nu_y$,
implying that the dipole moments only exist at the $y$-normal edge but not at
the $x$-normal one, as shown in Fig.~\ref{fig1}(b).
%Note that we cannot calculate the edge polarization only based on the edge states since generically
%these states are mixed up with the bulk states.

To show that the dipole moments are localized at boundaries, we calculate the polarization distribution by
\begin{equation}
p_x(R_y)=\sum_{j=\pm}\rho^j(R_y)\nu_x^j,
\end{equation}
where $\rho^j(R_y)=\frac{1}{N_x}\sum_{k_x,\alpha}|\sum_{n=1}^{N_{\mathrm{occ}}}[u_{k_x}^n]^{R_y,\alpha}[\nu_{k_x}^j]^n|^2$ is the probability density of
the hybrid Wannier functions~\cite{Taylor2017Science,Taylor2017PRB}, $[\nu_{k_x}^j]^n$ is the $n$th entry of the $j$th eigenvector $|\nu_{k_x}^j\rangle$ of the Wannier Hamiltonian corresponding to the Wannier center
$\nu_x^j$ in a cylinder geometry with open boundaries along $y$,
and $[u_{k_x}^n]^{R_y,\alpha}$ describes the collection of entries of the $n$th occupied eigenstate of our Hamiltonian
in the same boundary configuration. The edge polarization is defined as the sum of $p_x(R_y)$ over a half along $y$,
i.e.,
\begin{eqnarray} \label{edgePEq}
p_x^{\mathrm{edge~}-y}&=&\sum_{R_y=1}^{N_y/2}p_x(R_y) \\ \nonumber
&=&-p_x^{\mathrm{edge~}+y}=-\sum_{R_y=N_y/2+1}^{N_y}p_x(R_y),
\end{eqnarray}
which is quantized. The formulation of
the edge polarization along $y$ is similar.

Figure~\ref{fig2}(e,f) show that
the polarization, if exists, is indeed exponentially localized at the boundaries and opposite boundaries have opposite polarizations. While the polarization has a distribution along the direction perpendicular
to a boundary, their total value for an edge is quantized, i.e., $p_{x,y}^{\mathrm{edge}}=\pm e/2$.
Despite the presence of the edge dipole moments, the total polarization vanishes as opposite boundaries have opposite edge polarizations. In the type-II phase, the polarization along $y$ remains zero, in stark contrast to
the corresponding nonzero edge polarization in the type-I phase.

Although the Wannier Hamiltonian can have the edge states at $\nu_x=\pm 1/2$ or $\nu_x=0$,
only the former contributes to the edge polarization. In fact, both of these states
at $\nu_x=\pm 1/2$ or $\nu_x=0$ can appear simultaneously. In that case, we will show that
the Wannier-sector polarization is zero and thus cannot be used to characterize these edge states.
We refer to such a topological insulating phase as an anomalous QTI.
%We refer to such a topological insulating phase as a type-I anomalous QTI.
%While the type-II insulating phase is also anomalous from this perspective,
%we skip the "anomalous" as we do not find a type-II normal one.

The number of the edge states of the Wannier Hamiltonian $H_{\mathcal{W}_x}$ ($H_{\mathcal{W}_y}$) changes
when the gap of the bulk energy spectrum $E(k_x,k_y)$, edge energy spectrum
$E^{\mathrm{edge},y}(k_x)$ [$E^{\mathrm{edge},x}(k_y)$ ] at the $y$-normal ($x$-normal) edges or Wannier band
$\nu_x(k_y)$ [$\nu_y(k_x)$] closes,
reflecting the topological feature of the edge polarization. Associated with the
vanishing of the bulk energy gap or edge energy gap is
the change of the number of the edge states of the Wannier Hamiltonian
corresponding to both $\nu_x=0$ and $\nu_x=\pm 1/2$.
However,
when the Wannier bands
close their gap at $\nu_x=0$ or $\nu_x=\pm1/2$,
only the number of the edge states with the same eigenvalue as that where
the gap vanishes changes.
Specifically,
the bulk energy gap closes at $\gamma=-0.69$ and $\gamma=0.61$, leading to the phase transition
between the topologically trivial phase with $N_\nu=(2,0,2,0)$ and type-I quadrupole insulating phase with $N_\nu=(0,2,0,2)$,
and the transition between a trivial phase with $N_\nu=(0,0,0,0)$ and the type-I anomalous phase with $N_\nu=(2,2,2,2)$, respectively.
Here, $N_\nu \equiv (N_{\nu_x}^0, N_{\nu_x}^{\pi}, N_{\nu_y}^0, N_{\nu_y}^{\pi})
$ with $N_{\nu_\lambda}^{\epsilon}$ ($\lambda=x,y$) denoting the number of the edge sates of the Wannier
Hamiltonian $\mathcal{W}_{\lambda}$ corresponding to the eigenvalue $\epsilon=0,\pi$.
The vanishing gap of the Wannier bands divides the quadrupole insulating phase for $-0.69<\gamma<0.34$
into three regions: type-I QTI, type-I AQTI and type-II AQTI; each gap closure gives rise to the change
of the number of the corresponding edge states, as shown in Fig.~\ref{fig2}(b).
The edge energy spectrum at
the $y$-normal boundary closes its gap at $\gamma=0.34$, resulting in the phase transition between
a trivial phase with $N_\nu=(0,0,0,0)$ and the type-II AQTI with $N_\nu=(2,2,0,0)$.

Our results show that the Wannier bands $\nu_x(k_y)$ [$\nu_y(k_x)$] do not necessarily close their gap
at $\nu=\pm 1/2$ at the same time
as the edge energy spectrum localized at the $x$-normal ($y$-normal) boundaries (see Sec. IV for details).
If their gaps always vanish simultaneously, there should be equal number of the edge states of the
Wannier Hamiltonian $\mathcal{W}_x$ and $\mathcal{W}_y$ with eigenvalues $\nu=\pm 1/2$,
giving rise to the same amplitude edge polarization at the $x$-normal and
$y$-normal boundaries [see the case for
$\gamma=1.03$ in Fig.~\ref{fig2}(b)]~\cite{Khalaf2019arXiv}. With this violation in our model, we find the type-II phase where
the dipole moments only exist at the boundaries vertical to $y$.

\section{A topological invariant for a Wilson line}
\label{sec3}

The Wannier-sector polarization for the Wannier band $\nu_x^{\pm}$ (similarly for $\nu_y^{\pm}$) is defined as~\cite{Taylor2017Science,Taylor2017PRB}
\begin{equation}
p_y^{\nu_x^{\pm}}=-\frac{1}{(2\pi)^2}\int_{BZ} d^2{\bf k} \mathcal{A}_{y,{\bf k}}^{\pm},
\end{equation}
where $\mathcal{A}_{y,{\bf k}}=-i\langle w_{x,{\bf k}}^{\pm}|\partial_{k_y}|w_{x,{\bf k}}^{\pm}\rangle$ is
the Berry connection over the Wannier bands $\nu_x^{\pm}$, respectively.
$|w_{x,{\bf k}}^{\pm}\rangle=\sum_{n=1,2}|u_{{\bf k}}^n\rangle [\nu_{x,{\bf k}}^\pm]^n$
with $|u_{{\bf k}}^n\rangle$ being the $n$th occupied eigenstate of our Hamiltonian in momentum space
and $[\nu_{x,{\bf k}}^\pm]^n$ being the $n$th entry of the eigenstate $|\nu_{x,{\bf k}}^\pm\rangle$
of the Wannier Hamiltonian in a torus geometry.

The Wannier-sector polarizations were previously introduced to characterize the edge polarizations of
the type-I QTI. However, when it becomes anomalous, we find that a corresponding Wannier-sector polarization
vanishes [see their values $(p_y^{\nu_x},p_x^{\nu_y})$ in Fig.~\ref{fig2}(b)], suggesting that
the Wannier-sector polarization cannot uniquely identify the edge dipole moments.

\begin{figure}[t]
\includegraphics[width=3.3in]{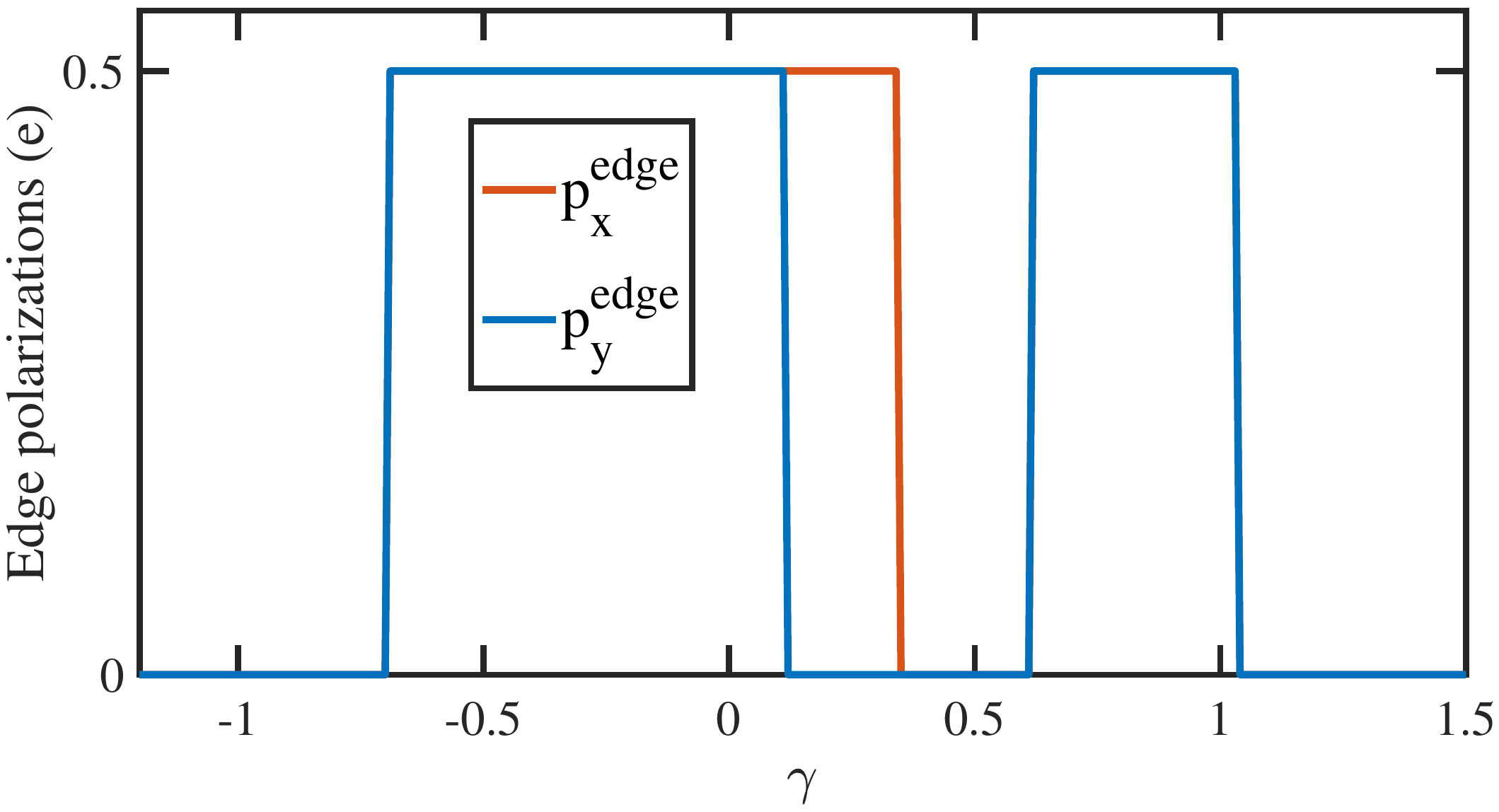}
  \caption{The edge polarizations calculated using the formula~(\ref{windingR})
  by choosing a gauge such that $W_{\nu_x}^{\epsilon=\pi}(\gamma_0=-1)=W_{\nu_y}^{\epsilon=\pi}(\gamma_0=-1)=0$.
  The edge polarization $p_x^{\textrm{edge}}$ and $p_y^{\textrm{edge}}$ in units of $e$ are plotted as the red and blue lines, respectively.
   }
\label{fig3}
\end{figure}

To characterize the edge polarization along $x$ (similarly along $y$), we will introduce a topological invariant based on
the Wilson line with respect to $\epsilon$ defined as
\begin{equation}
\mathcal{W}_{k_x\leftarrow 0}^\epsilon(k_y) \equiv \mathcal{W}_{k_x\leftarrow 0}(k_y)e^{-iH_{W_x}^{\epsilon}
(k_y)k_x/(2\pi)},
\end{equation}
where $\mathcal{W}_{k_x\leftarrow 0}(k_y)=F_{x,(k_x,k_y)}F_{x,(k_x-\delta k_x,k_y)}\cdots F_{x,(0,k_y)}$
is the Wilson line and $H_{W_x}^{\epsilon}(k_y)\equiv -i\log_\epsilon \mathcal{W}_{2\pi\leftarrow 0}(k_y)$ is
the Wannier Hamiltonian with respect to $\epsilon$ with $\log_\epsilon(e^{i\phi})=i\phi$ with
$\epsilon\le \phi<\epsilon+2\pi$. The reflection symmetry leads to (the details are presented in Appendix D):
\begin{eqnarray}
S\mathcal{W}_{\pi\leftarrow 0}^{\epsilon=0}(k_y)S^\dagger&=&-\mathcal{W}_{\pi\leftarrow 0}^{\epsilon=0}(k_y) \\
S\mathcal{W}_{\pi\leftarrow 0}^{\epsilon=\pi}(k_y)S^\dagger&=&\mathcal{W}_{\pi\leftarrow 0}^{\epsilon=\pi}(k_y),
\end{eqnarray}
where $S=\sigma_z$. In the basis consisting of eigenvectors of $S$,
\begin{equation}
\mathcal{W}_{\pi\leftarrow 0}^{\epsilon=0}(k_y)=
\left(
  \begin{array}{cc}
    0 & U^{\epsilon=0}_+(k_y) \\
    U^{\epsilon=0}_-(k_y) & 0 \\
  \end{array}
\right)
\end{equation}
and
\begin{equation}
\mathcal{W}_{\pi\leftarrow 0}^{\epsilon=\pi}(k_y)=
\left(
  \begin{array}{cc}
    U^{\epsilon=\pi}_+(k_y) & 0 \\
    0 & U^{\epsilon=\pi}_-(k_y) \\
  \end{array}
\right).
\end{equation}
Hence, we can define a winding number at $\epsilon=0,\pi$ as
\begin{equation}
W_{\nu_x}^{\epsilon}=\frac{1}{2\pi i}\int_0^{2\pi} dk_y\partial_{k_y}\log U_{+}^{\epsilon}(k_y).
\label{WindFormula}
\end{equation}
As presented in Appendix D, the winding number can change under a gauge transformation for
the occupied eigenstates, since the Wilson line is defined by these eigenstates,
in sharp contrast to the Floquet case, where the winding number is defined for an evolution operator~\cite{Fruchart2016PRB,WangZhong2017PRB}.
It suggests that a definite physical quantity is the change of the winding number as a system parameter
$\gamma$ varies. During this change, the Berry phase of the occupied bands should vary continuously, as discussed
in Appendix D. Fortunately, the quantized dipole moment is a $Z_2$ quantity, so that
the change of the winding number is sufficient to characterize the edge polarization.

The winding number changes as the
bulk energy gap and Wannier band gap close, but does not respond to the closure of the edge energy gap,
which is reasonable as the Wannier bands are constructed from the wave functions without any edge.
However, the closure of the edge energy gap is associated with the change
of the quadrupole moment. Taking into account the edge energy gap closure,
we define
\begin{eqnarray}
p_x^{\mathrm{edge}}(\gamma_1)-p_x^{\mathrm{edge}}(\gamma_0)
=&&e[(W_{\nu_x}^{\epsilon=\pi}(\gamma_1)-W_{\nu_x}^{\epsilon=\pi}(\gamma_0) \nonumber \\
&&-\Delta N_{q,x})/2]\mathrm{mod}(1),
\label{winding}
\end{eqnarray}
where $\Delta N_{q,x}$ represents the number of times that the quadrupole moment changes due to the gap closure of the edge energy spectrum
at the boundaries perpendicular to $y$, when we vary $\gamma$ from $\gamma_0$ to $\gamma_1$.
This shows that the topology of the bulk spectrum dictates the edge polarization as
the right sides are determined by the bulk property.
In fact, $\Delta N_{q,x}$ is also associated with the number of times of the change of a parity (eigenvalue of $\hat{m}_x$)
at the high-symmetric points $k_x=0$ or $k_x=\pi$ for a state localized at one boundary perpendicular to $y$ (see
more detailed discussion in the following section).
Provided that we start from a topologically
trivial phase, i.e., $p_x^{\mathrm{edge}}(\gamma_0)=0$, the formula can be reduced to
\begin{equation}
p_x^{\mathrm{edge}}(\gamma_1)
=e \left[(W_{\nu_x}^{\epsilon=\pi}(\gamma_1)
-\Delta N_{q,x})/2\right]\mathrm{mod}(1)
\label{windingR}
\end{equation}
by choosing a gauge such that $W_{\nu_x}^{\epsilon=\pi}(\gamma_0)=0$.

In Fig.~\ref{fig3}, we plot the edge polarizations $p_{x,y}^{\textrm{edge}}$ as a function of $\gamma$, calculated based on
the formula~(\ref{windingR}) by choosing a gauge such that $W_{\nu_x}^{\epsilon=\pi}(\gamma_0=-1)=W_{\nu_y}^{\epsilon=\pi}(\gamma_0=-1)=0$ given that the phase is
topologically trivial when $\gamma=-1$. The results are consistent with the Wannier spectrum in a cylinder
geometry and the edge polarization calculated using the hybrid Wannier functions.

\section{Inequivalence between Wannier and edge energy spectra due to long-range hopping}
\label{sec4}
\begin{figure}[t]
  \includegraphics[width=3.3in]{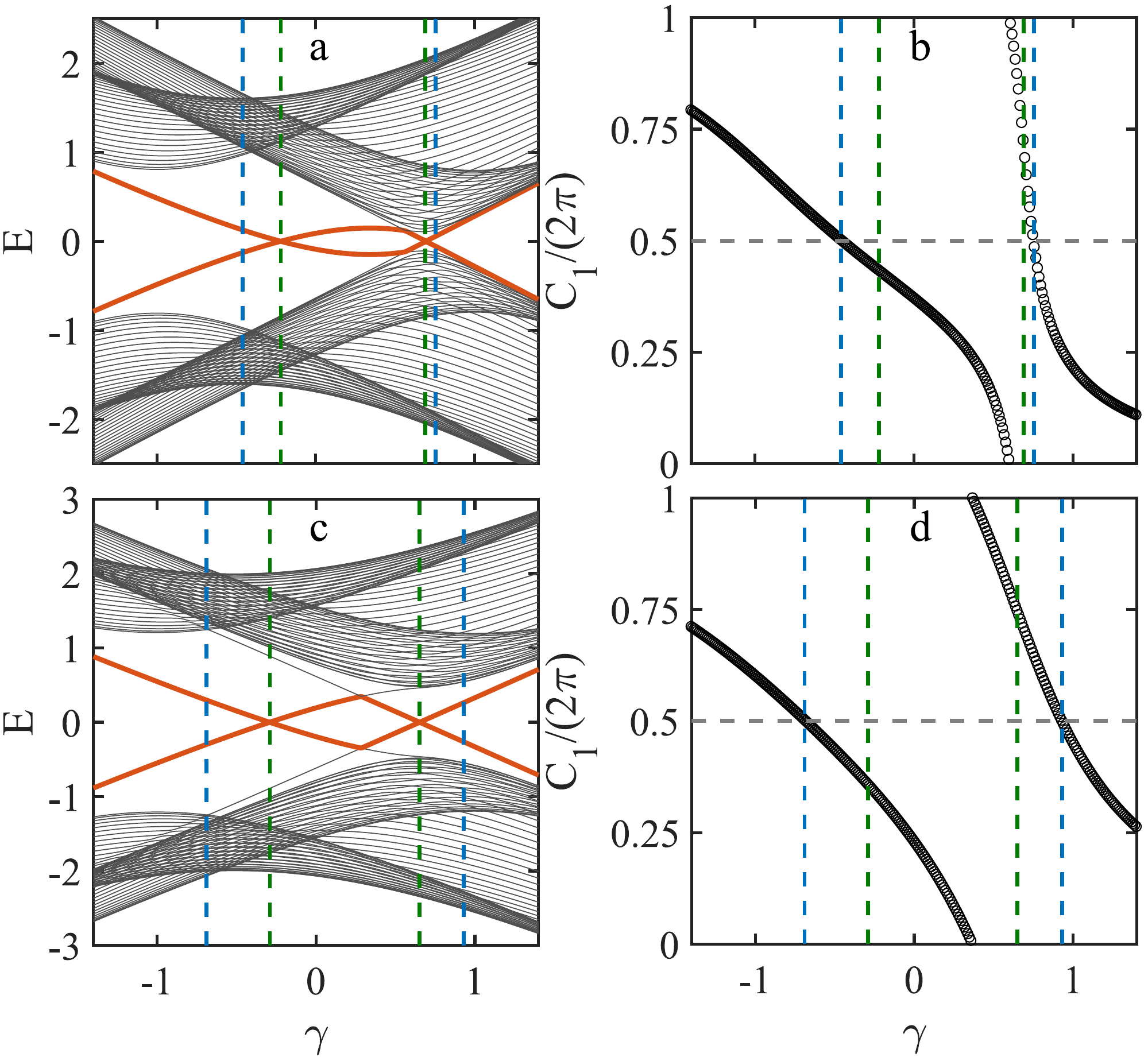}
  \caption{a, c, The energy spectra of the Hamiltonian $H_{\textrm{NNN}}$.
  b, d, The Berry phase of the occupied band for the corresponding Hamiltonian. In a and b,
  $b_2=0.8$ and in c and d, $b_2=1.2$. The green lines show the value of $\gamma$
  where zero-energy edge modes arise, and the blue lines show the value of $\gamma$ where
  the Berry phase is equal to $\pi$. These lines do not coincide, implying that
  the correspondence between $C_1=\pi$ and the existence of zero-energy edge states breaks down.
   }
\label{fig4}
\end{figure}

\subsection{A General Analysis}
As we have already discussed that the type-II QTI arises from the fact that the Wannier band and edge energy gaps
do not vanish simultaneously. In this subsection, we will demonstrate from a general perspective that the breakdown
occurs when the next-nearest-neighbor intercell hopping is appropriately included.

We now consider a generic four band Hamiltonian $H_g(k_x,k_y)$ in momentum space to describe a QTI.
The Hamiltonian is required to respect two reflection symmetries $\hat{m}_x$ and $\hat{m}_y$ to maintain the vanishing of the bulk polarization as well as the
quantization of quadrupole moments, corner charges and edge polarizations. In addition, either the particle-hole or chiral symmetry
is enforced to ensure that the corner modes have zero energy.
We note that the breakdown of the correspondence between Wannier and edge spectra was also found in a
system with neither particle-hole nor chiral symmetry~\cite{Wieder2019}.
To have the gapped Wannier bands, we further require that $\hat{m}_x$ and $\hat{m}_y$ anticommute,
and
at each high-symmetric line, e.g., $k_x=k^*$ ($k_y=k^*$) with $k^*=0,\pi$,
the eigenvalues of $\hat{m}_x$ ($\hat{m}_y$) of two occupied bands should occur in pairs as $(1,-1)$~\cite{Taylor2017Science,Bernevig2014PRB}.
We also require that these parities for two unoccupied bands also occur in pairs as $(1,-1)$.
Due to the reflection symmetry, we can write the Hamiltonian at the high-symmetric lines $k_x=k^*$
(similarly for $k_y=k^*$)
as a direct sum of two submatrices in a basis $\beta=\beta_1\cup\beta_{-1}$ consisting of eigenvectors of $\hat{m}_x$,
\begin{equation}
[H_g(k_x=k^*,k_y)]_\beta= H_1(k_y)\oplus H_2(k_y),
\end{equation}
where $H_1(k_y)$ and $H_2(k_y)$ are the matrix representation of
$H_g$ with respect to the basis $\beta_1$ and $\beta_{-1}$, respectively.
Here $\beta_1$ and $\beta_{-1}$ are the bases of the eigenspaces of $\hat{m}_x$ with the corresponding eigenvalues
being $1$ and $-1$, respectively.
In this basis, the representation of the reflection
symmetry along $x$ can be chosen as $\hat{m}_x=\sigma_z\otimes\sigma_0$.
For convenience, we use $\sigma$ instead of $\tau$ and explicitly write out the tensor product
operation $\otimes$ throughout this section.
To enforce the anti-commutation relation for $\hat{m}_x$ and $\hat{m}_y$, i.e., $\{\hat{m}_x,\hat{m}_y\}=0$,
we require $\hat{m}_y=\sigma_\mu \otimes \sigma_\lambda$ with $\mu=1,2$ and
$\lambda=0,1,2,3$. Now we have $H_2(k_y)=\sigma_\lambda H_1(-k_y)\sigma_\lambda$. Clearly,
the Berry phases of the occupied band of $H_1$ and $H_2$ take the opposite values.
These Berry phases also determine the Wannier spectra $\nu_y(k_x)$ at $k_x=k^*$.
In addition, the particle-hole or chiral symmetry forbids the existence of the terms
$\sigma_0$ and $\cos k_y\sigma_0$ in $H_1$ but allows the existence of the term $\sin k_y\sigma_0$ in it.

\begin{figure*}[t]
\includegraphics[width=6.5in]{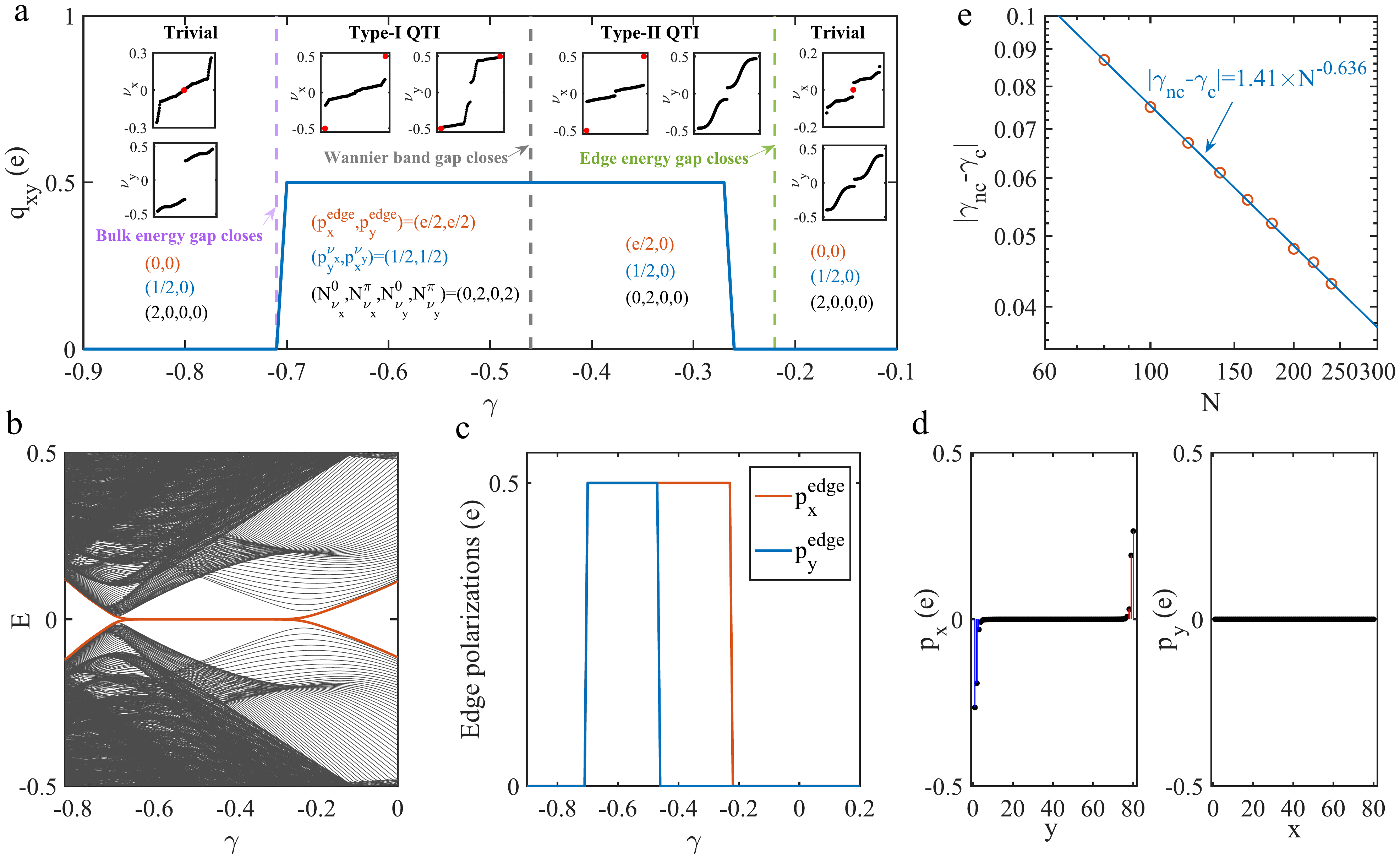}
  \caption{a, Phase diagram for the Hamiltonian $H_{II}$ with respect to
  a system parameter $\gamma$, where the quadrupole moment is plotted as a blue line.
  In the phase diagram, we observe the
  topologically trivial insulator, the type-I QTI and the type-II QTI.
  The subsets display the same quantities as in Fig.~\ref{fig2}.
  b, The energy spectrum as a function of $\gamma$ for open boundary conditions along all directions with zero-energy corner modes being highlighted by a red line.
  c, The edge polarizations calculated based on the formula (\ref{windingR}) by choosing a gauge such that $W_{\nu_x}^{\epsilon=\pi}(\gamma_0=-0.9)=W_{\nu_y}^{\epsilon=\pi}(\gamma_0=-0.9)=0$ given that the phase is topologically
  trivial when $\gamma=-0.9$.
  d, The spatial profiles of the edge polarizations in a type-II QTI with $\gamma=-0.35$.
  e, $|\gamma_{nc}-\gamma_c|$ with respect to the system size $N$ in the logarithmic scale,
  showing $|\gamma_{nc}-\gamma_c|\propto N^{-\delta}$ with $\delta=0.636$. $\gamma_{c}$ and
  $\gamma_{nc}$ denote the point where the edge energy gap closes and the phase transition point determined
  by the quandruple moment for a finite system, respectively. Here $b_2=0.8$. See Appendix E for the phase diagram for $b_2=1.2$.  }
\label{fig5}
\end{figure*}

We now discuss how the parity of an edge state localized at a boundary changes.
Suppose that at a critical point, $H_1$ has zero-energy modes under open boundary conditions but
its energy spectra under periodic boundary conditions remain gapped, implying the existence of
the states localized at two boundaries. Away from this point where the gap is opened, only one
edge state of $H_1$ is occupied, say, without loss of generality, the state localized at the top boundary.
Then, due to the reflection symmetry along $y$, the occupied edge state of $H_2$ should be localized
at the bottom boundary. Evidently, the parity of the state localized at the top (bottom) edge
is $1$ ($-1$). When the edge energy gap for $H_1$ and $H_2$ closes and then reopens,
if the position of the occupied edge state of $H_1$ changes from the top to the bottom,
then the position of the occupied edge state of $H_2$ changes from the bottom to the
top. As a result, the parity of the state localized at the top (bottom) edge changes from $1$ ($-1$)
to $-1$ ($1$), leading to the change of the edge polarization $p_x^{\textrm{edge}}$
and the quadrupole moment.

We are now in a position to ask two questions:
\begin{enumerate}
	\item \textit{If $C_1=\pi$, does $H_1$ always exhibit zero-energy edge modes in a geometry with open boundaries?}
	\item \textit{Conversely, if $H_1$ has zero-energy edge modes under open boundary conditions, does $C_1=\pi$ always hold?}
\end{enumerate}
Here $C_1$ (modulo $2\pi$) denotes the Berry phase of the occupied band of $H_1$.

If the answers to these questions are both yes, then the Wannier gap closing of $H_g$ at $\nu_y=\pm 1/2$
is always associated with the $y$-normal boundary spectra gap closing. For instance,
provided that $H_1$ has either the
particle-hole or chiral symmetry, it is a well-known fact that $H_1$ exhibits zero-energy edge modes under open
boundary conditions if $C_1=\pi$.

To show that both of the two questions are answered yes for the BBH model~\cite{Taylor2017Science}, let us write the model
in the following form
\begin{equation}
H_{\textrm{BBH}}=\sigma_0\otimes H_{\textrm{SSH}}(k_y,t_y)+H_{\textrm{SSH}}(k_x,t_x)\otimes \sigma_3,
\end{equation}
where $H_{\textrm{SSH}}(k,t)=(t+\cos k)\sigma_1+\sin k\sigma_2$ is the SSH Hamiltonian. This model respects the
reflection symmetries
$\hat{m}_x=\sigma_1 \otimes\sigma_0$ and $\hat{m}_y=\sigma_3\otimes\sigma_1$, the chiral symmetry $\Pi=\sigma_3\otimes\sigma_3$,
the particle-hole symmetry $\Xi=\sigma_3\otimes\sigma_3 \kappa$ and the time-reversal symmetry $\Theta=\kappa$.
At the high-symmetric momenta for $k_x$ (similarly for $k_y$), the Hamiltonian can be written in a basis $\beta$
consisting
of eigenvectors of $\hat{m}_x$ as
\begin{equation}
[H_{\textrm{BBH}}(k_x=k^*,k_y)]_\beta= H_1\oplus H_2.
\end{equation}
Specifically, we choose $\beta=\{|\uparrow_1\uparrow_3\rangle,|\uparrow_1\downarrow_3\rangle,|\downarrow_1\uparrow_3\rangle,|\downarrow_1\downarrow_3\rangle\}$
as the basis, where $|\uparrow_\lambda\rangle$ and $|\downarrow_\lambda\rangle$ are eigenvectors of
$\sigma_\lambda$ with $\lambda=1,2,3$.
In this basis, $H_1=H_{\textrm{SSH}}(k_y,t_y)+(t_x+\cos k^*)\sigma_3$ and $H_2=H_{\textrm{SSH}}(k_y,t_y)-(t_x+\cos k^*)\sigma_3$.
Here $H_1$ has zero-energy modes localized at boundaries in a geometry with open boundaries if and only if $C_1=\pi$ for $H_1$. This occurs only when $t_x=-\cos k^*$ so that $H_1=H_{\textrm{SSH}}$
that preserves the time-reversal symmetry, the particle-hole symmetry and the chiral symmetry. In other words, the Wannier band $\nu_y(k_x)$ and the $y$-normal edge energy spectra
close their gaps simultaneously at $k_x=\pi$ ($k_x=0$) when $t_x=1$ ($t_x=-1$) and $|t_y|<1$. In addition, if we
add a term $c \sin k_y\sigma_1\otimes\sigma_0$ with $c$ being a real number, which preserves two reflection symmetries
and the chiral symmetry, it results in a new term $c \sin k_y\sigma_0$ in $H_1$. This term still respects the particle-hole
symmetry and the $PT=\sigma_1\kappa$ symmetry so that the correspondence between the Berry phase of $\pi$ and the
existence of zero-energy edge modes persists~\cite{StructuredYong,ReviewYong}.

However, remarkably, we find that the correspondence does not necessarily hold when the next-nearest-neighbor intercell
hopping is involved. For instance, consider the following model,
\begin{eqnarray}
H_{\textrm{NNN}}(k_y)&=&\left(\gamma+\cos k_y\right)\sigma_3+\left( \sin k_y+b_2\sin 2k_y \right)\sigma_1 \nonumber \\
&&+b_2\cos 2k_y\sigma_2,
\label{HNN}
\end{eqnarray}
where $\gamma$ and $b_2$ are real parameters with $b_2$ denoting the strength of the next-nearest-neighbor
intercell hopping. When $b_2=0$, the Hamiltonian can be obtained by applying a unitary
transformation to $H_{\textrm{SSH}}$, and hence zero-energy edge modes arise if and only if the
Berry phase is equal to $\pi$. However, when $b_2\neq 0$, we find that both of these two questions can
be answered no. Fig.~\ref{fig4} illustrates that when $C_1=\pi$, the energy spectra of the boundary states
remain gapped, and when the edge energy gap vanishes, $C_1\neq \pi$.
This suggests that the correspondence between the Wannier band and edge energy
gaps may break down when the long-range hopping is involved in a QTI,
leading to the emergence of the type-II QTI.

\subsection{Simplified models for the type-II QTI}

By enforcing the reflection symmetries $\hat{m}_x=\sigma_1\otimes\sigma_3$, $\hat{m}_y=\sigma_1\otimes\sigma_1$
and the particle-hole symmetry $\Xi=\sigma_3\otimes\sigma_0\kappa$, we can construct a quadrupole topological
model based on $H_{\textrm{NNN}}$ and a transformed SSH model as
\begin{eqnarray}
H_{II}=H_L
+\sigma_2\otimes H_{\textrm{SSH}}^\prime(k_x),
\label{SimHam}
\end{eqnarray}
where
\begin{eqnarray}
H_L=&&\left(\gamma+\cos k_y\right)\sigma_1\otimes\sigma_0+\left( \sin k_y+b_2\sin 2k_y\right)\sigma_3\otimes\sigma_1 \nonumber \\
&&+b_2\cos 2k_y\sigma_3\otimes\sigma_2,
\end{eqnarray}
and $H_{\textrm{SSH}}^\prime(k_x)=(\gamma+t_x-g_0+t_x\cos k_x)\sigma_2+t_x\sin k_x \sigma_3$ can be obtained
by applying a unitary transformation to $H_{\textrm{SSH}}(k_x)$.
Here we set $t_x=1$.
The Hamiltonian~(\ref{SimHam}) has exactly the same reflection symmetries
and the particle-hole symmetry as the Hamiltonian~(\ref{Ham1}).
Yet, compared to the latter, this Hamiltonian is significantly simplified.

When $k_x=\pi$, the Hamiltonian can be written as
\begin{equation}
[H_{II}]_\beta=\left(
    \begin{array}{cc}
      H_{\textrm{NNN}}(k_y)+t_0\sigma_1 & 0_{2\times2} \\
      0_{2\times2} & \sigma_3 H_{\textrm{NNN}}(-k_y)\sigma_3-t_0\sigma_1 \\
    \end{array}
  \right)
\end{equation}
in a basis consisting of
eigenvectors of $\hat{m}_x$, i.e., $\beta=\{|\uparrow_x\uparrow_z\rangle, |\downarrow_x\downarrow_z\rangle,
|\uparrow_x\downarrow_z\rangle, |\downarrow_x\uparrow_z\rangle\}$.
Here $t_0=\gamma-g_0$.
Adding the term $t_0\sigma_1$ into $H_{\textrm{NNN}}(k_y)$ only shifts the
critical value of $\gamma$ where zero-energy edge modes appear.
For simplicity and clarity, we therefore set $g_0=\gamma_c$ so that $t_0=0$ when $\gamma=\gamma_c$ where
there exist zero-energy edge modes for $H_{\textrm{NNN}}$ under open boundary conditions. For example, when
$b_2=0.8$ ($b_2=1.2$), we set $g_0=-0.22$ ($g_0=-0.29$) as shown in Fig.~\ref{fig4}.
Now we can clearly see that
for this model while the $y$-normal edge energy gap vanishes when $\gamma=\gamma_c$, the Wannier band gap
does not close at $\nu_y=\pm 1/2$.

To see that the type-II QTI indeed emerges in this simplified model, we numerically compute the quadrupole moment,
the energy spectra under open boundary conditions along all directions and
the edge polarizations, displaying them
in Fig.~\ref{fig5}. Evidently, when $\gamma$ decreases across $\gamma_c=-0.22$,
the zero-energy corner modes appear due to the $y$-normal edge energy gap closing,
showing that the system enters into a higher-order topological insulating phase.
Meanwhile, the quadrupole moment suddenly jumps to $0.5e$ at $\gamma_{nc}=-0.26$ for a $N\times N$
system with $N=240$ denoting the number of unit cells along
$x$ (or $y$). The discrepancy from
the edge energy gap closing point $\gamma_c=-0.22$ is attributed to the finite size effects
while calculating the quadrupole moment in real space.
Specifically, in Fig.~\ref{fig5}(e), we display the finite size scaling between $|\gamma_{nc}-\gamma_c|$ and
the system size $N$, where $\gamma_c$ and $\gamma_{nc}$ denote the point where the edge energy gap closes and the phase transition point determined by the quadrupole moment for a finite system, respectively.
The relation exhibits a power law decaying, i.e., $|\gamma_{nc}-\gamma_c|\propto N^{-\delta}$ with
$\delta=0.636$, implying that $\gamma_{nc}=\gamma_c$ in the thermodynamic limit.
This tells us that the higher-order topological insulator is a
quadrupole topological insulating phase. Such an edge energy gap closing also leads to the emergence of the edge polarizations
$p_x^{\textrm{edge}}$ at the boundaries vertical to $y$ [see the inset in Fig.~\ref{fig5}(a) and Fig.~\ref{fig5}(d)]. But the edge polarizations $p_y^{\textrm{edge}}$ at the $x$-normal boundaries remain zero
because the Wannier bands $\nu_y$ remain gapped at $\nu_y=\pm 1/2$. As a consequence, the type-II quadrupole
insulating phase arises. Compared to the type-II phase introduced in Sec.~\ref{sec2},
it is normal in the sense that the Wannier spectra $\nu_x$ have edge states only at $\nu_x=\pm 1/2$.
We also calculate the edge polarizations based on the formula (\ref{windingR})
by choosing a gauge such that $W_{\nu_x}^{\epsilon=\pi}(\gamma_0=-0.8)=W_{\nu_y}^{\epsilon=\pi}(\gamma_0=-0.8)=0$, showing that
the winding number can correctly characterize the edge polarization change due to the Wannier band closing
or the bulk energy gap closing. For the edge polarization $p_x^{\textrm{edge}}$, it changes at $\gamma_c=-0.22$
because of the edge energy gap closing associated with a one time change of a parity (eigenvalue of $\hat{m}_x$) at a $y$-normal boundary and the quadrupole moment.

\begin{figure*}[t]
  \includegraphics[width=7in]{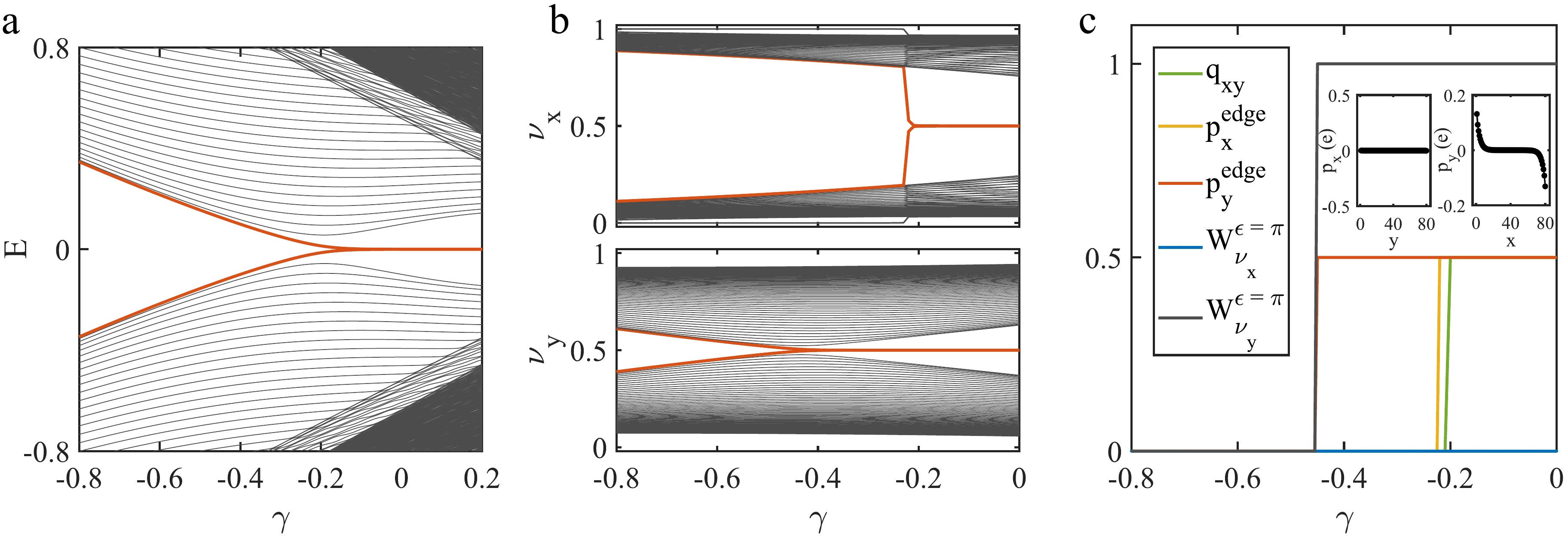}
  \caption{a, The energy spectrum versus $\gamma$ for the Hamiltonian $H_{IV}$ with open boundaries in all directions.
  b, The Wannier band $\nu_x$ and $\nu_y$ versus $\gamma$ under open boundary conditions along $y$ and $x$, respectively.
  The red lines depict the edge states in the Wannier spectrum.
  c, The quadrupole moment $q_{xy}$,
  the edge polarizations $p_{\mu}^{\textrm{edge}}$ ($\mu=x,y$) computed using the formula (\ref{windingR}) and the winding
  numbers $W_{\nu_\mu}^{\epsilon=\pi}$ ($\mu=x,y$) computed using the formula (\ref{WindFormula}) versus $\gamma$.
  When calculating the edge polarizations and the winding number, we choose a gauge
  such that $W_{\nu_x}^{\epsilon=\pi}(\gamma_0=-0.8)=W_{\nu_y}^{\epsilon=\pi}(\gamma_0=-0.8)=0$
  because the phase is topologically trivial when $\gamma=-0.8$. The figure demonstrates the existence of a new
  topological phase with nonzero $p_y^{\textrm{edge}}$ but without quadrupole moments and zero-energy corner modes
  when $-0.45<\gamma<-0.22$ (see the inset for the spatial distribution of the edge polarizations). Note that $p_x^{\textrm{edge}}$ jumps to $0.5$ at $\gamma=-0.22$
  due to the $y$-normal edge gap closing.
  $q_{xy}$ and $p_x^{\textrm{edge}}$ are hidden behind the blue line when $\gamma<-0.22$. The small discrepancy between
  the quadrupole moment transition point ($N=80$ for evaluation of the quadrupole moment) and the edge gap closing point is due to the finite size effects. Here $b_2=1.2$.
  }
\label{fig6}
\end{figure*}

Similarly, we can construct another model
\begin{equation}
H_{III}=H_L-\sigma_3\otimes H_{\textrm{SSH}}^\prime(k_x),
\label{HamSim2}
\end{equation}
which respects the reflection symmetries $\hat{m}_x$ and $\hat{m}_y$, the chiral symmetry $\Pi_2$, i.e., $\Pi_2 H_{III} \Pi_2^{-1}=-H_{III}$ with $\Pi_2=\sigma_2\otimes\sigma_0$
and the particle-hole symmetry $\Xi$.
We also find the type-II quadrupole topological insulating phase in this model (see
the Appendix E
for the phase diagram). If we add a term $\sin k_x \sigma_1\otimes\sigma_1$ which breaks the particle-hole symmetry
but preserves the chiral symmetry, the type-II QTI also arises (see the Appendix E for the phase diagram).
The above results demonstrate that the type-II QTI generically exists in systems with reflection symmetries
as well as the particle-hole or chiral symmetry in the presence of long-range hopping.

\subsection{A new topological phase with nonzero quantized edge polarizations but without zero-energy corner modes}

\begin{figure*}[t]
\includegraphics[width=4.8in]{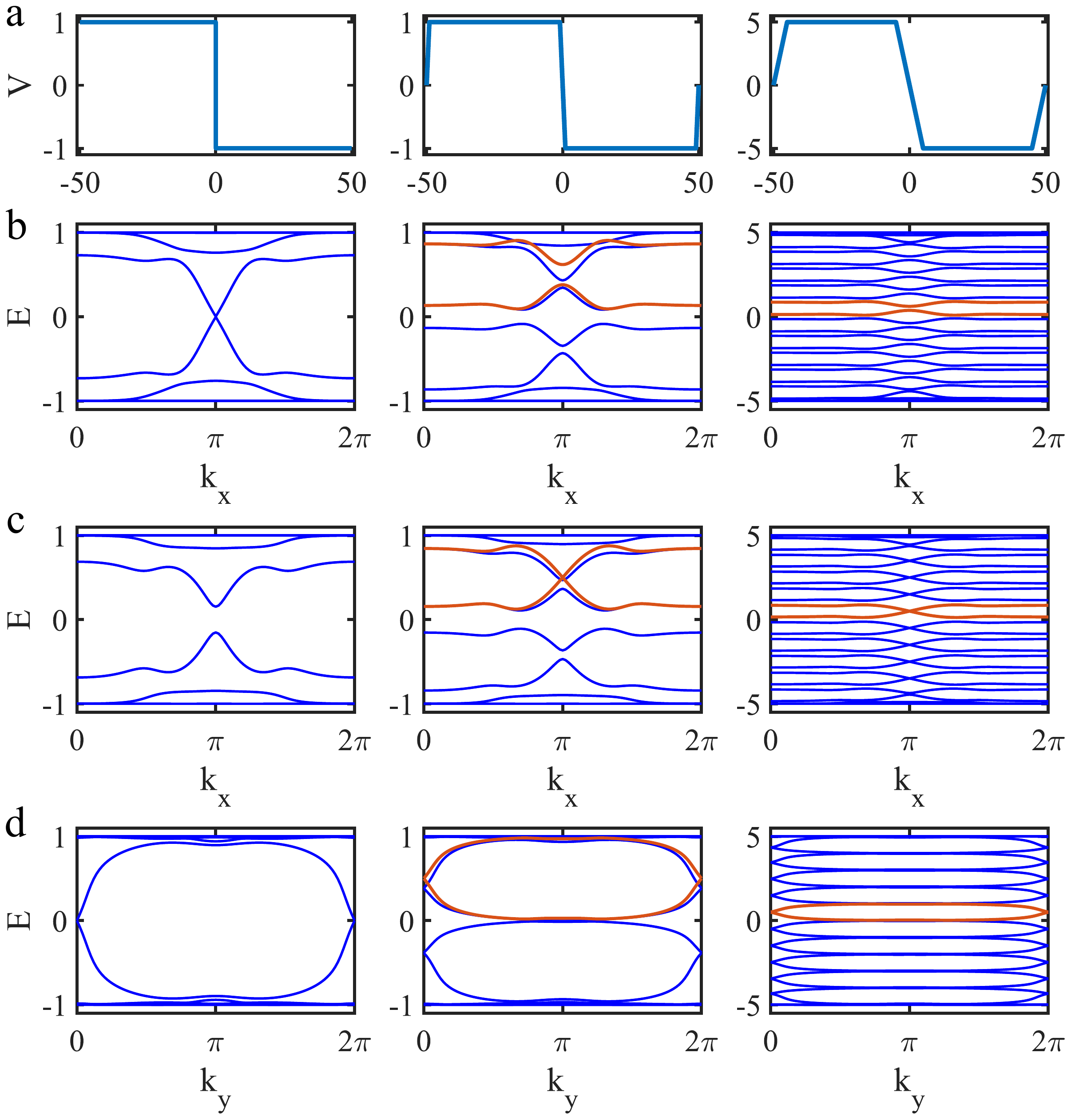}
  \caption{The connection between the edge energy spectrum and Wannier spectrum.
  a, Visualization of the edge potential for $V_0$ in the left column as well as $V_L$ with $M=1$ and $M=5$
  in the middle and right columns, respectively.
  b, The energy bands of $H_e^0$ ($H_e^L$) with $V_0(y)$ ($V_L(y)$) versus $k_x$ in the left column (in the middle and right columns) for $\gamma=0.34$. The edge energy spectrum encoded in $H_e^0$ has a vanishing gap, while the Wannier spectrum
  encoded in $H_e^L$ for a moderate or large $M$ (e.g., $M=5$) has a finite gap.
  c, The same spectrum as b for $\gamma=0.12$. The edge energy spectrum has a finite gap, which gradually vanishes
  as $M$ is increased for $V_L(y)$, leading to a gapless Wannier spectrum.
  d, The energy spectrum of $H_e^0$ ($H_e^L$) with $V_0(x)$ ($V_L(x)$) versus $k_y$ in the left column (in the middle and
  right columns). In this scenario, both the edge energy spectrum and Wannier spectrum are gapless.
  In the middle and right columns for b-d, the Wannier spectra are plotted as the red lines in comparison with the energy
  spectrum of $H_e^L$, showing that the former agrees very well with the latter for $M=5$. For b-d, the middle and right columns correspond to the results for $M=1$ and $M=5$, respectively.
   }
\label{fig7}
\end{figure*}

We have demonstrated that the Wannier gap closing can drive the topological phase transition from the
type-II QTI to the type-I and vice versa, suggesting that the topological property change of the Wannier band
can lead to new topological phases. In contrast to the QTI, one may wonder whether
the Wannier band gap closing can result in a new topological phase with nonzero quantized edge polarizations
but without quadrupole moments and zero-energy corner modes.
In the following, we will show the existence of this phase by considering the Hamiltonian in momentum space
\begin{eqnarray}
&&H_{IV}=(\gamma+\cos k_y)\sigma_1\otimes\sigma_0+(\sin k_y+b_2\sin 2k_y)\sigma_2\otimes\sigma_1 \nonumber \\
&&+\sigma_2\otimes[(1+b_2\cos 2k_y+t_x\cos k_x)\sigma_2+t_x\sin k_x\sigma_3],
\label{H4}
\end{eqnarray}
which respects the reflection symmetries $\hat{m}_x$ and $\hat{m}_y$,
the time-reversal symmetry $\Theta$, the particle-hole symmetry $\Xi$ and the
chiral symmetry $\Pi$. Here we set $t_x=1$.

Figure~\ref{fig6} illustrates that the Wannier spectra $\nu_y$ close their gap at $\nu_y=\pm 1/2$ when $\gamma=-0.45$
and the Wannier spectra $\nu_x$ remain gapped.
This gap closure results in edge states in the Wannier bands
at $\nu_y=\pm 1/2$ under open boundary conditions, giving rise to the quantized edge polarization $p_y^{\textrm{edge}}=\pm e/2$
[their spatial distributions are shown in the inset of Fig.~\ref{fig6}(c)].
However, the energy spectra of the system with open boundaries in all directions remain gapped
until $\gamma=-0.22$. This means that a new topological phase arises when $-0.45<\gamma<-0.22$.
In this phase, despite the vanishing of the quadrupole moment and the absence of the zero-energy corner modes,
nonzero quantized edge polarizations exist. As $\gamma$ increases across $-0.22$, the quadrupole moment and
the corner modes appear, leading to the type-I QTI.
In addition, we
display the edge polarizations in Fig.~\ref{fig6}(c) evaluated based on the formula (\ref{windingR}) by
choosing a gauge such that
$W_{\nu_x}^{\epsilon=\pi}(\gamma_0=-0.8)=W_{\nu_y}^{\epsilon=\pi}(\gamma_0=-0.8)=0$,
implying that the winding number correctly characterizes the change of the edge polarizations
across the topological phase transition at $\gamma=-0.45$.

Interestingly, when we include a term $\sin k_x \sigma_1\otimes\sigma_1$ that breaks the time-reversal symmetry and the particle-hole symmetry but preserves the chiral symmetry, or a term $\sin k_x \sigma_3\otimes\sigma_3$ that breaks the time-reversal symmetry and the chiral symmetry but preserves the particle-hole symmetry, we can still observe the new topological phase transition
(see Appendix E for details).

The discovery of the type-II QTI and the new topological phase discussed above suggests that the gap closing of
the Wannier spectra can induce topological phase transitions without involving any energy gap closing.

\subsection{An alternative approach to show the inequivalence}

Alternatively, we may consider the problem based on the method introduced in Ref.~\cite{Klich2011PRL}. There,
it has been shown that the edge spectrum can be continuously deformed into the Wannier band, implying that
they share the same spectral flow, e.g., a chiral edge mode in a quantum Hall insulator corresponds to
a winding Wannier band. Based on the method, Ref.~\cite{Khalaf2019arXiv} generalizes the correspondence between
the Wannier and edge spectra in the quadrupole topological insulating phase by showing that the Wannier gap
and boundary spectra gap have to close simultaneously. This also indicates that the type-II QTI cannot occur.
However, Ref.~\cite{Khalaf2019arXiv} only studies a specific model with the nearest-neighbor intercell hopping.

In the following, we will show that these two gaps do not necessarily vanish at the same time in our model
Hamiltonian~(\ref{Ham1}) by following the method in Refs.~\cite{Klich2011PRL,Khalaf2019arXiv}.
Let us consider a gapped Hamiltonian $H({\bf k})$ in momentum space. We define the projection operator to the occupied bands
as $\tilde{P}({\bf k})=\sum_{n=1}^{N_{\textrm{occ}}} |u_{\bf k}^n\rangle \langle u_{\bf k}^n|$ with $|u_{\bf k}^n\rangle$ being
an occupied eigenstate of $H({\bf k})$. The representation of this operator in the real space is obtained by
Fourier transformation in the direction perpendicular to an edge,
\begin{equation}
P_{r\alpha,s\beta}({\bf k_{\parallel}})=\int_{BZ} \frac{dk_{\perp}}{2\pi} e^{ik_{\perp}(r-s)}\tilde{P}_{\alpha,\beta}({\bf k})
\end{equation}
where $r,s$ are coordinates of lattices in the edge-normal direction and $\alpha,\beta$ are orbital indices. ${\bf k_{\parallel}}$ and $k_{\perp}$ denote
the momentum parallel and perpendicular to the edge, respectively.

Following Refs.~\cite{Klich2011PRL,Khalaf2019arXiv}, we construct a Hamiltonian from the projection operator,
\begin{equation}
H_e^0=P V_0(x) P+(I-P),
\end{equation}
where
\begin{equation}
V_0(x)=\left\{
\begin{aligned}
1   \qquad \text{for} \ x < 0 \\
-1  \qquad \text{for} \ x \geq 0
\end{aligned}
\right.
\end{equation}
For a tight-binding model, $x$ is the coordinate of discrete lattice sites along the direction perpendicular to the edge
and takes the value of integer numbers from $-N_x/2$ to $(N_x/2-1)$ with $N_x$ being the size along the $x$ direction. Here, $V_0(x)$ introduces two boundaries at $x=0$ and $x=N_x/2$ so that for $-N_x/2 \leq x < 0$, the system is topologically trivial and, otherwise, topologically equivalent
to $H({\bf k})$.
Indeed, we have found that the eigenvalues $E_{V_0}(k_y)$ of $H_{e}^{0}(k_y)$ can be adiabatically mapped to the energy spectrum of the original Hamiltonian under open boundary conditions along $x$.

To see the connection between the edge energy spectrum and the Wannier band, we impose a linear edge so that the Hamiltonian
reads
\begin{equation}
H_e^L=P V_L(x) P + M(I-P),
\end{equation}
where the deformed linear edge potential (see Fig.~\ref{fig7}) is given by
\begin{equation}
V_L(x)=\left\{
\begin{aligned}
&x+N_x/2   &&\text{for} \ -N_x/2 \leq x < -N_x/2+M \\
&M         &&\text{for} \ -N_x/2+M \leq x < -M+1 \\
&-x        &&\text{for} \ -M+1 \leq x < M \\
&-M        &&\text{for} \ M \leq x < N_x/2-M+1 \\
&x-N_x/2   &&\text{for} \ N_x/2-M+1 \leq x < N_x/2
\end{aligned}
\right.
\end{equation}
Here, $M$ is used to control the size of the linear potential region.
When $M=1$, $V_L(x)$ is only slightly deformed from $V_0(x)$. Ref.~\cite{Klich2011PRL} argues that this slight change
should not significantly change the energy spectrum and thus
as one continuously deform the boundary by increasing $M$ for $V_L(x)$, the energy spectrum should
be smoothly deformed into the eigenvalues of $PxP$,
the restriction of which to the interval $[0,1]$ gives exactly the Wannier spectrum.
Ref.~\cite{Khalaf2019arXiv} generalizes this argument to the higher-order topological case by
assuming that this slight change should not open an energy gap if the boundary states using $V_0(x)$ are gapless,
and
concludes that the edge energy spectrum and the Wannier spectrum should close their gaps simultaneously.

However, we find that this generalized argument is not always true. For example, in our model, when $\gamma=0.34$,
the energy spectrum is gapless under the potential $V_0(y)$, which is consistent with our results under open boundary condition along $y$. But, as we slightly deform the boundaries, e.g., taking $M=1$ for $V_L(y)$,
we find that the energy gap opens, as shown in Fig.~\ref{fig7}(b). When we further
increase $M$ to $5$, the gap remains open and the energy spectrum is exactly the same as the Wannier spectrum.

We also show that for $\gamma=0.12$, the energy gap vanishes for $M=5$ while there is a nonzero energy gap
for the energy spectrum under the boundary geometry $V_0(y)$ [see Fig.~\ref{fig7}(c)]. We indeed also find the case where the
energy gap always remains gapless. For example,
for $\gamma=1.03$, the gap remains zero as we continuously deform the boundaries, as shown in Fig.~\ref{fig7}(d).

Therefore, we conclude that the edge energy bands and Wannier bands are not guaranteed to close their gaps
simultaneously and thus can be topologically inequivalent.

\section{Pumping phenomena and novel three-dimensional higher-order topological insulators}
\label{sec5}

\begin{figure}[t]
  \includegraphics[width=3.3in]{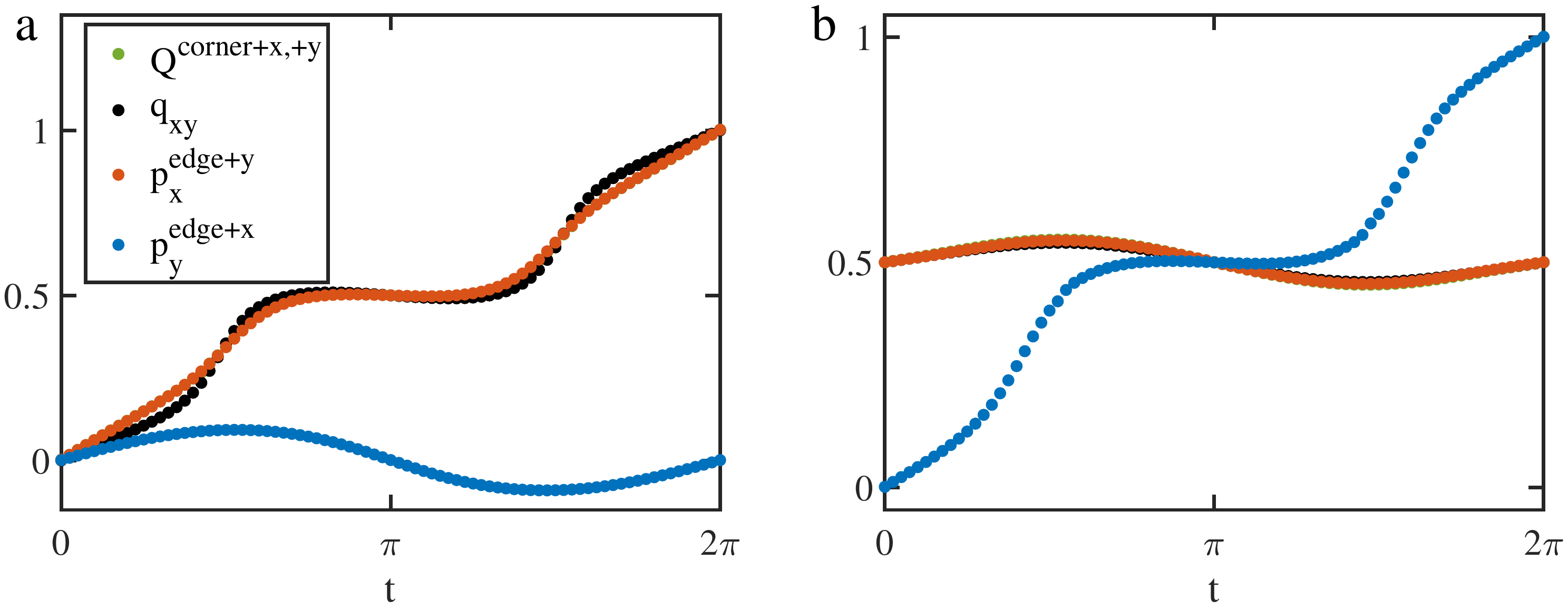}
  \caption{The transport of the quadrupole moment $q_{xy}$,
   corner charge $Q^{\mathrm{corner~+x,+y}}$ and edge polarization $p_{x,y}^{\mathrm{edge~}\alpha }$ over a full cycle.
   a, The cycle refers to the evolution of a system
   from a topologically trivial phase to the type-II AQTI and finally return.
   Note that the green line is hidden behind the red one.
   b, The cycle refers to the evolution of a system
   from the type-II phase to the type-I AQTI and then return.
   Note that the green and black lines are hidden behind the red one.
   The units of all the quantities are $e$.
   }
\label{fig8}
\end{figure}

We now discuss the pumping phenomena as a system parameter is slowly tuned. We expect
the existence of anisotropic edge currents during a pumping process. To induce the change of the
edge polarization, we need to break the reflection symmetry by
adding a $\delta \tau_z\sigma_0$ term so that the edge polarization is not locked to be quantized.
However, to maintain the vanishing of the bulk polarization, we still preserve the inversion symmetry.
We also maintain the bulk and edge energy gaps during the entire cycle for adiabaticity.

Specifically, we consider the pumping process across the topologically trivial phase and type-II AQTI in
Fig.~\ref{fig2}.
To achieve the pumping, we choose $\delta=0.1\sin(t)$ and $\gamma=0.35+0.1\cos(t)$ in the Hamiltonian~(\ref{Ham1}). At $t=0,2\pi$, the system is
in the topologically trivial phase without any edge polarization, quadrupole moments and corner charges, while
at $t=\pi$, it is in the type-II AQTI with $q_{xy}=|Q^{\mathrm{corner}}|=|p_x^{\mathrm{edge}}|=1/2$ and
$p_y^{\mathrm{edge}}=0$. As time progresses with changes of $\delta$ and $\gamma$, the system evolves from
the topologically trivial phase to the type-II quadrupole insulating phase and then return to the original trivial phase.
At each time, we evaluate the edge polarization, corner charge and quadrupole moment.
We find that during an entire cycle, the edge polarization at the top boundary increases by one
and at the bottom boundary it decreases by one, as shown in Fig.~\ref{fig8}(a). This also happens for corner charges
and the quadrupole moment.
However, the left and right edges do not exhibit a net transport for the polarization.

\begin{figure*}[t]
\includegraphics[width=\textwidth]{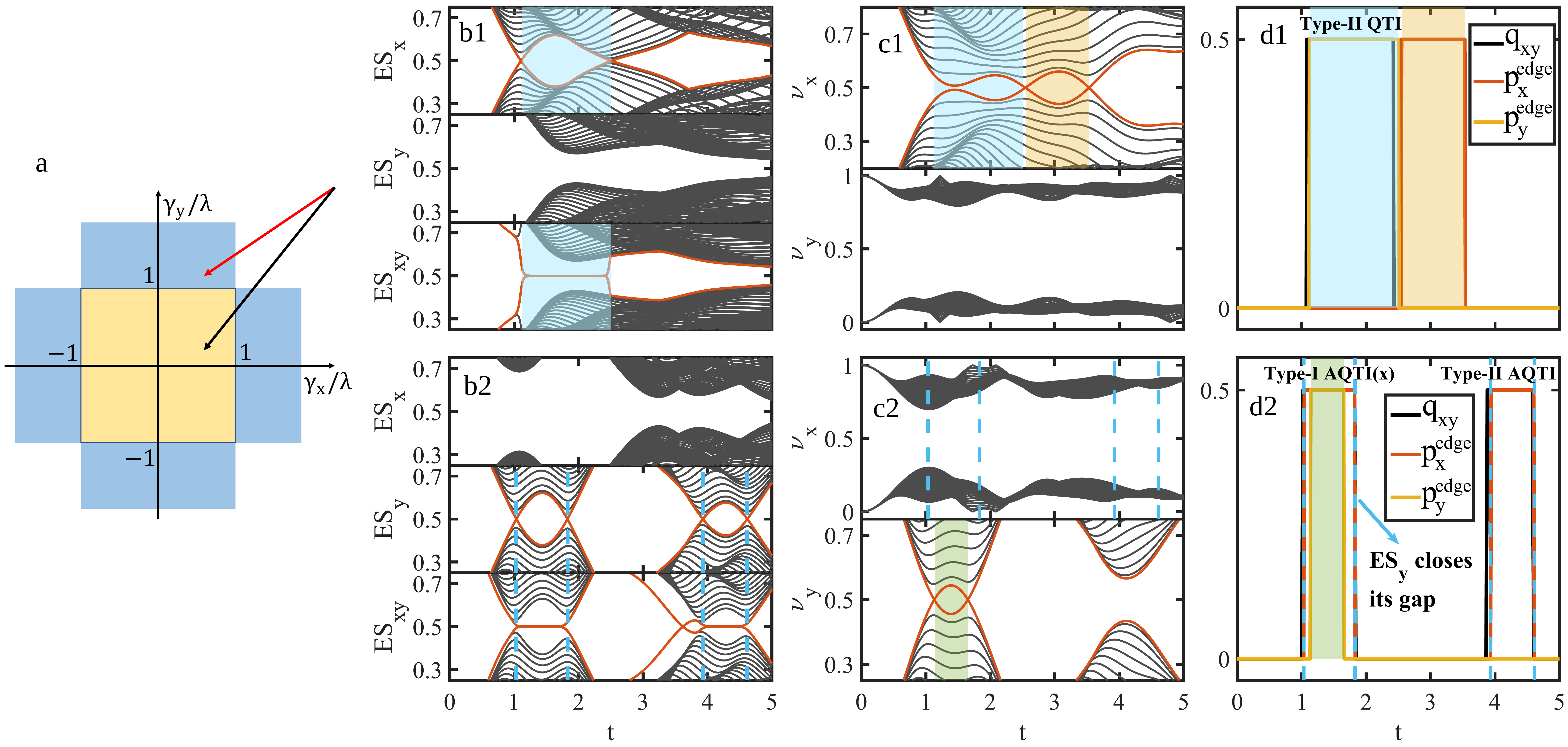}
  \caption{a, Phase diagram of the type-I Hamiltonian, where the yellow region
  corresponds to the type-I QTI and other regions to topologically trivial insulating phases.
  Note that while the Berry phase of the Wannier bands
along one direction in the blue regions is nonzero,
there are no edge polarizations since the Wannier bands close their gaps at either $\nu_x=0$ or $\nu_y=0$.
The quench dynamics are performed by suddenly tuning system parameters from $\gamma_x=\gamma_y,\lambda=0$ to $\gamma_x=0.6, \gamma_y=1.2,\lambda=1$
  and to $\gamma_x=0.6,\gamma_y=0.2,\lambda=1$, respectively (see the red and black arrows). b1, Entanglement spectra,
  c1, Wannier spectra and d1, quadrupole moments and edge polarizations as time evolves, for the quench
  dynamics denoted by the red arrow. b2-d2 display the same physical quantities as b1-d1, but for the quench dynamics denoted by the black arrow. In b1 and b2, $\textrm{ES}_{x}$ ($\textrm{ES}_y$) refers to the entanglement spectrum in the left (bottom) subsystem and
  $\textrm{ES}_{xy}$ the nested entanglement spectrum (see Appendix G). The edge polarizations in d1 and d2 are calculated based on the formula~(\ref{windingR}) by choosing a gauge such that $W_{\nu_x}^{\epsilon=\pi}(t=0)=W_{\nu_y}^{\epsilon=\pi}(t=0)=0$ because the initial states are topologically trivial. In b1-d1, the light blue region displays the
  type-II QTI (specifically, in this region, when $t<1.22$, it corresponds to the type-II AQTI, and when $t>1.22$,
  it corresponds to the type-II normal QTI) and the light red region displays the new topological phase with only quantized $p_x^{\textrm{edge}}$. In b2-d2, the light green region shows the type-I AQTI(x) and the dashed blue lines label the time at which the gap of $\textrm{ES}_y$ vanishes. The region between the two dashed blue lines around $t=4$
  corresponds to the type-II AQTI phase. In d1, $q_{xy}=0$ when $t<1.1$ or $t>2.43$, and in d2, $q_{xy}=0$ when $t<1.04$ or $1.84<t<3.94$ or $t>4.6$. In d1, the slight discrepancy between
  $q_{xy}$ and $p_y^{\textrm{edge}}$ around $t=2.5$ is caused by finite size effects while calculating $q_{xy}$ in real space.
  }
\label{fig9}
\end{figure*}

We also consider the pumping process across the type-I AQTI and type-II AQTI described by $\delta=0.1\sin(t)$ and $\gamma=0.1+0.1\cos(t)$. At $t=0,2 \pi$, the system is in the type-II AQTI phase
while at $t=\pi$, the system in the type-I AQTI phase. As time evolves over a full cycle,
it turns out that only the left and right boundaries show a net change of the polarization
but not for the other
quantities such as the edge polarization at the top and bottom boundaries, corner charges and quadrupole moments,
as shown in Fig.~\ref{fig8}(b).
Both of the pumping phenomena contrast with previous works
where all boundaries, corner charges and quadrupole moments exhibit a net change during a cycle~\cite{Taylor2017Science,Taylor2017PRB,Wheeler2018arXiv,Cho2018arXiv}.
These novel pumping phenomena indicates the peculiar features of the type-II AQTI.

If we regard the adiabatic parameter $t$ as momentum $k_z$ in the third direction,
we obtain two novel three-dimensional higher-order topological insulators characterized
by a set of topological invariants consisting of the winding of the quadrupole moment and
edge polarizations along two directions: $(N_{q_{xy}},N_{p_x^{\mathrm{edge}}},N_{p_y^{\mathrm{edge}}})$, where
$N_{O}=\int_0^{2\pi} dk_z \frac{\partial O}{\partial k_z}$ with
$O=q_{xy},p_{x,y}^{\mathrm{edge}}$. The new insulating phases correspond to
$(N_{q_{xy}},N_{p_x^{\mathrm{edge}}},N_{p_y^{\mathrm{edge}}})=(1,1,0)$ and $(0,0,1)$, respectively, which are
fundamentally different from the previously found insulator with $(1,1,1)$.
Although the phase with the winding number being $(0,0,1)$ has a winding for the edge polarization, it does not lead
to the chiral hinge modes beyond the conventional wisdom that the presence of the winding of the polarization
corresponds to a Chern insulator with chiral edge modes. In fact, in our case, it is associated with the presence
of chiral edge modes in the Wannier bands, as presented in Appendix F.

\section{Quench dynamics}
\label{sec6}
Since these new topological phase transitions are driven by the Wannier gap closure,
they may arise from quench dynamics through unitary time evolution.
In this section, we will study the dynamics of states as the Hamiltonian is suddenly
changed from one phase to another.
Specifically, we consider the type-I quadrupole Hamiltonian~\cite{Taylor2017Science,Taylor2017PRB}
\begin{eqnarray}
H_{\textrm{BBH0}}({\bf k})=&&(\gamma_x+\lambda\cos k_x)\Gamma_4+\lambda\sin k_x\Gamma_3 \nonumber \\
&&+(\gamma_y+\lambda\cos k_y)\Gamma_2+\lambda\sin k_y\Gamma_1,
\label{typeImodel}
\end{eqnarray}
where $\Gamma_j=-\tau_2\sigma_j$ ($j=1,2,3$) and $\Gamma_4=\tau_1\sigma_0$.
This Hamiltonian respects the reflection symmetries $\hat{m}_x$ and $\hat{m}_y$,
the time-reversal symmetry $\Theta$, the particle-hole symmetry $\Xi$ and the chiral symmetry $\Pi$.
The phase diagram is shown in Fig.~\ref{fig9} with respect to $\gamma_x/\lambda$ and $\gamma_y/\lambda$ (see also Ref.~\cite{Taylor2017PRB}).
We choose the ground state $|\psi_i\rangle$ of $H_i/\gamma_x=\tau_1\sigma_0-\tau_2\sigma_2$ (i.e., $\gamma_x=\gamma_y$ and $\lambda=0$)
as the initial state and then suddenly tune $\gamma_x$, $\gamma_y$ and $\lambda$ to the values
as shown in Fig.~\ref{fig9}(a) so that the Hamiltonian changes to $H_f({\bf k})$.
The state then evolves under the final Hamiltonian,
i.e., $|\psi_{\bf k}(t)\rangle=e^{-iH_f({\bf k}) t}|\psi_i\rangle$.
Since the evolving state $|\psi_{\bf k}(t)\rangle$ is an eigenstate of a
parent Hamiltonian defined as $H_p({\bf k})=e^{-iH_f({\bf k}) t}H_i({\bf k})e^{iH_f({\bf k}) t}$, the topological
properties of the evolving states are determined by the parent Hamiltonian.
During time evolution, one can easily find that the parent Hamiltonian still preserves
the reflection symmetries, i.e., $\hat{m}_\mu H_p \hat{m}_\mu^\dagger=H_p(k_\mu\rightarrow -k_\mu)$ with $\mu=x,y$,
and the particle-hole symmetry, i.e., $\Xi H_p \Xi^{-1}=-H_p(-{\bf k})$, but usually breaks the time-reversal symmetry
and the chiral symmetry. Without loss of generality, we
study two scenarios: one corresponds
to the final Hamiltonian in the topologically trivial region and the other in the quadrupole insulating
region.

Figure~\ref{fig9}(b1-d1) illustrate the entanglement spectrum, Wannier spectrum and edge polarizations
as a function of time, after the Hamiltonian is quenched into a topologically trivial phase.
At $t=1.1$, the gap of the entanglement spectrum $\textrm{ES}_x$ vanishes, revealing the vanishing of the energy gap
for the parent Hamiltonian under open boundary conditions along $x$~\cite{Fidkowski2010PRL}. This gap closure leads to
the appearance of the quantized quadrupole moment ($q_{xy}=e/2$) and edge polarizations along $y$ ($p_y^{\textrm{edge}}=\pm e/2$).
Here, the edge polarizations are calculated by the
formula~(\ref{windingR}), where $\Delta N_{q\mu}$ ($\mu=x,y$) are evaluated by the number of times
that the gap of $\textrm{ES}_{\bar{\mu}}$ ($\bar{\mu}=y,x$ if $\mu=x,y$) closes.
Here we choose a gauge such that $W_{\nu_x}^{\epsilon=\pi}(t=0)=W_{\nu_y}^{\epsilon=\pi}(t=0)=0$ for calculation in the sense
that the initial states are topologically trivial.
Since there is neither gap closure for $\textrm{ES}_y$ nor Wannier gap
closure for $\nu_x$,
$p_x^{\textrm{edge}}$ remains zero. In addition, the nested entanglement
spectra $\textrm{ES}_{xy}$ exhibit zero-energy modes (see the blue region) (the nested entanglement spectrum
$\textrm{ES}_{xy}=0.5$ corresponds to the entanglement zero mode, see Appendix G), reflecting the existence of fractional corner charges
($Q^{\textrm{corner }}=\pm e/2$)
for the parent Hamiltonian in a geometry with open boundary conditions. This shows that
the system enters into a type-II quadrupole topological insulating region with the basic relation
$(p_y^{\textrm{edge }}+p_x^{\textrm{edge }}-q_{xy}-Q^{\textrm{corner }})\text{mod}(1)=0$ being violated.
At $t=2.5$, $\textrm{ES}_x$ experiences a gap closure, leading to a topologically trivial phase with zero quadrupole moment ($q_{xy}=0$) and edge polarizations
($p_y^{\textrm{edge}}=p_x^{\textrm{edge}}=0$).

Remarkably, shortly afterwards, the gap of the Wannier bands $\nu_x$ vanishes at $t=2.55$ and $\nu_x=0.5$, resulting in nonzero
quantized edge polarizations $p_x^{\textrm{edge}}=e/2$. However, in this phase,
the entanglement spectra do not exhibit any gap closure, reflecting the absence of the edge
energy gap closure for the parent Hamiltonian under open boundary conditions. This accounts for
the absence of the quadrupole moment and zero-energy modes in the nested entanglement spectrum.
When the gap of these Wannier bands $\nu_x$ closes again at $t=3.54$,
the edge polarization $p_x^{\textrm{edge}}$ vanishes so that the topological phase becomes trivial.
The appearance and disappearance of this new topological phase are caused only by
the gap closure of the Wannier bands at $\nu=0.5$ (in other words, topological properties
of the Wannier bands change due to the Wannier gap closure).
This shows that the quench dynamics can produce new topological phases, although the coherent dynamics
do not involve any energy gap closure.

Figure~\ref{fig9}(b2-d2) present the results for the final Hamiltonian in the type-I quadrupole insulating phase.
We find that the states
evolve into the type-II AQTI phase at $t=1.04$ with quantized quadrupole moments ($q_{xy}=e/2$) and quantized edge polarizations $p_x^{\textrm{edge}}=\pm e/2$ but $p_y^{\textrm{edge}}=0$ due to closure of the gap of $\textrm{ES}_y$. The nested entanglement spectrum has zero-energy modes in the region, reflecting the existence of fractional corner charges ($Q^{\textrm{corner}}=\pm e/2$) for
the parent Hamiltonian under open boundary conditions.
At $t=1.14$, the gap of the Wannier bands $\nu_y$ vanishes, yielding quantized edge polarizations
$p_y^{\textrm{edge}}=\pm e/2$, which signals the transition into the type-I AQTI
where the basic relation
$(p_y^{\textrm{edge }}+p_x^{\textrm{edge }}-q_{xy}-Q^{\textrm{corner }})\text{mod}(1)=0$ is satisfied.
This phase remains until the Wannier bands $\nu_y$ close the gap at $t=1.66$, followed by the reemergence of the
type-II QTI. We also observe that the type-II AQTI reappears at $t=3.94$ as a result of the vanishing
of the gap of $\textrm{ES}_y$.

\begin{figure}[t]
  \includegraphics[width=3.3in]{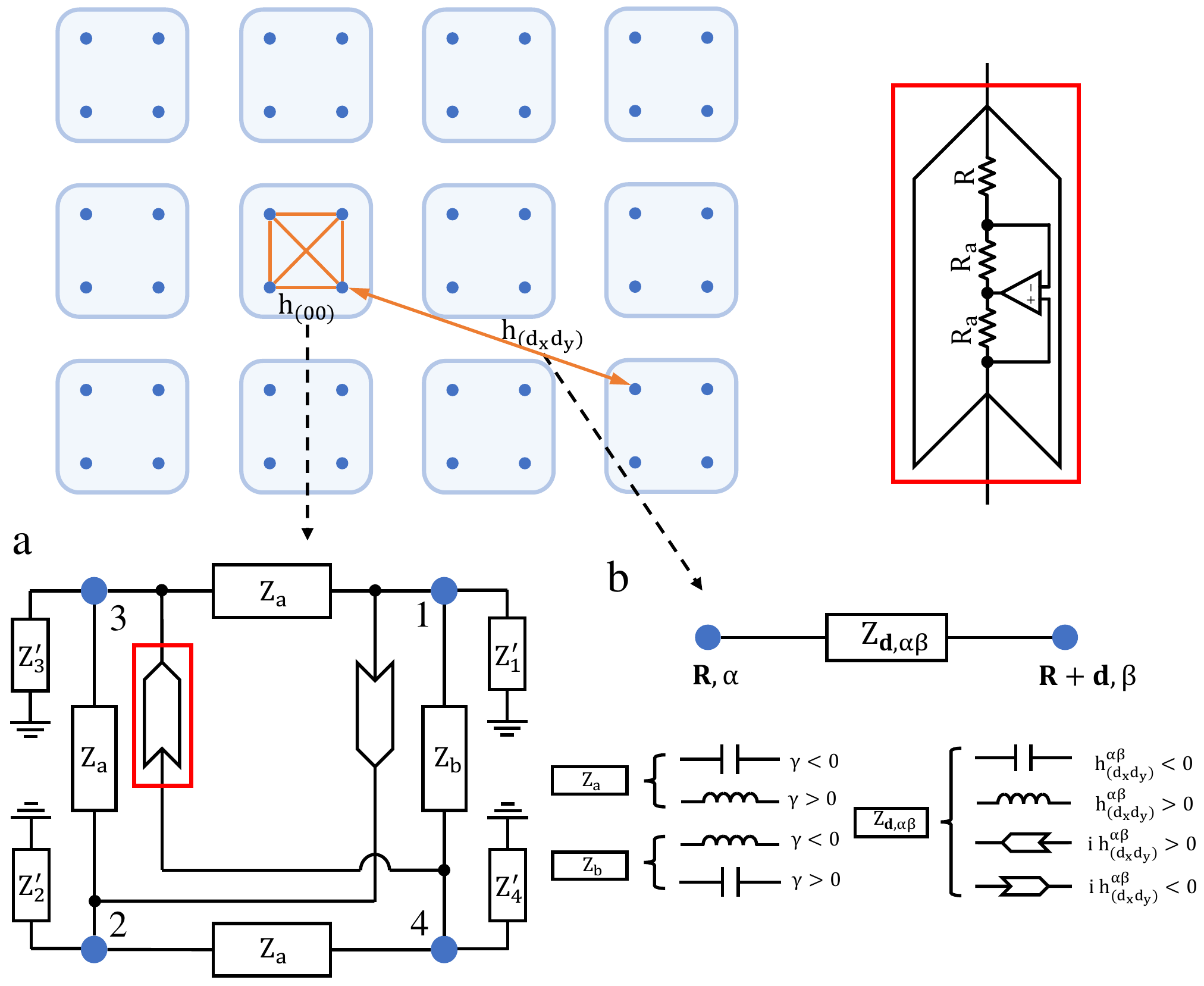}
  \caption{A scheme to realize our Hamiltonian in electric circuits.
  a, The electric device configuration to simulate the term $h_{(00)}$ within a unit cell.
  Here, the box labelled by $Z_a$ or $Z_b$ can be either a capacitor or a inductor depending on
  the sign of $\gamma$ with their impedances being $Z_a=-1/(i\gamma)$ and $Z_b=1/(i\gamma)$, respectively.
  The device in the red box denotes a negative impedance converter with current inversion (INIC),
  the impedance of which depends on the direction of the current. The resistance of the INIC should be
  taken as $R=1/\Delta$. The boxes labelled by $Z_{a=1,2,3,4}'$ represent
  device configurations to eliminate the unnecessary onsite terms (see Appendix H).
  b, A suitable electric device is applied to connect two nodes: $({\bf R},\alpha)$ and $({\bf R}+{\bf d},\beta)$
  for distinct unit cells. The device can be either a capacitor, or an inductor or an INIC with the
  impedance being $Z_{{\bf d},\alpha\beta}=i/h_{(d_xd_y)}^{\alpha\beta}$, as shown
  in the lower right corner figure. Here, ${\bf d}=d_x{\bf e}_x+d_y{\bf e}_y$.
  }
\label{fig10}
\end{figure}

\section{Experimental realization}
\label{sec7}
These new phases can be observed through quench dynamics in cold atom experiments.
In fact, Ref.~\cite{Taylor2017Science} has introduced an experimental scheme to realize the
type-I quadrupole model. In the scheme, laser beams are used to
engineer a superlattice with four sites in each unit cell [see Fig.~\ref{fig1}(a)].
Tunnelling along $y$ is suppressed by a linear potential.
Then, Raman laser beams are applied to restore the hopping with a phase of $\pi$ per plaquette~\cite{Bloch2013PRL,Ketterle2013PRL}.
Initially, we can tune the superlattice to realize large barriers between unit cells,
which suppress the tunnelling between unit cells despite the presence of Raman lasers,
realizing our initial Hamiltonian with almost zero $\lambda$. The cold atoms are prepared in
the ground state of this Hamiltonian. After that,
we suddenly change the model to our final Hamiltonian by tuning the superlattice and
Raman lasers and perform the tomography of the evolving states.
We can achieve the tomography by first removing atoms at two sites in each unit cell
and then performing the tomography of the remaining sites by time-of-flight measurements~\cite{Lewenstein2014PRL,Sengstock2016}.
The atoms at these sites can be kicked out of the trap by
shining resonant laser beams to the sites to excite them
to the $P$ state, which rapidly decays through spontaneous emission and escapes the trap.
This is feasible given that current experiments can realize laser beams with the diameter as small as
$600\textrm{ nm}$~\cite{Bloch2011Nat,Greiner2009Nat}, comparable to
the lattice constant. With measured states, the quadrupole moments, entanglement spectrum, and edge polarization
can be obtained.

In addition, we propose a scheme to simulate the Hamiltonian~(\ref{Ham1})
in electric circuits to realize these new phases, as illustrated in Fig.~\ref{fig10}. In fact, the Hamiltonians
(\ref{SimHam}), (\ref{HamSim2}) and (\ref{H4})
can be experimentally implemented using simplified electric networks.
We note that the type-I QTI has already been observed in electric circuits~\cite{Thomale2018NP}.
In the circuits, a Hamiltonian is simulated by a Laplacian, a matrix connecting
electric potentials with currents, i.e., ${\bf I}=J{\bf V}$,
where $\bf I$ and $\bf V$ are column vectors, each entry of which denotes the corresponding current flowing into the
corresponding node and the electric potential there, respectively~\cite{Thomale2018CP}.
Appropriate electric devices, such as capacitors, inductors and INICs~\cite{Thomale2019PRL,ChenBook}, are applied to connect different nodes to mimic the hopping between different sites
within or outside of a unit cell. The edge polarization and quadrupole moment can be
obtained by measuring the single-point impedances of the circuit, and the existence of corner modes can be
shown
by measuring the resonance of two-point impedances near the corners,
as detailed in Appendix H. Type-II QTIs may also be realized in other systems, such as
solid-state materials, cold atoms and photonic crystals.

\section{Conclusion}
\label{sec8}

In summary, we discover a novel type of QTI violating an established classical
relation. This relation is maintained in a quantum system as the Wannier band and edge energy spectrum close
their gaps simultaneously.
However, the appearance of the type-II QTI indicates that the two gaps do not necessarily vanish at
the same time.
We also find the anomalous quadrupole insulating phases that cannot be characterized by the
previously introduced Wannier-sector polarizations.
We introduce a new topological invariant for a Wilson line to characterize them for a system with reflection symmetries;
such methods to characterize the topological property change for a Wannier band can also be applied to
systems with other symmetries.
In addition, we predict another novel topological phase with quantized edge polarizations but without
zero-energy corner modes and quadrupole moments.
Based on
the type-II insulating phase, we find new pumping phenomena, leading to novel 3D higher-order
topological insulators.
We further show that these new topological phenomena in two dimensions can emerge from quench dynamics,
which can be experimentally achieved in ultracold atomic gases.
We also introduce an experimental proposal with electric
circuits to simulate these new models.
Our results demonstrate that new multipole topological insulators with exotic
properties can exist beyond classical constraints, opening a new direction for exploring multipole
topological insulators.

\begin{acknowledgments}
We thank Q. Zeng, Y.-L. Tao and D.-L. Deng for helpful discussions.
We would also like to thank B. J. Wieder for helpful discussions and
bringing us Ref.~\cite{Wieder2019}, where the breakdown
of the correspondence between Wannier and edge spectra was found in a system with neither particle-hole
nor chiral symmetry.
Y.B.Y., K.L. and Y.X.
are supported by the start-up fund from Tsinghua University,
the National Thousand-Young-Talents Program and the National Natural Science Foundation
of China (11974201).
We acknowledge in addition support from the Frontier Science Center for Quantum Information of the Ministry of Education of China, Tsinghua University Initiative Scientific Research Program, and the National key Research and Development Program of China (2016YFA0301902).
\end{acknowledgments}

\begin{widetext}

%%%%%%%%%% Prefix a "S" to all equations, figures, tables and reset the counter %%%%%%%%%%
%\setcounter{equation}{0} \setcounter{figure}{0} \setcounter{table}{0} %
%\renewcommand{\theequation}{S\arabic{equation}} \renewcommand{\thefigure}{S%
%\arabic{figure}} \renewcommand{\bibnumfmt}[1]{[#1]} \renewcommand{%
%\citenumfont}[1]{#1}
%%%%%%%%%% Prefix a "S" to all equations, figures, tables and reset the counter %%%%%%%%%%

\section*{Appendix A: The relationship between two simplest insulators and the type-II AQTI}

\setcounter{equation}{0} \setcounter{figure}{0} \setcounter{table}{0} %
\renewcommand{\theequation}{A\arabic{equation}} \renewcommand{\thefigure}{A%
\arabic{figure}} \renewcommand{\bibnumfmt}[1]{[#1]} \renewcommand{%
\citenumfont}[1]{#1}

In this appendix, by
relating two simplest insulators with zero and nonzero quadrupole moments
to the type-II AQTI, we verify that the quadrupole moments obtained in the type-II AQTI are reasonable.

Let us first consider a topologically trivial model described by the Hamiltonian,
\begin{equation}
H^{\textrm{trivial}}=\sum_{\bf R} \hat{c}^\dagger_{{\bf R}}h \hat{c}_{\bf R},
\end{equation}
where
\begin{equation}
h=\left(
           \begin{array}{cc}
             0 & -\sigma_0+i\sigma_y \\
             H.c. & 0 \\
           \end{array}
         \right).
\end{equation}
This model has four orbital degrees of freedom in each unit cell. There is no tunneling between
distinct unit cells [see Fig.~\ref{figA1}(a)], showing that it is a topologically trivial atomic insulator. Evidently,
this model does not possess quadrupole moment at half filling since the positive charges and the electrons occupy
the same position. We note that this trivial insulator corresponds
to a trivial phase in the model in Ref.~\cite{Taylor2017Science} with $\lambda_x=\lambda_y=0$
and $\gamma_x=-\gamma_y=-1$.

To see the connection between the trivial insulator and the type-II AQTI, we construct a model $H_1(t)$ parameterized by $t\in[0,1]$ as follows,
\begin{equation}
H_1(t)=\left\{
\begin{aligned}
&(1-2t)H^{\textrm{trivial}}+2t H(\gamma=-1)    &0 \leq t < 0.5, \\
&H(\gamma=2.4t-2.2)   &0.5 \leq t \leq 1,
\end{aligned}
\right.
\end{equation}
where $H$ denotes the Hamiltonian (1) in the main text. This Hamiltonian connects the trivial
insulator for $t=0$ with the type-II AQTI for $t=1$ corresponding to the type-II AQTI marked
by the green square in Fig.~\ref{fig2}(c) in the main text.

In Figs.~\ref{figA1}(b,c), we present the energy spectrum (obtained under periodic boundary conditions along both $x$ and $y$ directions) and the quadrupole moment as $t$ varies from $0$ to $1$. We find that at an intermediate point, the bulk energy gap closes and simultaneously the quadrupole moment changes abruptly from $0$ to $1/2$. This suggests that the type-II AQTI is topologically distinct from the trivial insulator with respect to the bulk property and this distinction can be characterized
by the quadrupole moment. This also shows that the type-II AQTI is completely different from the trivial insulator attached with
a pair of SSH models. Even though the edge polarizations are the same for these two models and zero-energy corner modes exist
for both of them, their bulk topological properties are topologically different (the former has nonzero quadrupole moment while
the latter has zero one).

To see the connection between the type-I QTI and the type-II AQTI, we construct another model $H_2(t)$ parameterized by $t\in[0,1]$
as follows,
\begin{equation}
H_2(t)=(1-t) H^{\textrm{typeI}} + t H(\gamma=0.2).
\end{equation}
Here,
\begin{equation}
H^{\textrm{typeI}}=\sum_{\bf R}\left[ \hat{c}^\dagger_{{\bf R}+{\bf e}_x}h_{(1 0)}
+\hat{c}^\dagger_{{\bf R}+{\bf e}_y}h_{(0 1)}
\right]\hat{c}_{\bf R}+H.c.,
\end{equation}
with
\begin{equation}
h_{10}=\left(
         \begin{array}{cccc}
           0 & 0 & 0 & 0 \\
           0 & 0 & 0 & 1 \\
           1 & 0 & 0 & 0 \\
           0 & 0 & 0 & 0 \\
         \end{array}
       \right),
\text{ and }
h_{01}=\left(
         \begin{array}{cccc}
           0 & 0 & 0 & 0 \\
           0 & 0 & -1 & 0 \\
           0 & 0 & 0 & 0 \\
           1 & 0 & 0 & 0 \\
         \end{array}
       \right)
\end{equation}
describes a typical QTI with only intercell hoppings [see Fig.~\ref{figA1}(d)] corresponding to the model in Ref.~\cite{Taylor2017Science} with $\gamma_x=\gamma_y=0$ and $\lambda_x=\lambda_y=1$.
This model has nonzero quadrupole moment, zero-energy corner modes leading to fractional corner charges at half filling
and edge polarizations at all four boundaries~\cite{Taylor2017Science} and thus belongs to the type-I QTI.
$H_2(t)$ connects this type-I QTI with a type-II AQTI.

\begin{figure*}[t]
\includegraphics[width=6.2in]{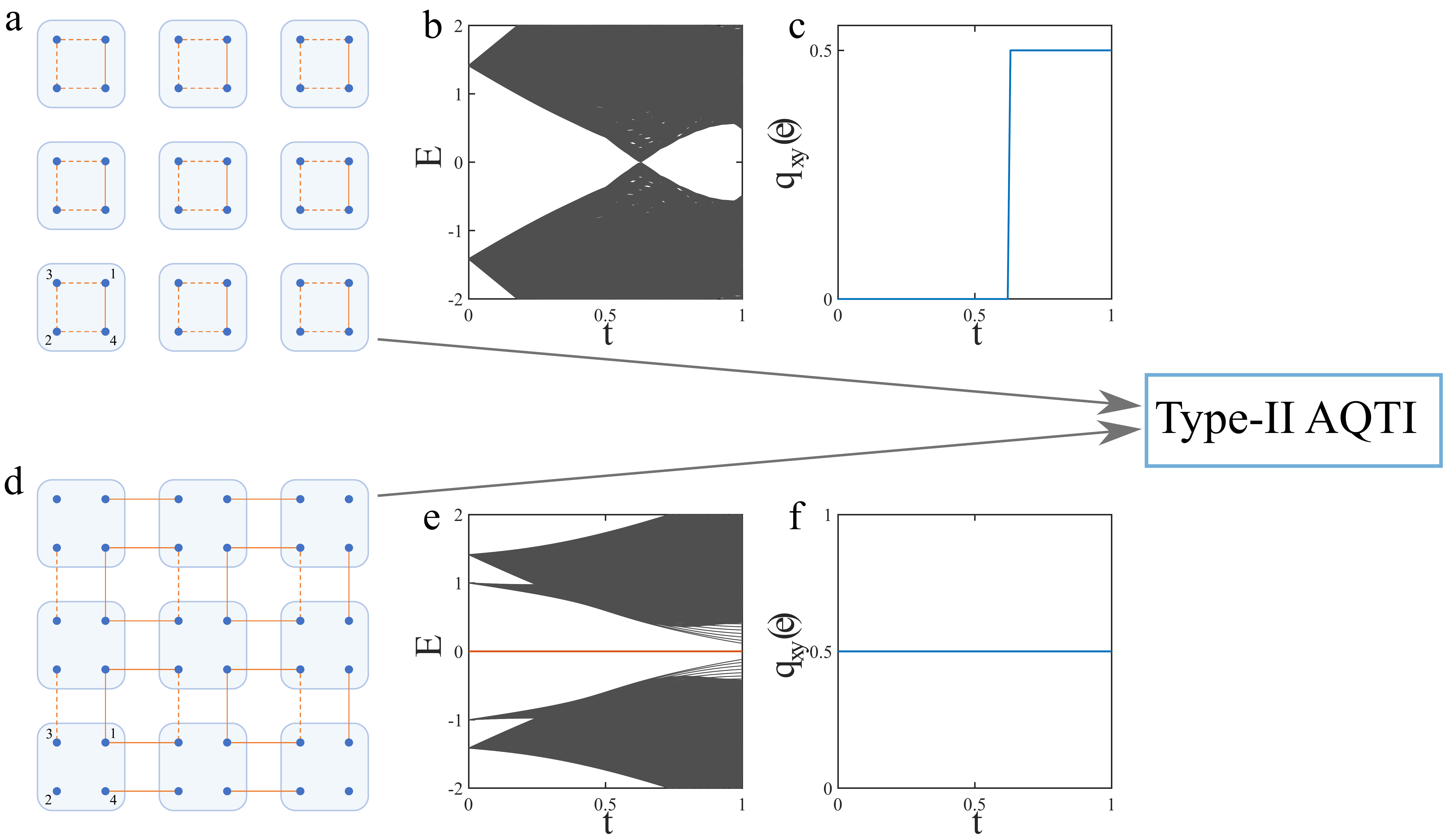}
  \caption{The relationship between the type-II AQTI and two simplest models without and with
  quadruopole moments.
  a, Schematics of a simplest model of a trivial insulator with zero quadrupole moments and
  d, Schematics of a simplest model of type-I QTI with quantized nonzero quadrupole moment.
  In a and d, the solid and dashed red lines represent the hopping terms with positive and negative signs, respectively.
  The index $\alpha=1, 2, 3, 4$ denotes the sites within a unit cell.
  b, The energy spectrum (obtained under periodic boundary conditions along both $x$ and $y$ directions) and c, the quadrupole moment
  as we change a system from the trivial insulating phase in a to a type-II anomalous quadrupole topological insulating phase.
  Here the quadrupole moment changes suddenly from zero to $e/2$ when the bulk energy gap vanishes,
  implying that the type-II AQTI is topologically distinct from the trivial insulator.
  This also distinguishes the type-II AQTI from the trivial insulator attached with a pair of SSH models.
  e, The energy spectrum (obtained under open boundary conditions along both $x$ and $y$ directions) and f, the quadrupole moment
  as we change a system from the type-I QTI in d to a type-II AQTI. Here the quadrupole moment remains
  unchanged and both bulk and edge energy gaps remain open and the zero-energy corner states exist during the whole process.
  It shows that the type-II AQTI possesses the same quadrupole moment as the type-I QTI.
  }
\label{figA1}
\end{figure*}

Fig.~\ref{figA1}(e) illustrates that both bulk and edge energy gaps (see the energy spectrum
obtained under open boundary conditions along both $x$ and $y$ directions) remain open as we vary
$t$ from $0$ to $1$. During this process, the quadrupole moment also remains unchanged (equal to $e/2$) [see Fig.~\ref{figA1}(f)].
This suggests that the type-II AQTI shares the same bulk topological property as the type-I QTI characterized
by the quadrupole moment. Their difference arises from the Wannier band gap closing, leading to different
edge polarizations.

The above discussion suggests the validity of our calculation of the quadrupole moments in our model in the sense
that the quadrupole moments change when the bulk energy gap closes when a system changes from a typical atomic
insulator without quadrupole moments to a type-II AQTI. In addition, this moment remains unchanged when we change a system from
a typical QTI with nonzero quadrupole moments to a type-II AQTI. This confirms that the type-II AQTI is a new type of qudrupole topological insulating phase.

\section*{Appendix B: The Hamiltonian in momentum space}

\setcounter{equation}{0} \setcounter{figure}{0} \setcounter{table}{0} %
\renewcommand{\theequation}{B\arabic{equation}} \renewcommand{\thefigure}{B%
\arabic{figure}} \renewcommand{\bibnumfmt}[1]{[#1]} \renewcommand{%
\citenumfont}[1]{#1}

In this appendix, we write down the explicit form of our Hamiltonian in momentum space as
\begin{equation}
H({\bf k})=\sum_{i,j=0}^{3} g_{ij}({\bf k})\tau_i\otimes\sigma_j,
\end{equation}
where all the nonzero $g_{ij}$s are given by
\begin{align}
&g_{01}=2 t_2 \sin(2k_x) \\
&g_{03}=-4 t_2 \cos(k_x)\sin(k_y) \\
&g_{10}=\gamma+2 t_1 \cos(k_x)+2 t_1' \cos(k_y)+4 t_2 \cos(k_x)\cos(k_y)-4 t_2' \cos(2k_x)\cos(k_y) \\
&g_{21}=-2 t_1 \sin(k_y)-2 t_2 \sin(2k_y)-4 t_2' \cos(k_x)\sin(k_y)+4 t_2' \cos(k_x)\sin(2k_y) \\
&g_{22}=\gamma-2 t_1 \cos(k_y)-2 t_2 \cos(2k_y)-4 t_2' \cos(k_x)\cos(k_y)+4 t_2' \cos(k_x)\cos(2k_y) \\
&g_{23}=-2 t_1 \sin(k_x)-4 t_2 \sin(k_x)\cos(k_y)+4 t_2' \sin(2k_x)\cos(k_y) \\
&g_{31}=-4 t_2 \cos(k_x)\sin(k_y)-2 t_2' \sin(2k_y) \\
&g_{32}=\Delta+2 t_1' \cos(k_x)+2 t_2 \cos(2k_x)-2 t_2' \cos(2k_y)-4 t_2 \cos(k_x)\cos(k_y) \\
&g_{33}=-2 t_2 \sin(2k_x).
\end{align}

One can easily check that this Hamiltonian respects the reflection symmetry: $\hat{m}_xH(k_x,k_y)\hat{m}_x^{-1}=H(-k_x,k_y)$ with $\hat{m}_x=\tau_1\otimes\sigma_3$, $\hat{m}_y H(k_x,k_y) \hat{m}_y^{-1}=H(k_x,-k_y)$ with $\hat{m}_y=\tau_1\otimes\sigma_1$,
and the particle-hole symmetry: $\Xi H(k_x,k_y)\Xi^{-1}=-H(-k_x,-k_y)$ with $\Xi=\tau_3\otimes\sigma_0 \kappa$.

\section*{Appendix C: Topologically trivial phase}

\setcounter{equation}{0} \setcounter{figure}{0} \setcounter{table}{0} %
\renewcommand{\theequation}{C\arabic{equation}} \renewcommand{\thefigure}{C%
\arabic{figure}} \renewcommand{\bibnumfmt}[1]{[#1]} \renewcommand{%
\citenumfont}[1]{#1}

In this appendix, we display richer physics in the topologically trivial phase as shown in Fig.~\ref{figC1}.
We find four regions corresponding to distinct $N_{\nu_\lambda}^0$ ($\lambda=x,y$):
$(N_{\nu_x}^0,N_{\nu_y}^0)=(0,0), (4,0), (2,0), (2,2)$, even though they all
have zero quadrupole moments and edge polarizations. This is reasonable as the edge states of the Wannier
spectrum with zero Wannier centers do not contribute to the dipole moments.
Despite being topologically trivial, these phases can have nonzero Wannier-sector polarization when odd number of pairs of
edge states of the Wannier spectrum exists with zero eigenvalues. In particular,
in the region with $(N_{\nu_x}^0,N_{\nu_y}^0)=(2,2)$, although both Wannier-sector polarizations are nonzero, i.e., $(p_y^{\nu_x},p_x^{\nu_y})=(1/2,1/2)$, the quadrupole moment and edge polarizations do not exist,
implying the breakdown of the Wannier-sector polarizations to characterize the
edge polarization and the quadrupole moments.

\begin{figure*}[t]
\includegraphics[width=4.5in]{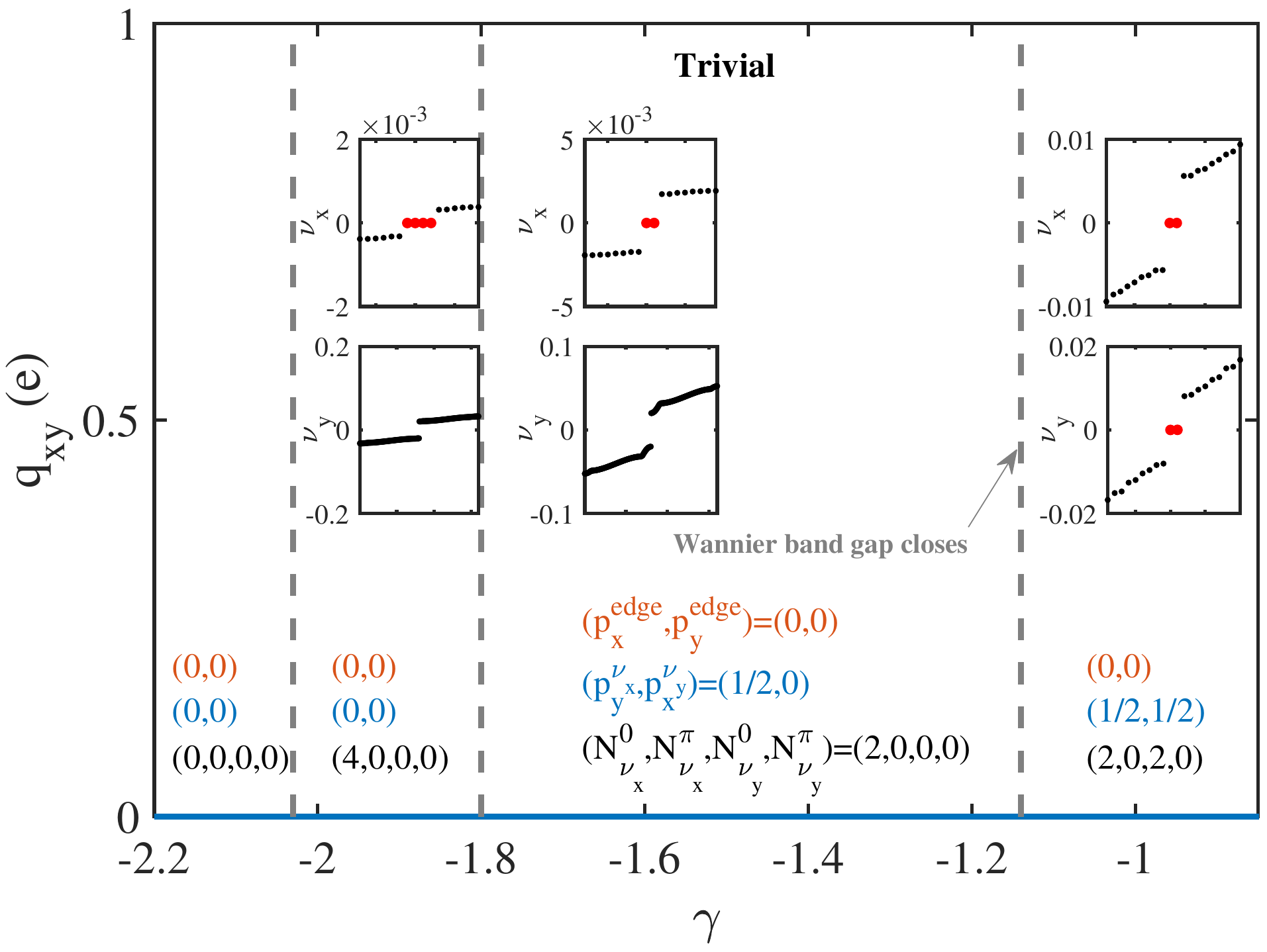}
  \caption{Distinct regions in topologically trivial insulators.
  The subsets display the Wannier spectrum $\nu_x$ ($\nu_y$) for periodic boundary conditions
  along $x$ ($y$) and open ones along $y$ ($x$) with the isolated Wannier centers
  highlighted by red circles. The edge polarization $(p_x^{\textrm{edge}},p_y^{\textrm{edge}})$, the Wannier-sector polarization
  $(p_y^{\nu_x},p_x^{\nu_y})$, the number of edge states of the Wannier Hamiltonian
  $N_\nu \equiv (N_{\nu_x}^0, N_{\nu_x}^{\pi}, N_{\nu_y}^0, N_{\nu_y}^{\pi})$ are also shown.
  The vertical dashed lines represent the critical points where the Wannier spectrum gap closes.
   }
\label{figC1}
\end{figure*}

\section*{Appendix D: A topological invariant for a Wilson line}

\setcounter{equation}{0} \setcounter{figure}{0} \setcounter{table}{0} %
\renewcommand{\theequation}{D\arabic{equation}} \renewcommand{\thefigure}{D%
\arabic{figure}} \renewcommand{\bibnumfmt}[1]{[#1]} \renewcommand{%
\citenumfont}[1]{#1}

In this appendix, we will derive a topological invariant for a Wilson line characterizing the change of the topological
property of the Wannier band.
\subsection*{1. Generic symmetry constraint of a Wilson line}
Let us consider a generic symmorphic lattice symmetry: ${\bf r}\rightarrow D_g{\bf r}$, described by
\begin{equation}
g H({\bf k})g^{-1}=H(D_g {\bf k}),
\end{equation}
where $g$ is a representation of the symmetric operation in momentum space, which is unitary.
If $|u_{\bf k}^n\rangle$ is an eigenstate of $H({\bf k})$ corresponding to the eigenenergy $E_{\bf k}^n$,
then $g|u_{\bf k}^n\rangle$ is an eigenstate of $H(D_g {\bf k})$ corresponding to the same energy.
As a result, we can write $g|u_{\bf k}^n\rangle$
in terms of eigenstates of $H(D_g {\bf k})$,
\begin{equation}
g|u_{\bf k}^n\rangle=\sum_{m=1}^{N_{\textrm{occ}}} |u_{D_g {\bf k}}^m\rangle \langle u_{D_g {\bf k}}^m| g |u_{\bf k}^n\rangle=\sum_{m=1}^{N_{\textrm{occ}}} |u_{D_g {\bf k}}^m\rangle B_{g,{\bf k}}^{mn},
\end{equation}
where $B_{g,{\bf k}}^{mn}=\langle u_{D_g {\bf k}}^m| g |u_{\bf k}^n\rangle$ is a unitary sewing matrix
that connects states at ${\bf k}$ with states at $D_g {\bf k}$. Since
we are interested in the occupied bands, the superscript only enumerates the occupied states from $1$ to the total
number of the occupied states $N_{\textrm{occ}}$.

We now define the Wilson line following a path $C$ in momentum space from ${\bf k}_i$ to ${\bf k}_f$ as
\begin{equation}
\mathcal{W}_{C,{\bf k}_f \leftarrow {\bf k}_i}=F_{{\bf k}_f-\Delta {\bf k}_{N-1}}\cdots F_{{\bf k}_i+\Delta {\bf k}_1}F_{{\bf k}_i},
\end{equation}
where $[F_{{\bf k}_{j^\prime}}]^{mn}=\langle u_{{\bf k}_{j^\prime}+\Delta {\bf k}_{j^\prime +1} }^m|u_{{\bf k}_{j^\prime}}^n\rangle$ with $m$ and $n$ being the indices for the occupied bands and $\Delta {\bf k}_{j}$ dividing the
trajectory into $N$ segments and $j^\prime=0,1,\cdots, N-1$ and
${\bf k}_{j^\prime}={\bf k}_i+\sum_{j=1}^{j^\prime} \Delta {\bf k}_{j}$.
In the limit $N\rightarrow \infty$, we can write the Wilson line
in the following compact form,
\begin{equation}
\mathcal{W}_{C,{\bf k}_f \leftarrow {\bf k}_i}=\lim_{N\rightarrow\infty}\prod_{n=1}^{N-1} (I-i\Delta {\bf k}_n \cdot
{\bf \mathcal{A}}_{{\bf k}_n})
=\overline{\exp}(-i\int_{\bf k_i}^{\bf k_f} {\bf \mathcal{A}}_{\bf k}\cdot d{\bf k}),
\end{equation}
where $\overline{\exp}(\cdots)$ denotes the path-ordered exponential, 
and $[{\bf \mathcal{A}}_{\bf k}]^{mn}=-i\langle u_{{\bf k}}^{m}|\partial_{{\bf k}} u_{{\bf k}}^{n}\rangle$
is the non-Abelian Berry connection. Since $ {\bf \mathcal{A}}_{\bf k}$ is a Hermitian matrix, the Wilson loop is unitary, reminiscent of a time evolution operator.

Since $|u_{\bf k}^n\rangle=g^\dagger \sum_{m}|u_{D_g{\bf k}}^m\rangle B_{g,{\bf k}}^{mn}$, we have
\begin{equation}
[F_{{\bf k}_{j}}]^{mn}=\sum_{m^\prime,n^\prime}(B_{g,{\bf k}_{j+1}}^{\dagger})^{mm^\prime} \langle u_{D_g {\bf k}_{j+1}}^{m^\prime} | u_{D_g {\bf k}_j}^{n^\prime}\rangle B_{g,{\bf k}_j}^{{n^\prime}n},
\end{equation}
leading to
\begin{equation}
 \mathcal{W}_{C,{\bf k}_f \leftarrow {\bf k}_i} =B_{g,{\bf k}_f}^\dagger \mathcal{W}_{D_gC,D_g {\bf k}_f \leftarrow D_g {\bf k}_i}B_{g,{\bf k}_i},
\end{equation}
which can be equivalently expressed as
\begin{equation}
B_{g,{\bf k}_f} \mathcal{W}_{C,{\bf k}_f \leftarrow {\bf k}_i} B_{g,{\bf k}_i}^\dagger=
\mathcal{W}_{D_gC,D_g {\bf k}_f \leftarrow D_g {\bf k}_i}=\mathcal{W}_{D_g {\bf k}_f \leftarrow D_g {\bf k}_i},
\label{SC}
\end{equation}
where $D_gC$ denotes a new path obtained by applying the symmetry operation on the original path $C$ (we will skip
this notation for simplicity).

\subsection*{2. Reflection symmetry constraint and topological invariants}
We now consider the reflection symmetry $M_x$, which, based on Eq.~(\ref{SC}), gives
\begin{equation}
B_{m_x,{\bf k}_f} \mathcal{W}_{x,{\bf k}} B_{m_x,{\bf k}_i}^\dagger=
\mathcal{W}_{x,(-k_x-2\pi,k_y) \leftarrow (-k_x,k_y)}=\mathcal{W}_{-x,M_x{\bf k}},
\end{equation}
where $\mathcal{W}_{x,{\bf k}}$ and $\mathcal{W}_{-x,{\bf k}}$ are defined as
\begin{eqnarray}
\mathcal{W}_{x,{\bf k}}&=&\mathcal{W}_{(k_x+2\pi,k_y)\leftarrow (k_x,k_y)}, \\
\mathcal{W}_{-x,{\bf k}}&=&\mathcal{W}_{(k_x-2\pi,k_y)\leftarrow (k_x,k_y)},
\end{eqnarray}
with $x$ and $-x$ labelling the direction that a Wilson loop is obtained.
Since the Wilson loop is unitary, we can write it as
\begin{equation}
\mathcal{W}_{x,{\bf k}}^{(0)}=\mathcal{W}_{x,(0,k_y)}=e^{iH_{\mathcal{W}_x}^{(0)}(k_y)},
\end{equation}
where $H_{\mathcal{W}_x}^{(0)}(k_y)=-i\log(\mathcal{W}_{x,{\bf k}}^{(0)})$ is the Wannier Hamiltonian for $k_x=0$. Given
that the Wannier Hamiltonian is a multivalued function of the Wilson loop, we redefine it with respect to
a logarithm branch cut $\epsilon$,
\begin{equation}
H_{\mathcal{W}_x}^\epsilon (k_y)=-i\log_{\epsilon}(\mathcal{W}_{x,{\bf k}}^{(0)}),
\end{equation}
where we take $\log_{\epsilon} e^{i\phi}=i\phi$, for $\epsilon\le \phi<\epsilon+2\pi$.

Now we can obtain the symmetry constraint on the Wannier Hamiltonian,
\begin{equation}
B_{m_x,(0,k_y)}H_{\mathcal{W}_x}^\epsilon (k_y)B_{m_x,(0,k_y)}^\dagger=-H_{\mathcal{W}_x}^{-\epsilon} (k_y)+2\pi I_{N_{\textrm{occ}}}=-H_{\mathcal{W}_x}^{-\epsilon+2\pi}(k_y) +4\pi I_{N_{\textrm{occ}}}.
\end{equation}
\begin{proof}
\begin{align}
&B_{m_x,(0,k_y)}H_{\mathcal{W}_x}^\epsilon (k_y)B_{m_x,(0,k_y)}^\dagger  \nonumber \\
=&-iB_{m_x,(0,k_y)}\log_{\epsilon}(\mathcal{W}_{x,{\bf k}}^{(0)})B_{m_x,(0,k_y)}^\dagger \nonumber \\
=&-i\log_{\epsilon}\left[B_{m_x,(0,k_y)}\mathcal{W}_{x,{\bf k}}^{(0)}B_{m_x,(0,k_y)}^\dagger\right] \nonumber \\
=&-i\log_{\epsilon}\left[\mathcal{W}_{-x,{\bf k}}^{(0)}\right] \nonumber \\
=&-i\log_{\epsilon}\left[(\mathcal{W}_{x,{\bf k}}^{(0)})^{-1}\right] \nonumber \\
=& -i\sum_n \log_{\epsilon} (e^{-i\nu_{x}^{(n)}}) |\nu_x^{(n)}\rangle\langle\nu_x^{(n)}| \nonumber \\
=& -i\sum_n \left[-\log_{-\epsilon}(e^{i\nu_{x}^{(n)}}) + 2\pi i \right] |\nu_x^{(n)}\rangle\langle\nu_x^{(n)}| =-H_{\mathcal{W}_x}^{-\epsilon}(k_y) +2\pi I_{N_{\textrm{occ}}}\\
=& -i\sum_n \left[-\log_{-\epsilon+2\pi}(e^{i\nu_{x}^{(n)}}) + 4\pi i \right] |\nu_x^{(n)}\rangle\langle\nu_x^{(n)}| =-H_{\mathcal{W}_x}^{-\epsilon+2\pi}(k_y) +4\pi I_{N_{\textrm{occ}}},
\end{align}
where $|\nu_x^{(n)}\rangle$ is the $n$th eigenvector of the Wilson loop $\mathcal{W}_{x,{\bf k}}^{(0)}$
corresponding to the eigenvalue $e^{i\nu_{x}^{(n)}}$. In the derivation, we have also used the following relations
\begin{align}
&\log_{\epsilon} (e^{-i\phi})= -\log_{-\epsilon} (e^{i\phi}) + 2\pi i \\
&\log_{-\epsilon} (e^{i\phi})= \log_{-\epsilon+2\pi} (e^{i\phi}) - 2\pi i.
\end{align}
\end{proof}

We now define the Wilson line with respect to $\epsilon$ by
\begin{equation}
\mathcal{W}_{k_x\leftarrow 0}^{\epsilon}(k_y) \equiv \mathcal{W}_{k_x\leftarrow 0}(k_y) \exp(-iH_{W_x}^{\epsilon}(k_y)\frac{k_x}{2\pi}),
\end{equation}
where $\mathcal{W}_{k_x\leftarrow 0}(k_y)\equiv \mathcal{W}_{(k_x,k_y)\leftarrow (0,k_y)}$.
It can be easily checked that $\mathcal{W}_{2\pi\leftarrow 0}^{\epsilon}(k_y)=I_{N_{\textrm{occ}}}$,
implying that $\mathcal{W}_{k_x\leftarrow 0}^{\epsilon}(k_y)$ is periodic with respect to $k_x$.

In the following, we will prove the symmetry constraints on the Wilson line with respect to $\epsilon$,
\begin{equation}
B_{m_x,(k_x,k_y)}\mathcal{W}_{k_x \leftarrow 0}^{\epsilon}(k_y) B_{m_x,(0,k_y)}^{\dagger}
= \mathcal{W}_{-k_x+2\pi\leftarrow 0}^{-\epsilon}(k_y) \exp(-ik_x)
= \mathcal{W}_{-k_x+2\pi\leftarrow 0}^{-\epsilon+2\pi}(k_y) \exp(-2ik_x).
\label{EpsilonWL}
\end{equation}
\begin{proof}
We can directly obtain
\begin{align}
B_{m_x,(k_x,k_y)}\mathcal{W}_{k_x \leftarrow 0}^{\epsilon}(k_y) B_{m_x,(0,k_y)}^{\dagger}
&=B_{m_x,(k_x,k_y)}\mathcal{W}_{k_x \leftarrow 0}(k_y)B_{m_x,(0,k_y)}^{\dagger} B_{m_x,(0,k_y)}e^{-iH_{W_x}^{\epsilon}(k_y)\frac{k_x}{2\pi}}B_{m_x,(0,k_y)}^{\dagger} \\
&=\mathcal{W}_{-k_x \leftarrow 0}(k_y) e^{iH_{W_x}^{-\epsilon}(k_y)\frac{k_x}{2\pi}} e^{-ik_x} =\mathcal{W}_{-k_x\leftarrow 0}^{-\epsilon} e^{-ik_x} \\
&= \mathcal{W}_{-k_x \leftarrow 0}(k_y) e^{iH_{W_x}^{-\epsilon+2\pi}(k_y)\frac{k_x}{2\pi}} e^{-2ik_x} =\mathcal{W}_{-k_x\leftarrow 0}^{-\epsilon+2\pi} e^{-2ik_x}.
\end{align}
With the aid of the following equation
\begin{align}
\mathcal{W}_{k_x+2\pi\leftarrow 0}^{\epsilon}(k_y)&=\mathcal{W}_{k_x+2\pi\leftarrow 2\pi} [\mathcal{W}_{2\pi\leftarrow 0}(k_y) \exp(-iH_{W_x}^{\epsilon})] \exp(-iH_{W_x}^{\epsilon}\frac{k_x}{2\pi}) \\
&=\mathcal{W}_{k_x\leftarrow 0}(k_y) \exp(-iH_{W_x}^{\epsilon}\frac{k_x}{2\pi}) \\
&=\mathcal{W}_{k_x\leftarrow 0}^{\epsilon}(k_y),
\end{align}
we obtain Eq.~(\ref{EpsilonWL}).
\end{proof}

At $k_x=\pi$, $\epsilon=0$ or $\epsilon=\pi$, Eq.~(\ref{EpsilonWL}) leads to
\begin{align}
&B_{m_x,(\pi,k_y)}\mathcal{W}_{\pi \leftarrow 0}^{\epsilon=0}(k_y) B_{m_x,(0,k_y)}^{\dagger} = -\mathcal{W}_{\pi\leftarrow 0}^{\epsilon=0}(k_y) \label{BSmmetry1} \\
&B_{m_x,(\pi,k_y)}\mathcal{W}_{\pi \leftarrow 0}^{\epsilon=\pi}(k_y) B_{m_x,(0,k_y)}^{\dagger} = \mathcal{W}_{\pi\leftarrow 0}^{\epsilon=\pi}(k_y).
\label{BSmmetry2}
\end{align}

Let us present a theorem, showing that $B_{m_x,(\pi,k_y)}=B_{m_x,(0,k_y)}$ in a specific basis.
\begin{theorem}
Consider a Hamiltonian $H({\bf k})$ describing an insulator at half filling.
Suppose it respects reflection symmetry such that $\hat{m}_xH({\bf k})\hat{m}_x^\dagger=H(-k_x,k_y)$
and $\hat{m}_yH({\bf k})\hat{m}_y^\dagger=H(k_x,-k_y)$ with $\hat{m}_x$ and $\hat{m}_y$ anticommuting with each other. There exists a basis in which the sewing matrix takes the form
\begin{equation}
B_{m_x,(\pi,k_y)}=B_{m_x,(0,k_y)}=\left(
  \begin{array}{cc}
    I_{N_{\textrm{occ}}/2} & 0 \\
    0 & -I_{N_{\textrm{occ}}/2} \\
  \end{array}
\right).
\end{equation}
\end{theorem}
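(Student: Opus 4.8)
The plan is to work at the reflection-invariant lines $k_x = k_x^* \in \{0,\pi\}$, where $D_{m_x}(k_x^*,k_y) = (k_x^*,k_y)$ so that $B_{m_x,(k_x^*,k_y)}^{mn} = \langle u_{(k_x^*,k_y)}^m|\hat m_x|u_{(k_x^*,k_y)}^n\rangle$ is simply the matrix of $\hat m_x$ in the chosen gauge for the occupied bands; since $\hat m_x$ commutes with $H(k_x^*,k_y)$, it preserves the occupied subspace and this matrix is well defined. First I would normalize $\hat m_x^2 = \hat m_y^2 = I$ without loss of generality: $\hat m_x^2$ commutes with every $H(\mathbf k)$ and with $\hat m_y$, so it is a scalar $e^{i\theta}I$, and the rescaling $\hat m_x \mapsto e^{-i\theta/2}\hat m_x$ leaves the reflection relation and $\{\hat m_x,\hat m_y\}=0$ untouched while fixing $\hat m_x^2=I$; likewise for $\hat m_y$. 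Then along $k_x=k_x^*$ the operator $\hat m_x$ has only eigenvalues $\pm1$, and I would pick the $N_{\textrm{occ}}$ occupied Bloch states to be simultaneous eigenstates of $H$ and $\hat m_x$ (diagonalizing $\hat m_x$ inside each degenerate multiplet, which is always possible since $[\hat m_x,H]=0$ there), so that $B_{m_x,(k_x^*,k_y)} = \mathrm{diag}(\eta_1,\dots,\eta_{N_{\textrm{occ}}})$ with each $\eta_n = \pm1$.

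The core step is to show that exactly $N_{\textrm{occ}}/2$ of the $\eta_n$ equal $+1$ on each such line. Let $n_\pm(k_y)$ count the occupied bands with $\hat m_x$-parity $\pm1$ at $(0,k_y)$. Because the bulk gap is open everywhere, the occupied subspace varies continuously as $k_y$ traverses the circle, hence so does $\hat m_x$ restricted to it, a continuous Hermitian matrix whose eigenvalues are pinned to $\pm1$; a continuous integer-valued count of $+1$ eigenvalues is therefore constant, giving $n_+(k_y)\equiv n_+$, $n_-(k_y)\equiv n_-$ with $n_+ + n_- = N_{\textrm{occ}}$. On the other hand $\hat m_y$ conjugates $H(0,k_y)$ into $H(0,-k_y)$, so it restricts to a unitary from the occupied subspace at $(0,k_y)$ onto that at $(0,-k_y)$; since $\{\hat m_x,\hat m_y\}=0$, this unitary maps the parity-$(+1)$ eigenspace onto the parity-$(-1)$ eigenspace, whence $n_+(k_y) = n_-(-k_y)$ and thus $n_+ = n_-$. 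Combining, $n_+ = n_- = N_{\textrm{occ}}/2$, and the identical argument at $k_x=\pi$ (using $(-\pi,k_y)\equiv(\pi,k_y)$) yields the same splitting there.

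To conclude, I would fix the Bloch gauge so that along both lines $k_x=0$ and $k_x=\pi$ the occupied bands are ordered with the $N_{\textrm{occ}}/2$ parity-$(+1)$ states first and the $N_{\textrm{occ}}/2$ parity-$(-1)$ states last; these are independent choices on disjoint lines, hence mutually compatible, and can be extended arbitrarily (and smoothly) elsewhere — only the behavior near the two lines enters the sewing matrices. In this gauge, $B_{m_x,(0,k_y)} = B_{m_x,(\pi,k_y)} = \mathrm{diag}(I_{N_{\textrm{occ}}/2},\,-I_{N_{\textrm{occ}}/2})$, which is the assertion. The step I expect to be the main obstacle is the constancy of the parity count $n_+$ along the high-symmetry line: this is precisely where the ``insulator at half filling'' hypothesis is used, and it requires care that degeneracies of $H(0,k_y)$ along the line do not spoil the argument (they do not, since $\hat m_x$ is diagonalizable within each multiplet) and that no globally smooth gauge is actually demanded — only local data near $k_x=0,\pi$.
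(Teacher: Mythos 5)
Your proposal is correct and follows essentially the same route as the paper: along the reflection-invariant lines $k_x=0,\pi$ you diagonalize $\hat m_x$ on the occupied subspace, use $\{\hat m_x,\hat m_y\}=0$ so that $\hat m_y$ exchanges the two parity sectors and forces equal occupied counts in each, and invoke the insulating gap to keep those counts constant in $k_y$ (the paper fixes the equal split at $k_y=0$ via the degeneracy of $H_1$ and $H_{-1}$ and then propagates, while you combine $n_+(k_y)=n_-(-k_y)$ with constancy --- equivalent bookkeeping). The only minor caveat is your claim that $\hat m_x^2$ commuting with all $H(\mathbf k)$ makes it a scalar, which would need an irreducibility assumption; the paper simply takes $\hat m_x^2=1$ as given, and nothing else in your argument depends on how that normalization is justified.
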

\begin{proof}
From definition of the sewing matrix, we have
\begin{eqnarray}
B_{m_x,(0,k_y)}^{mn}&=&\langle u_{(0,k_y)}^{m}|\hat{m}_x|u_{(0,k_y)}^{n}\rangle \\
B_{m_x,(\pi,k_y)}^{mn}&=&\langle u_{(\pi,k_y)}^{m}|\hat{m}_x|u_{(\pi,k_y)}^{n}\rangle.
\end{eqnarray}
Because $\hat{m}_x^2=1$, there are two eigenvalues: $\lambda =\pm 1$ for $\hat{m}_x$.
If $|m_{\lambda}\rangle$ is an eigenvector of $\hat{m}_x$ corresponding to an eigenvalue $\lambda$,
then $\hat{m}_y|m_{\lambda}\rangle$ is another eigenvector of $\hat{m}_x$ with an eigenvalue being
$-\lambda$, since $\hat{m}_x \hat{m}_y|m_{\lambda}\rangle=-\hat{m}_y \hat{m}_x|m_{\lambda}\rangle=-\lambda \hat{m}_y|m_{\lambda}\rangle$
arising from the anticommutation relation of $\hat{m}_x$ and $\hat{m}_y$, i.e., $\{\hat{m}_x,\hat{m}_y\}=0$.
In addition, if two eigenvectors $|m_{\lambda}^1\rangle$ and $|m_{\lambda}^2\rangle$
are orthogonal, i.e., $\langle m_{\lambda}^2|m_{\lambda}^1\rangle=0$,
then $\langle m_{\lambda}^2|\hat{m}_y^\dagger \hat{m}_y|m_{\lambda}^1\rangle=0$.
This tells us that the eigenvectors of $\hat{m}_x$ come in pairs with eigenvalues $\pm 1$.

Let us choose a basis $\beta=\beta_1\cup \beta_{-1}$ consisting of eigenvectors of $\hat{m}_x$, $\beta_1=\{|m_{1}^1\rangle,\cdots,|m_{1}^{N_{\textrm{occ}}}\rangle \}$
and $\beta_{-1}=\{\hat{m}_y|m_{1}^1\rangle,\cdots,\hat{m}_y|m_{1}^{N_{\textrm{occ}}}\rangle \}$, where
$N_{\textrm{occ}}$ is the number of occupied states at half filling for a fixed momentum.
Because of the reflection symmetry along $x$ respected by the Hamiltonian,
at $k_x=0$ or $\pi$, the Hamiltonian commutes with $\hat{m}_x$, i.e., $[H(k_x=0,k_y),\hat{m}_x]=0$.
As a result, in the basis $\beta$, $H(k_x=0,k_y)$ takes the following block form,
\begin{equation}
H(k_x=0,k_y)=\left(
    \begin{array}{cc}
      H_1(k_y) & 0 \\
      0 & H_{-1}(k_y) \\
    \end{array}
  \right),
\end{equation}
where $H_\lambda$ with $\lambda=\pm 1$ are $N_{\textrm{occ}} \times N_{\textrm{occ}}$ matrices.

If $|u_1(k_y=0)\rangle$ is an eigenstate of $H_1(k_y=0)$ corresponding to an eigenenergy $E_1(k_y=0)$,
then $\hat{m}_y|u_1(k_y=0)\rangle$ is an eigenstate of $H_{-1}(k_y=0)$ with the same energy because $H\hat{m}_y|u_1(k_y=0)\rangle=E_1(k_y=0)\hat{m}_y|u_1(k_y=0)\rangle$ and $\hat{m}_y|u_1(k_y=0)\rangle\in \textrm{span}\{\beta_{-1}\}$.
This tells us that, at half filling, only half of eigenstates of $H_1$ and $H_{-1}$ are occupied.
Consider an occupied basis
$\beta_{\textrm{occ}}=\beta_{\textrm{occ},{1}}\cup \beta_{\textrm{occ},{-1}}$
with $\beta_{\textrm{occ},{\lambda}}=\{|u_{{\lambda},(k_x=0,k_y=0)}^1\rangle,\cdots, |u_{{\lambda},(k_x=0,k_y=0)}^{N_{\textrm{occ}}/2}\rangle\}$ with $\lambda=\pm 1$, where
$|u_{{\lambda},(k_x=0,k_y=0)}^j\rangle$ with $j=1,\cdots,N_{\textrm{occ}}/2$ being eigenstates of
$H_{\lambda}(k_x=0,k_y=0)$. In this basis,
\begin{equation}
B_{m_x,(0,0)}=\left(
    \begin{array}{cc}
      I_{N_{\textrm{occ}}/2} & 0 \\
      0 & -I_{N_{\textrm{occ}}/2} \\
    \end{array}
  \right),
\label{BMatrix}
\end{equation}
where $I_{N_{\textrm{occ}}/2}$ is a $N_{\textrm{occ}}/2 \times N_{\textrm{occ}}/2$ identity matrix.
For nonzero $k_y$, the number of occupied states in $H_1$ or $H_{-1}$ should remain unchanged; otherwise,
the Hamiltonian would not be an insulator.
For example, if
$|u_{{\lambda},(k_x=0,k_y=0)}^1\rangle$ becomes unoccupied as we continuously vary $k_y$, its energy
must intersect with the Fermi level, leading to a metallic phase instead of an insulating phase.
This tells us that
$B_{m_x,(k_x=0,\pi,k_y)}$ takes the same form as in Eq.~(\ref{BMatrix}) in the basis
$\beta_{\textrm{occ},{\lambda}}=\{|u_{{\lambda},(k_x=0,\pi,k_y)}^1\rangle,\cdots, |u_{{\lambda},(k_x=0,\pi,k_y)}^{N_{\textrm{occ}}/2}\rangle\}$ with $\lambda=\pm 1$, where
$|u_{{\lambda},(k_x=0,\pi,k_y)}^j\rangle$ with $j=1,\cdots,N_{\textrm{occ}}/2$ being eigenstates of
$H_{\lambda}(k_x=0,\pi,k_y)$, respectively.
\end{proof}

Based on the above theorem, we can write Eq.~(\ref{BSmmetry1}) and Eq.~(\ref{BSmmetry2}) into
\begin{align}
&S\mathcal{W}_{\pi \leftarrow 0}^{\epsilon=0}(k_y) S^{\dagger} = -\mathcal{W}_{\pi\leftarrow 0}^{\epsilon=0}(k_y) \label{BSmmetry11} \\
&S\mathcal{W}_{\pi \leftarrow 0}^{\epsilon=\pi}(k_y) S^{\dagger} = \mathcal{W}_{\pi\leftarrow 0}^{\epsilon=\pi}(k_y),
\label{BSmmetry22}
\end{align}
where
\begin{equation}
S=\left(
  \begin{array}{cc}
    I_{N_{\textrm{occ}}/2} & 0 \\
    0 & -I_{N_{\textrm{occ}}/2} \\
  \end{array}
\right).
\end{equation}
As a result,
\begin{eqnarray}
\mathcal{W}_{\pi \leftarrow 0}^{\epsilon=0}(k_y)&=&\left(
  \begin{array}{cc}
    0 & U_+^{\epsilon=0}(k_y) \\
    U_-^{\epsilon=0}(k_y) & 0 \\
  \end{array}
\right)\\
\mathcal{W}_{\pi \leftarrow 0}^{\epsilon=\pi}(k_y)&=&\left(
  \begin{array}{cc}
    U_+^{\epsilon=\pi}(k_y) & 0 \\
    0 & U_-^{\epsilon=\pi}(k_y) \\
  \end{array}
\right).
\end{eqnarray}

In our system, $N_{\textrm{occ}}=2$ and thus $U_+^{\epsilon=0,\pi}(k_y)$ are complex numbers,
we can define a winding number at $\epsilon=0,\pi$ as
\begin{equation}
W_{\nu_x}^{\epsilon}=\frac{1}{2\pi i}\int_0^{2\pi} dk_y\partial_{k_y}\log (U_{+}^{\epsilon}(k_y)),
\end{equation}

\subsection*{3. Gauge transformation}
Since the Wilson line is determined by the occupied eigenstates of a Hamiltonian, the winding number of the Wilson line
introduced in the preceding section may be dependent of the gauge transformation of the occupied eigenstates.
Specifically, if we multiply a global phase to an occupied eigenstate, i.e., $|u_{k_x^{*},k_y}^{m_x}\rangle \rightarrow e^{i\theta_{m_x}(k_x^{*},k_y)}|u_{k_x^{*},k_y}^{m_x}\rangle$ with $k_x^*=0,\pi$ and a reflection eigenvalue $m_x=\pm 1$,
then
\begin{equation}
\mathcal{W}_{\pi \leftarrow 0}^{\epsilon}(k_y)
\rightarrow
L^\dagger(k_x=\pi,k_y) \mathcal{W}_{\pi \leftarrow 0}^{\epsilon}(k_y) L(k_x=0,k_y),
\end{equation}
where
\begin{equation}
L(k_x,k_y)=\left(
             \begin{array}{cc}
               e^{i\theta_{+1}(k_x,k_y)} & 0 \\
               0 & e^{i\theta_{-1}(k_x,k_y)} \\
             \end{array}
           \right).
\end{equation}
This gives us the transformation for the Wilson line for $\epsilon=0,\pi$,
\begin{eqnarray}
\mathcal{W}_{\pi \leftarrow 0}^{\epsilon=0}(k_y)&\rightarrow &\left(
  \begin{array}{cc}
    0 & U_+^{\epsilon=0}(k_y)e^{i[\theta_{-1}(0,k_y)-\theta_{+1}(\pi,k_y)]} \\
    U_-^{\epsilon=0}(k_y)e^{i[\theta_{+1}(0,k_y)-\theta_{-1}(\pi,k_y)]} & 0 \\
  \end{array}
\right)\\
\mathcal{W}_{\pi \leftarrow 0}^{\epsilon=\pi}(k_y)&\rightarrow &\left(
  \begin{array}{cc}
    U_+^{\epsilon=\pi}(k_y)e^{[\theta_{+1}(0,k_y)-\theta_{+1}(\pi,k_y)]} & 0 \\
    0 & U_-^{\epsilon=\pi}(k_y)e^{i[\theta_{-1}(0,k_y)-\theta_{-1}(\pi,k_y)]} \\
  \end{array}
\right).
\end{eqnarray}
Thus, the winding number changes to
\begin{eqnarray}
W_{\nu_x}^{\epsilon}&\rightarrow &\frac{1}{2\pi i}\int_0^{2\pi} dk_y\partial_{k_y}\log (U_{+}^{\epsilon}(k_y)e^{i\Theta^{\epsilon}}(k_y)) \\
&=& N_{\nu_x}^{\epsilon}+\frac{1}{2\pi}[\Theta^{\epsilon}(k_y=2\pi)-\Theta^{\epsilon}(k_y=0)],
\end{eqnarray}
where $\Theta^{\epsilon=0}\equiv \theta_{-1}(0,k_y)-\theta_{+1}(\pi,k_y)$, $\Theta^{\epsilon=\pi}\equiv \theta_{+1}(0,k_y)-\theta_{+1}(\pi,k_y)$. This shows that the global phase at $k_x=0,\pi$ can produce an
unphysical change of the winding number once the global phase exhibits a winding, implying that an isolated value
of the winding number is not physical. However, if we maintain this global phase unchanged as we vary a system parameter
and observe the change of the winding number, this change is physical and tells us that the edge polarization appears
or disappears. This works as the polarization is a $Z_2$ quantity.

Numerically, we need to maintain the continuity of the global phase of the occupied energy states with respect to $k_y$ along the reflection symmetric lines $k_x=0$ and $k_x=\pi$ for a fixed system parameter $\gamma$ and to maintain the continuity of the Berry phase of these states about $k_y$ with respect to $\gamma$.

To maintain the continuity of the global phase with respect to $k_y$, we first make the wave function $|u_{k_x^{*},k_y}^{m_x}\rangle$ continuous
between two neighboring points $(k_x^{*},k_y)$ and $(k_x^{*},k_y+\Delta k_y)$,
where $k_y=n\Delta k_y$ with $\Delta k_y=\frac{2\pi}{N_y}$ and $n=0,1,\cdots,(N_y-1)$. To achieve this,
we calculate the overlap between the two wave functions of neighboring
momenta $\langle u_{k_x^{*},k_y+\Delta k_y}^{m_x}|u_{k_x^{*},k_y}^{m_x}\rangle$, and eliminate the numerical phase difference by the transformation,
\begin{equation}
    |u_{k_x^{*},k_y+\Delta k_y}^{m_x}\rangle\rightarrow e^{i\phi_1}|u_{k_x^{*},k_y+\Delta k_y}^{m_x}\rangle,
    \end{equation}
where $\phi_1=\text{Im}\log \langle u_{k_x^{*},k_y+\Delta k_y}^{m_x}|u_{k_x^{*},k_y}^{m_x}\rangle$.
We repeat this process from $k_y=0$ to $k_y=2\pi-\Delta k_y$.
After that, we make the wave functions continuous
between $k_y=2\pi-\Delta k_y$ and $k_y=0$ by performing the following phase transformations
\begin{equation}
    |u_{k_x^{*},n\Delta k_y}^{m_x}\rangle\rightarrow e^{-i\phi_2n/N_y}|u_{k_x^{*},n\Delta k_y}^{m_x}\rangle,
    \end{equation}
where $\phi_2=\text{Im}\log \langle u_{k_x^{*},0}^{m_x}|u_{k_x^{*},2\pi-\Delta k_y}^{m_x}\rangle$.

To make sure that the Berry phase of each occupied band about $k_y$
along the reflection symmetric lines $k_x^*=0,\pi$ is continuous as $\gamma$ varies,
we numerically compute the Berry phase based on the following formula
\begin{equation}
\nu_y^{\pm1}(k_x^*=0,\pi)=\text{Im}\lim_{N_y\rightarrow\infty}\sum_{n=0}^{N_y-1}\log  \langle u_{k_x^*,(n+1)\Delta k_y}^{\pm1}|u_{k_x^*,n\Delta k_y}^{\pm1}\rangle
\end{equation}
and perform a suitable transformation
$|u_{k_x^{*},k_y=n\Delta k_y}^{m_x}\rangle\rightarrow \exp(-i2\pi p \frac{n}{N_y})|u_{k_x^{*},k_y=n\Delta k_y}^{m_x}\rangle$ with $p$ being an integer provided that the Berry phases between two neighboring $\gamma$ change by $-2 p \pi$.
The procedure is similar for calculation of the winding number $W_{\nu_y}^\epsilon$.

\section*{Appendix E: More phase diagrams for simplified models}

\setcounter{equation}{0} \setcounter{figure}{0} \setcounter{table}{0} %
\renewcommand{\theequation}{E\arabic{equation}} \renewcommand{\thefigure}{E%
\arabic{figure}} \renewcommand{\bibnumfmt}[1]{[#1]} \renewcommand{%
\citenumfont}[1]{#1}

\begin{figure*}[t]
\includegraphics[width=6.5in]{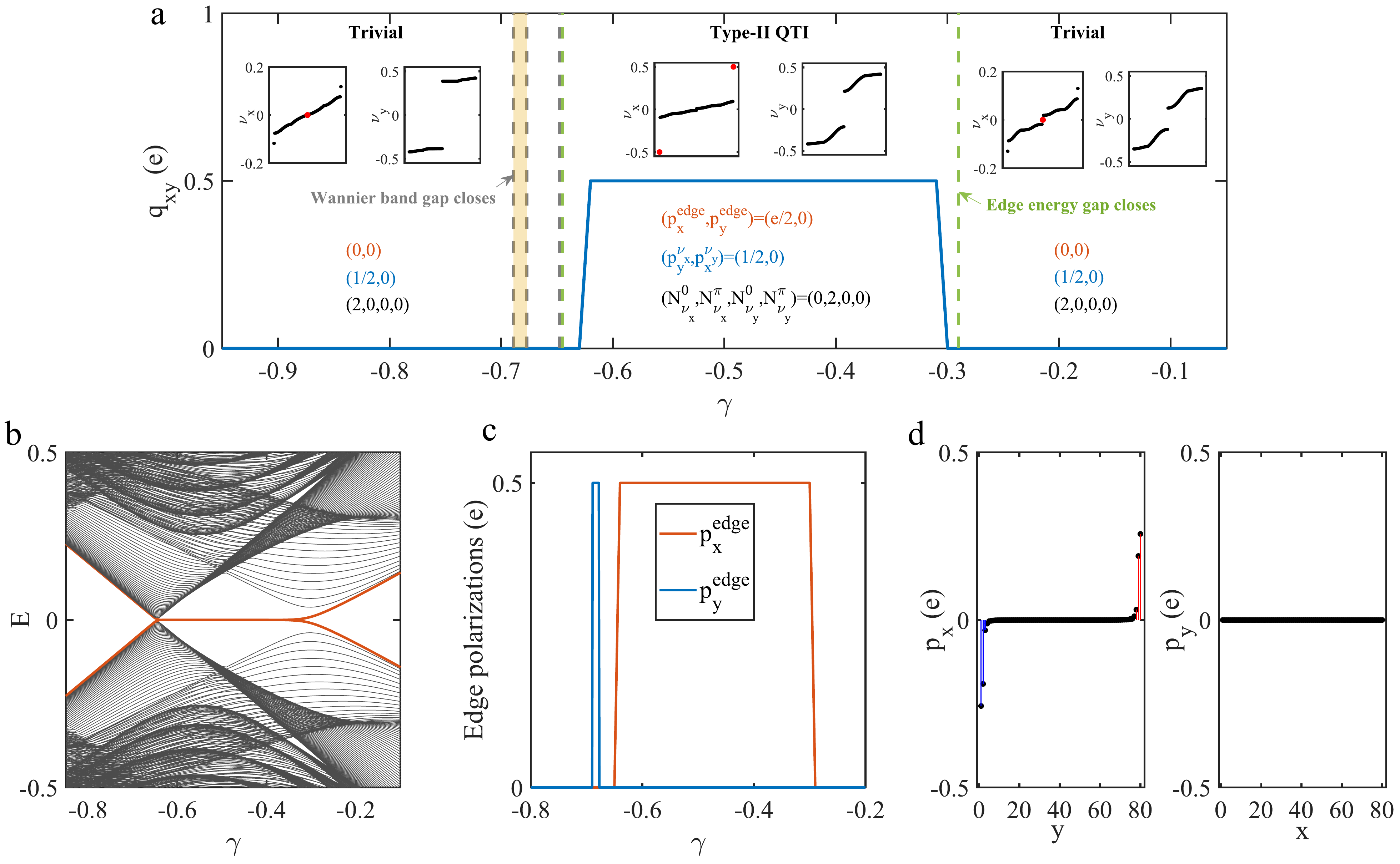}
  \caption{a, Phase diagram for the Hamiltonian $H_{II}$ (with $b_2=1.2$ and $g_0=-0.29$) versus $\gamma$, where the quadrupole moment is plotted as a blue line.
  It displays the topologically trivial insulator, the type-II QTI and the new phase with nonzero edge polarizations but without quadrupole moments
  and zero-energy corner modes (displayed in the light red region).
  The subsets display the same quantities as in Fig.~\ref{fig2}.
  The small discrepancy between the edge gap closing points and the quadrupole moment transition points
  is caused by the finite size effects for calculating the quadrupole moment in a finite size system with $N=240$.
  b, The energy spectrum with respect to $\gamma$ for open boundary conditions along all directions with zero-energy corner modes being highlighted by a red line.
  c, The edge polarizations calculated based on the formula (\ref{windingR}) by choosing a gauge such that $W_{\nu_x}^{\epsilon=\pi}(\gamma_0=-0.8)=W_{\nu_y}^{\epsilon=\pi}(\gamma_0=-0.8)=0$ because it is topologically
  trivial when $\gamma=-0.8$.
  d, The spatial distribution of the edge polarization in a type-II QTI with $\gamma=-0.4$. }
\label{figE1}
\end{figure*}

In the main text, we have demonstrated the existence of these new topological phenomena by studying systems with
the particle-hole symmetry. In this appendix, we will present phase diagrams for more models,
in particular, the models with the chiral symmetry.

We first show the phase diagram and topological properties for the Hamiltonian $H_{II}$
in Eq.~(\ref{SimHam}) with $b_2=1.2$ and $g_0=-0.29$ in Fig.~\ref{figE1}.
Similar to Fig.~\ref{fig5} in the main text, we can clearly see the existence of the type-II QTI in the
phase diagram. This phase is also a normal type-II QTI given that the Wannier spectrum $\nu_x$ has edge
states only at $\nu_x=\pm 1/2$. Besides this type-II phase, we also observe the new topological phase with only nonzero $p_y^{\textrm{edge}}$ in the light red region, which arises due to the Wannier gap closing
at $\nu_y=\pm 1/2$. We note that across $\gamma=-0.645$, the edge polarization $p_x^{\textrm{edge}}$ appears due to the
the $y$-normal edge energy gap closing rather than the Wannier gap closing. The Wannier gap closing happens at $\nu_x=0$ and thus
does not contribute to the edge polarizations.

Similarly, we map out the phase diagram for the Hamiltonian $H_{III}$ in Eq.~(\ref{HamSim2})
with $b_2=1.2$ and $g_0=0.65$ [see Fig.~\ref{figE2}]. This Hamiltonian preserves both the particle-hole
and chiral symmetry. Evidently, the phase diagram exhibits
the type-II QTI as well as the new topological phase with only quantized edge polarizations. By adding a term $\alpha \sin k_x\sigma_1\otimes\sigma_1$ that breaks
the particle-hole symmetry but preserves the chiral symmetry, we can still observe these topological phenomena,
indicating that these phases can arise for a system with the chiral symmetry.

Finally, in Fig.~\ref{figE3}, we provide the energy spectra, the Wannier spectrum, the quadrupole moment,
the edge polarizations and the winding number for the Hamiltonian
$H_{IV}+\alpha \sin k_x \sigma_1\otimes\sigma_1$ with $b_2=1.2$ and $\alpha=0.2$ that preserves the chiral symmetry
and $H_{IV}+\alpha \sin k_x \sigma_3\otimes\sigma_3$ with $b_2=1.2$ and $\alpha=0.2$ that preserves the particle-hole symmetry.
We can see that all these figures are very similar to Fig.~\ref{fig6} in the main text, further reflecting that the new topological phase can appear in a system with either the particle-hole symmetry or the chiral symmetry.

\begin{figure*}[t]
\includegraphics[width=6.5in]{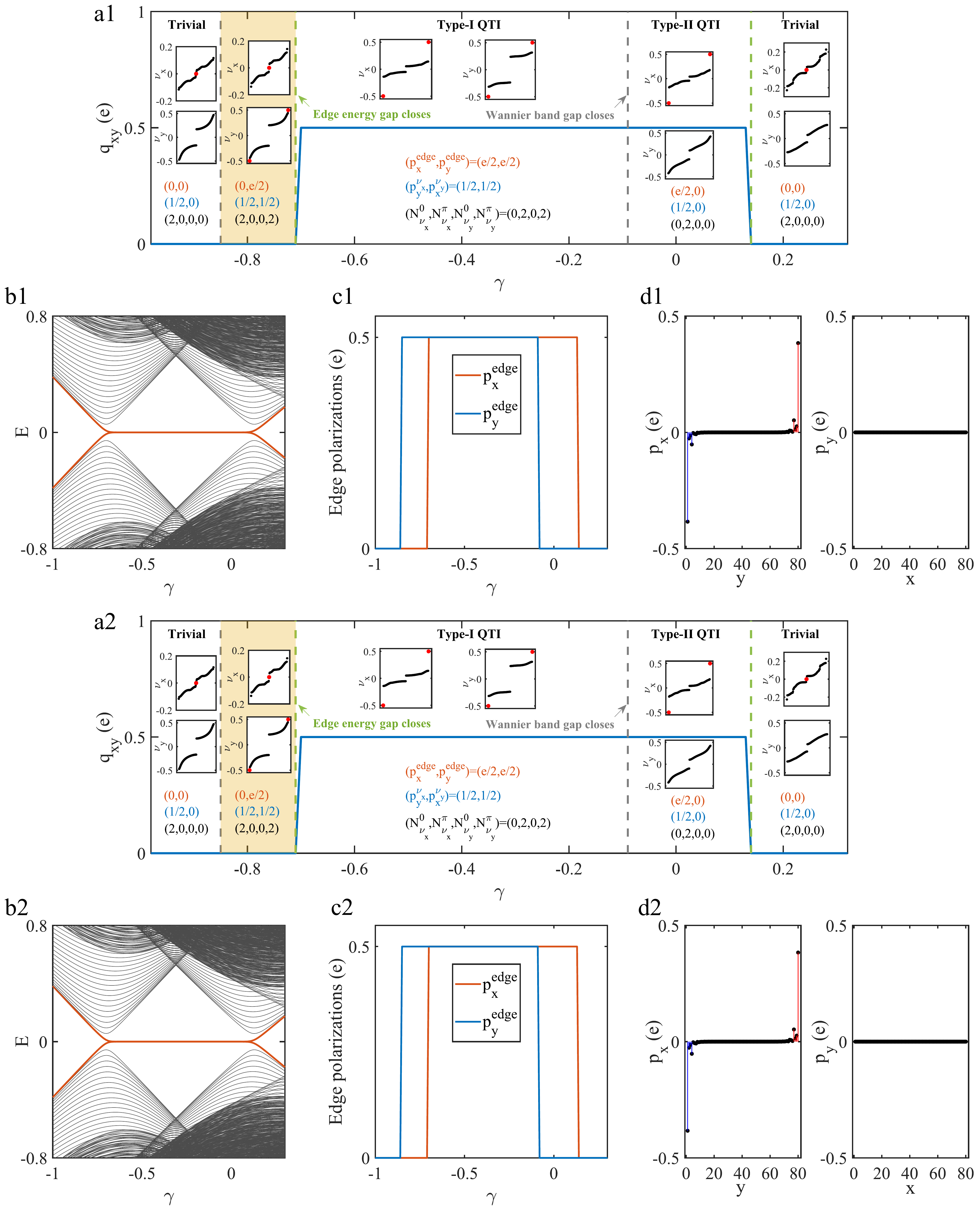}
  \caption{a1, Phase diagram for the Hamiltonian $H_{III}$ (with $b_2=1.2$ and $g_0=0.65$) versus $\gamma$, where the quadrupole moment is plotted as a blue line.
  The light red region shows the new topological phase with nonzero edge polarizations but without quadrupole moments
  and zero-energy corner modes.
  The subsets display the same quantities as in Fig.~\ref{fig2}.
  b1, The energy spectrum with respect to $\gamma$ for open boundary conditions along all directions with zero-energy corner modes being denoted by a red line.
  c1, The edge polarizations calculated based on the formula (\ref{windingR}) by choosing a gauge such that $W_{\nu_x}^{\epsilon=\pi}(\gamma_0=-1)=W_{\nu_y}^{\epsilon=\pi}(\gamma_0=-1)=0$ because it is topologically
  trivial when $\gamma=-1$.
  d1, The spatial profiles of the edge polarization in a type-II QTI.
  a2-d2 display the same quantities as a1-d1 but for the Hamiltonian $H_{III}+\alpha \sin k_x\sigma_1\otimes\sigma_1$ (with $b_2=1.2$,
  $g_0=0.65$ and $\alpha=0.2$). We use $N=80$ for evaluation of the quadrupole moment.
  }
\label{figE2}
\end{figure*}

\begin{figure*}[t]
\includegraphics[width=6.5in]{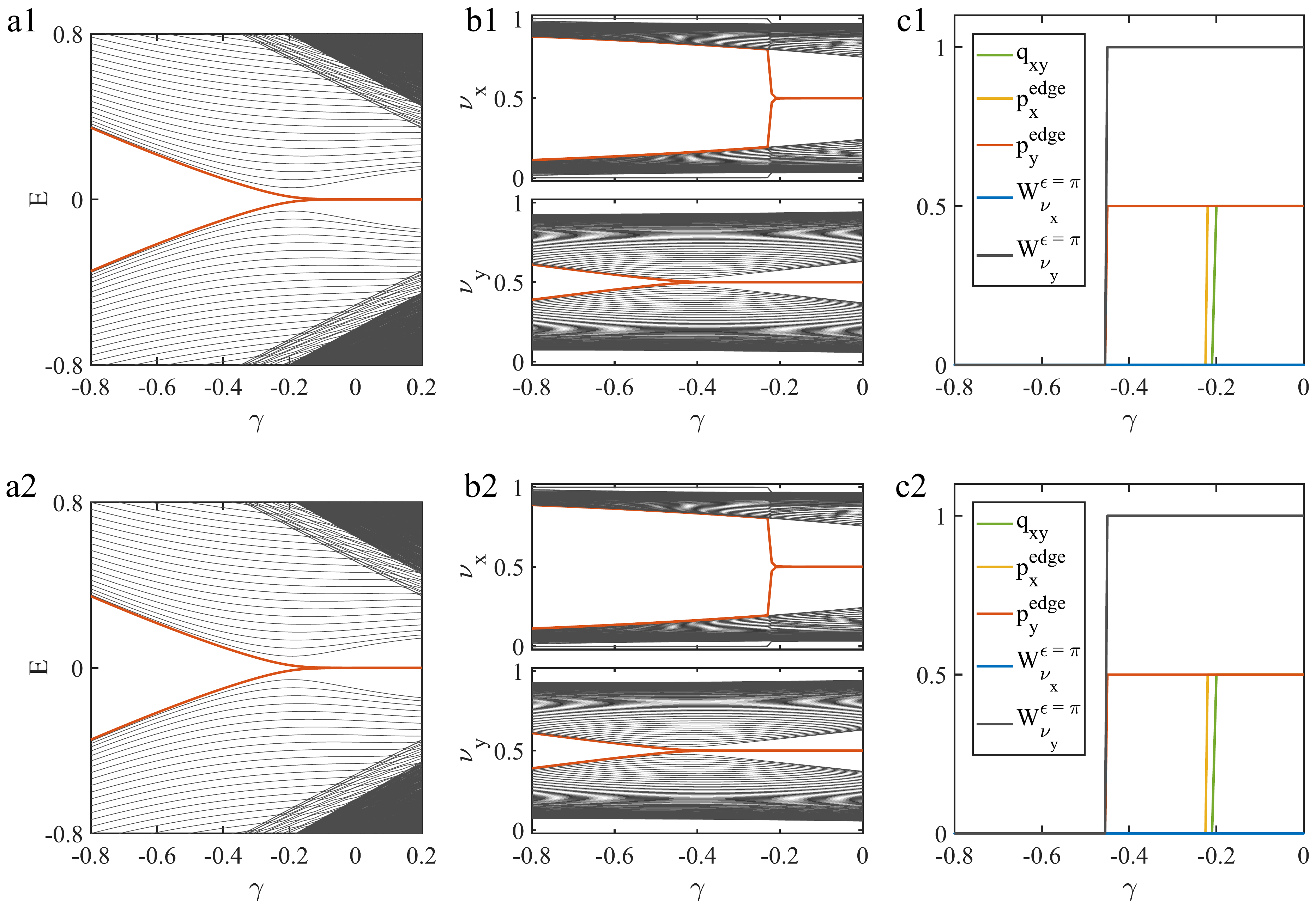}
  \caption{
  a1, a2, The energy spectrum versus $\gamma$ in a geometry with open boundaries along all directions.
  b1, b2, The Wannier band $\nu_x$ and $\nu_y$ versus $\gamma$.
  c1, c2, The quadrupole moment $q_{xy}$,
  the edge polarizations $p_{\mu}^{\textrm{edge}}$ ($\mu=x,y$) computed using the formula (\ref{windingR}) and the winding
  numbers $W_{\nu_\mu}^{\epsilon=\pi}$ ($\mu=x,y$) computed using the formula (\ref{WindFormula}) versus $\gamma$.
  When calculating the edge polarizations and the winding number, we choose a gauge
  such that $W_{\nu_x}^{\epsilon=\pi}(\gamma_0=-0.8)=W_{\nu_y}^{\epsilon=\pi}(\gamma_0=-0.8)=0$
  because the phase is topologically trivial when $\gamma=-0.8$.
  $q_{xy}$ and $p_x^{\textrm{edge}}$ are hidden behind the blue line when $\gamma<-0.22$. The small discrepancy between
  the quadrupole moment transition point and the edge gap closing point is caused by the finite size effects.
  Here we use $N=80$ for evaluation of the quadrupole moment.
  In a1-c1 and a2-c2, we consider the Hamiltonian $H_{IV}+\alpha \sin k_x \sigma_1\otimes\sigma_1$ with $b_2=1.2$ and $\alpha=0.2$,
  and the Hamiltonian $H_{IV}+\alpha \sin k_x \sigma_3\otimes\sigma_3$ with $b_2=1.2$ and $\alpha=0.2$, respectively.
  }
\label{figE3}
\end{figure*}

\section*{Appendix F: Energy and Wannier spectra during a pumping process}

\setcounter{equation}{0} \setcounter{figure}{0} \setcounter{table}{0} %
\renewcommand{\theequation}{F\arabic{equation}} \renewcommand{\thefigure}{F%
\arabic{figure}} \renewcommand{\bibnumfmt}[1]{[#1]} \renewcommand{%
\citenumfont}[1]{#1}

In the main text, we have shown the transport of the corner charge, the quadrupole moment and the
edge polarizations as system parameters vary over an entire cycle. In this appendix, we present the energy spectrum in
an open boundary geometry (see the first column in Fig.~\ref{figF1}) and the Wannier spectrum in a cylinder geometry
(see the last two columns in Fig.~\ref{figF1}) during this full process.
The first row in the figure corresponds to the pump from
a topologically trivial phase to a type-II AQTI and then back to the trivial phase, while the
second row to the pump from a type-II AQTI to a type-I AQTI and then back to the original type-II phase.
It is clear that we can divide the energy spectrum and Wannier spectrum into two bands with gaps between them.
In the former scenario, we see that the corner states connect the lower energy band to the higher
one as time evolves, which shows the chiral property of the corner states if time is regarded as a third momentum;
this agrees well with the quantized transport of corner charges shown in the main text.
For the Wannier spectrum $\nu_x$, the edge states exist inside both the gaps around 0 and $1/2$
and these states connect two bands of the Wannier spectrum as time progresses, similar to the
energy spectrum. However, for the Wannier spectrum $\nu_y$, we do not see the presence
of the edge states connecting two Wannier bands, consistent with the zero net transport for
the edge polarization $p_y^{\textrm{edge}}$. In the second pumping scenario, while the corner states exist
during the full cycle, they do not connect the two energy bands, and thus are not "chiral",
which is consistent with the zero transport of the corner charges. Similar band patterns
occur in the Wannier spectrum $\nu_x$. However, for the Wannier spectrum $\nu_y$, we find the "chiral"-like edge
states, which explains the presence of the net transport of the edge polarization $p_y^{\textrm{edge}}$
shown in the main text.

\begin{figure*}[t]
\includegraphics[width=\textwidth]{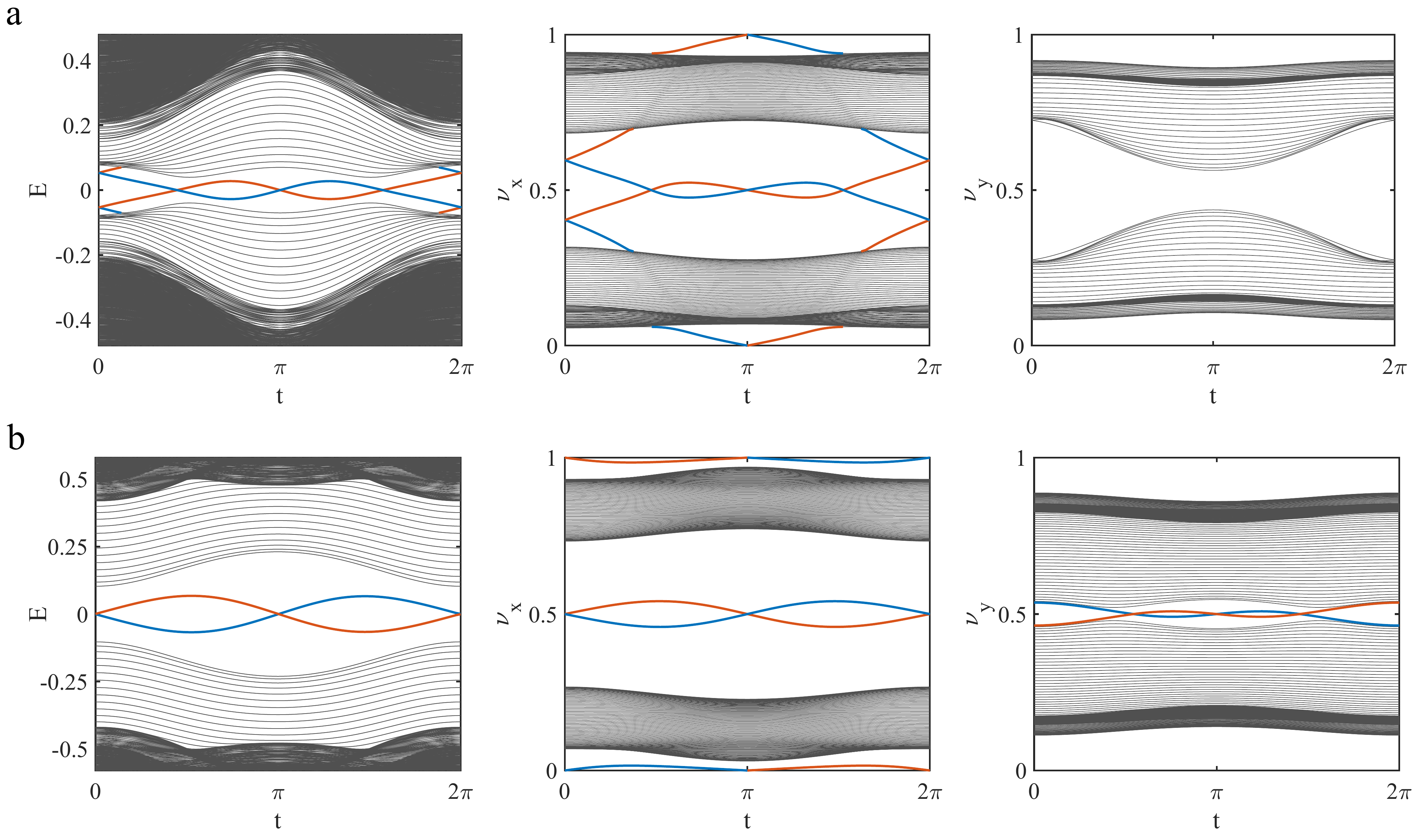}
  \caption{The energy spectrum and Wannier spectrum during the pumping process.
  a, The cycle from a topologically trivial phase to a type-II AQTI and back to the trivial phase.
  b, The cycle from a type-II AQTI to a type-I AQTI and back to the type-II AQTI.
  For both a and b, the first column corresponds to the energy spectrum with open boundary along $x$ and $y$ and
  the second (third) column correspond to the Wannier spectrum $\nu_x$ ($\nu_y$) with open boundary along $y$ ($x$)
  and periodic boundary along $x$ ($y$).
   }
\label{figF1}
\end{figure*}

\section*{Appendix G: Entanglement spectra}

\setcounter{equation}{0} \setcounter{figure}{0} \setcounter{table}{0} %
\renewcommand{\theequation}{G\arabic{equation}} \renewcommand{\thefigure}{G%
\arabic{figure}} \renewcommand{\bibnumfmt}[1]{[#1]} \renewcommand{%
\citenumfont}[1]{#1}

In this appendix, we will define the entanglement spectra and show how to calculate them. Let us first partition a system into two subsystems labelled by $A$ and $B$, respectively.
The reduced density matrix for the subsystem $A$ can be obtained by performing partial trace over the subsystem $B$, that is,
\begin{equation}
\rho_A = {\rm Tr}_B \ket{\Psi} \bra{\Psi} = \frac{e^{-H_A}}{Z_A},
\end{equation}
where $|\Psi\rangle$ is a many-body ground state, $H_A$ is defined as a Hamiltonian corresponding to the reduced density matrix $\rho_A$ and $Z_A={\rm Tr} e^{-H_A}$. The entanglement spectrum refers to the eigenvalues of $\rho_A$~\cite{Haldane2008PRL,Pollmann2010PRB,Fidkowski2010PRL}.

In the single-particle case, the entanglement spectrum can be determined by
diagonalizing the correlation matrix~\cite{Peschel2003}
\begin{equation}
[C_A]_{ij} = \left \langle \hat{c}^{\dagger}_i \hat{c}_j  \right \rangle,
\end{equation}
where $i,j\in A$. If we diagonalize $H_A$ as $H_A=\sum_n \varepsilon_n \hat{a}_n^\dagger \hat{a}_n$,
the single-particle entanglement spectrum $\xi_n$ and $\varepsilon_n$ are related by
\begin{equation}
\xi_n=\frac{1}{e^{\varepsilon_n}+1}.
\end{equation}
Clearly, the entanglement spectrum $\xi_n=0.5$ corresponds to an entanglement zero mode $\varepsilon_n=0$. In the main text,
we show the entanglement spectrum $\textrm{ES}_x$ and $\textrm{ES}_y$ obtained by tracing out the right part and top part
of a system as shown in Fig.~\ref{figG1}(a) and (b), respectively.
%we plot $\xi_n$ as entanglement
%spectra in the main text so that the vanishing of the gap at $\xi=0.5$ is equivalent to the gap vanishing
%at $\varepsilon=0$.

To characterize the edge modes of quadrupole topological insulators, which are localized at
the corners, nested entanglement spectra are introduced~\cite{Bernevig2018SciAdv},
as detailed in the following.

For a 2D quadrupole insulator, the Hamiltonian can be diagonalized as
\begin{equation}
H = \sum_{k_x, k_y, n} E_{k_x,k_y}^n \hat{f}^{\dagger}_{k_x ,k_y, n} \hat{f}_{k_x ,k_y ,n},
\end{equation}
where $\hat{f}^{\dagger}_{k_x ,k_y, n} = \sum_{\alpha}[u_{k_x,k_y}^n]^{\alpha}\ \hat{c}^{\dagger}_{k_x,k_y,\alpha}$
and $|u_{k_x,k_y}^n\rangle$ is the $n$th eigenstate of $H({\bf k})$ corresponding to the eigenenergy
$E_{k_x,k_y}^n$. Suppose that the system has $L_x\times L_y$ unit cells with $L_x=2N_x$ and $L_y=2N_y$. We first partition the
system into two subsystems $A$ and $B$ along the $x$ direction:
$A=\{(x,y)|1\leq x \leq N_x , 1\leq y \leq L_y\}$ and
$B=\{(x,y)|N_x < x \leq L_x , 1\leq y \leq L_y\}$.
The correlation matrix in the subsystem $A$ is given by
\begin{equation}
[C_{A,k_y}]_{x \alpha,x'\alpha'} = \left\langle \Psi_G|\hat{c}^{\dagger}_{x\alpha,k_y} \hat{c}_{x'\alpha', k_y}|\Psi_G \right\rangle
=\frac{1}{L_x} \sum_{k_x} e^{i k_x (x-x')} \sum_{n\in occ}
[U_{k_x,k_y}]_{\alpha n} [U_{k_x,k_y}^\dagger]_{n \alpha'},
\end{equation}
where $|\Psi_G\rangle$ is the many-body ground state of $H$ and $U_{k_x,k_y}$ consists of
all occupied eigenstates $|u_{k_x,k_y}^n\rangle$ as column vectors.
Diagonalizing $C_{A,k_y}$ yields the eigenvalues $\xi_{k_y}^m$ and eigenvectors $|v_{k_y}^m\rangle$
of $C_{A,k_y}$. The Hamiltonian $H_A$ of the reduced density matrix $\rho_A$ can be diagonalized as
\begin{equation}
H_A = \sum_{k_y,m}\log(\frac{1}{\xi_{k_y}^m} - 1)\hat{g}^{\dagger}_{k_y,m} \hat{g}_{k_y,m},
\end{equation}
where $g^{\dagger}_{k_y,m} = \sum_{\alpha x} [v_{k_y}^m]_{x\alpha } \  c^{\dagger}_{x\alpha,k_y}$.
The subsystem $A$ is further partitioned into two parts along the $y$ direction [as shown in Fig.~\ref{figG1}(c)]:
$A_1=\{ (x,y)| 1\leq x \leq N_x , 1\leq y \leq N_y \}$ and
$A_2=\{ (x,y)| 1\leq x \leq N_x , N_y < y \leq L_y \}$. The correlation matrix in the region $A_1$ is given by
\begin{equation}
[C_{A_1}]_{ x y\alpha,x^\prime y^\prime \alpha^\prime} = \left\langle \Psi_{A}|c^{\dagger}_{xy\alpha} c_{x^\prime y^\prime\alpha^\prime} |\Psi_{A} \right\rangle
= \frac{1}{L_y} \sum_{k_y} e^{i k_y(y-y')} \sum_{m\in occ}
[V_{k_y}]_{x\alpha,m} [V_{k_y}^{\dagger}]_{m,x'\alpha'},
\end{equation}
where $|\Psi_{A}\rangle$ is the many-body ground state of $H_A$ and the average is over all
occupied states of $H_A$. $V_{k_y}$ is made up of all occupied eigenvectors $|v_{k_y}^m\rangle$ as column vectors.
We can determine the nested entanglement spectrum by calculating the eigenvalues of $C_{A_1}$.

\begin{figure*}[t]
\includegraphics[width=5in]{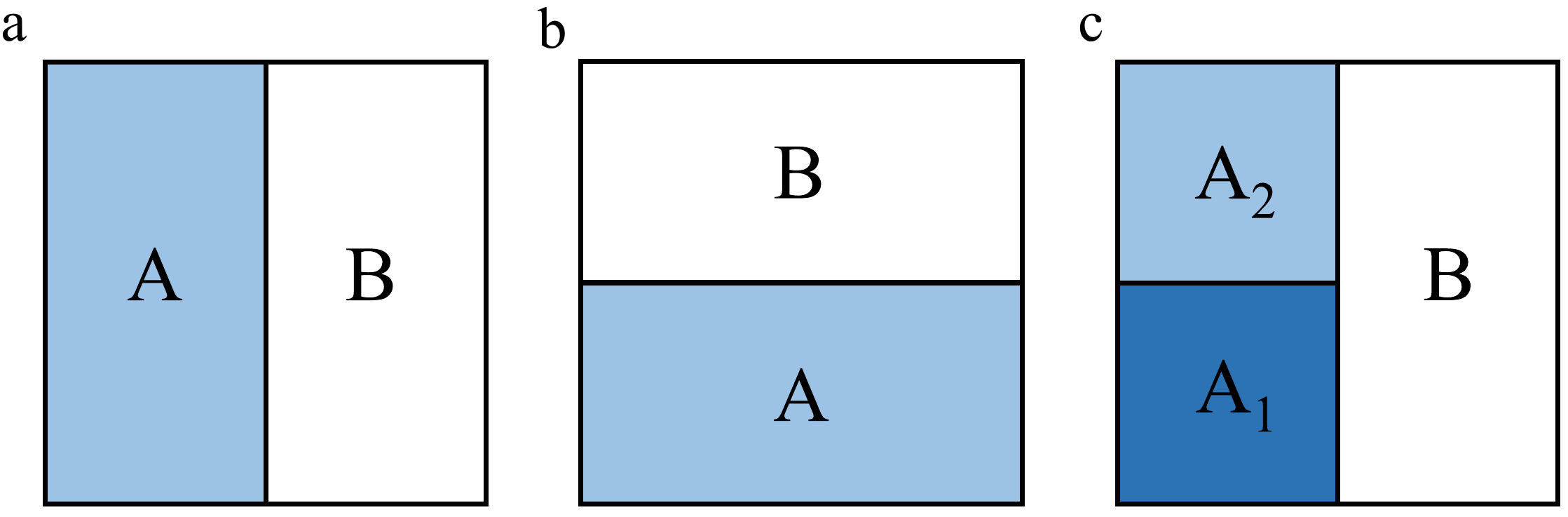}
  \caption{The entanglement spectrum $\textrm{ES}_x$ in a and $\textrm{ES}_y$ in b are evaluated by partitioning a system into
  two subsystems $A$ and $B$ and tracing out the subsystem $B$. c, The nested entanglement
  spectrum ${\textrm{ES}_{xy}}$ is obtained by further tracing out the subsystem $A_2$.
  }
\label{figG1}
\end{figure*}

\section*{Appendix H: Experimental realization}

\setcounter{equation}{0} \setcounter{figure}{0} \setcounter{table}{0} %
\renewcommand{\theequation}{H\arabic{equation}} \renewcommand{\thefigure}{H%
\arabic{figure}} \renewcommand{\bibnumfmt}[1]{[#1]} \renewcommand{%
\citenumfont}[1]{#1}

In this appendix, we discuss how to realize our Hamiltonian in electric circuits, in which the SSH model~\cite{Thomale2018CP},
Weyl semimetal~\cite{Simon2019PRB} and type-I quadrupole topological insulator~\cite{Thomale2018NP} have been experimentally achieved.
Let us consider an electrical network consisting of many nodes simulating sites in a tight-binding model.
For each node $m$ in the circuit, suppose that $I_m$ is the external current flowing into this node and
$V_m$ is the voltage at this node with respect to the ground, according to Kirchhoff's law, we have
\begin{equation}
I_m=\sum_n I_{mn}+I_{m0}=\sum_{n}X_{mn}(V_m-V_n)+X_m V_m,
\end{equation}
where $I_{mn}$ and $I_{m0}$ are the current flowing from node $m$ to $n$ and from node $m$ to the ground, respectively.
$X_{mn}=1/Z_{mn}$ is the admittance between node $m$ and $n$
with $Z_{mn}$ the corresponding impedance, and $X_{m}=1/Z_{m}$ is the admittance between node $m$ and the ground.
Writing $I_{m}$ and $V_{m}$ in the form of column vectors, we have
\begin{equation}
{\bf I} = J(\omega) {\bf V},
\end{equation}
where $J(\omega)$ denotes the circuit Laplacian with $\omega$ being the AC frequency of the input current.

By connecting appropriate capacitors, inductors and negative impedance converters with current inversion (INIC)~\cite{Thomale2019PRL,ChenBook} as shown in Fig.~\ref{fig10} in the main text, we can achieve a Laplacian simulating our
Hamiltonian, i.e., $J=iH$.
The sign of the resistance of a INIC depends on how it is connected. For example,
for the configuration shown in Fig.~\ref{fig10} in the main text, the current from node $1$ to node $2$ within a unit cell
is determined by $I_{12}=(V_1-V_2)/(-R)$ corresponding to a negative resistance while the current from
node $2$ to node $1$ determined by $I_{21}=(V_2-V_1)/R$ corresponding to a positive resistance.
To eliminate unnecessary onsite terms, we also need to add onsite impedances $Z_{m}'$ as shown in Fig.~\ref{fig10} in the main text.
For the system with periodic boundary conditions, the values of $Z_{a=1,2,3,4}'$ are as follows,
\begin{align}
&\frac{1}{Z_1'}=\frac{1}{Z_4'}= (2t_1'-2t_2-2t_2'+\Delta) + i (4t_1+2t_1'+6t_2-2t_2') \\
&\frac{1}{Z_2'}=\frac{1}{Z_3'}= (-2t_1'+2t_2+2t_2'-\Delta) + i (2t_1'+2t_2-6t_2'+2\gamma).
\end{align}

We follow the experimental approach to directly measure the Green's function~\cite{Thomale2019PRB}.
Specifically, we apply an input current $I_n$ at one node $n$ of the circuit and measure the voltage
$V_m^{(n)}$ at node $m$, giving us the single-point impedance,
\begin{equation}
G_{mn}=V_m^{(n)}/I_n,
\end{equation}
which is a Green's function of the system Hamiltonian, i.e.,
\begin{equation}
G_{mn}=(J^{-1})_{mn}.
\end{equation}
All the information, such as the energy spectrum and eigenstates, can be extracted from the Green's function.

To achieve our model, we consider an electric circuit composed of $N_xN_y$ unit cells;
each unit cell contains $4$ nodes and each node is labeled by $({\bf R},\alpha)$.
%If we consider a geometry with open boundaries along both $x$ and $y$, we need to apply
%a current in one node and measure the voltages of all $4N_xN_y$ nodes and repeat
%$4N_xN_y$ times for different nodes with an input current.
If we consider a torus geometry with periodic boundaries along both $x$ and $y$,
we only need to apply a current in one node $({\bf 0},\beta)$ and measure the impedance $G_{\alpha\beta}({\bf R})$ between the node $(\bf R,\alpha)$ and node $({\bf 0},\beta)$ for all the different nodes $(\bf R,\alpha)$, leading to the Green's function in momentum space,
\begin{equation}
G_{\alpha\beta}({\bf k})=\sum_{\bf R} G_{\alpha\beta}({\bf R}) \exp(-i{\bf k}\cdot{\bf R}),
\end{equation}
where $\bf k$ is the momentum and $\sum_{\bf R}$ is the sum over all unit cells. Similarly,
in a cylinder geometry, e.g., with open boundaries along $y$ and periodic boundaries along $x$,
only the impedance $Z_{\alpha\beta}((R_x,R_y),(0,R_y^\prime))=G_{\alpha\beta}((R_x,R_y),(0,R_y^\prime))$
between the node $((R_x,R_y),\alpha)$ and node
$((0,R_y^\prime),\beta)$ is required to be probed, yielding the Green's function
\begin{equation}
G_{(R_y,\alpha)(R_y^\prime,\beta)}({\bf k})=\sum_{R_x} G_{\alpha\beta}((R_x,R_y),(0,R_y^\prime)) \exp(-i{k_x}{R_x}).
\end{equation}
Once the Green's functions in the torus and cylinder geometry are measured, the quadrupole moment and edge polarizations can be obtained.

For the system with open boundary conditions, the presence of zero-energy corner states
of the Hamiltonian will give rise to the divergence of the two-point impedance near the corners.
Thus we can measure the corner modes through the measurement of the resonance of the impedance
between two neighbouring nodes near the corners at the resonance frequency~\cite{Thomale2018NP}.

\end{widetext}


\begin{thebibliography}{99}
%higher order
\bibitem{Taylor2017Science}
W. A. Benalcazar, B. A. Bernevig, and T. L Hughes,
\textit{Quantized electric multipole insulators,}
Science \textbf{357}, 61-66 (2017).

\bibitem{Taylor2017PRB}
W. A. Benalcazar, B. A. Bernevig, and T. L. Hughes,
\textit{Electric multipole moments, topological multipole moment pumping, and chiral hinge states in crystalline insulators,}
{Phys. Rev. B} \textbf{96}, 245115 (2017).

\bibitem{Fritz2012PRL}{M. Sitte, A. Rosch, E. Altman, and L. Fritz,
\textit{Topological Insulators in Magnetic Fields: Quantum Hall Effect and Edge Channels with a Nonquantized
$\theta$ Term,}
{Phys. Rev. Lett.} \textbf{108,} 126807 (2012).}

\bibitem{ZhangFan2013PRL}{F. Zhang, C. L. Kane, and E. J. Mele,
\textit{Surface State Magnetization and Chiral Edge States on Topological Insulators,}
{Phys. Rev. Lett.} \textbf{110,} 046404 (2013).}

\bibitem{Slager2015PRB}
R.-J. Slager, L. Rademaker, J. Zaanen, L. Balents,
\textit{Impurity-bound states and Green's function zeros as local signatures of topology,}
{Phys. Rev. B} \textbf{92,} 085126 (2015).

\bibitem{FangChen2017PRL}
Z. Song, Z. Fang, and C. Fang,
\textit{$(d-2)$-dimensional edge states of rotation symmetry protected topological states,}
{Phys. Rev. Lett.} \textbf{119}, 246402 (2017).

\bibitem{Brouwer2017PRL}
J. Langbehn, Y. Peng, L. Trifunovic, F. von Oppen, and P. W. Brouwer,
\textit{Reflection-symmetric second-order topological insulators and superconductors,}
{Phys. Rev. Lett.} \textbf{119}, 246401 (2017).

\bibitem{Bernevig2018SciAdv}
F. Schindler, A. M. Cook, M. G. Vergniory, Z. Wang, S. S. P. Parkin, B. A. Bernevig, and T. Neupert,
\textit{Higher-order topological insulators,}
{Sci. Adv.} \textbf{4}, eaat0346 (2018).

\bibitem{Neupert2018NP}
F. Schindler, Z. Wang, M. G. Vergniory, A. M. Cook, A. Murani,
S. Sengupta, A. Y. Kasumov, R. Deblock, S. Jeon, I. Drozdov,
H. Bouchiat, S. Gu\'{e}ron, A. Yazdani, B. A. Bernevig, and T. Neupert,
\textit{Higher-order topology in bismuth,}
{Nat. Phys.}, \textbf{14}, 918-924 (2018).

\bibitem{Wan2017arXiv}
Y. Xu, R. Xue, and S. Wan,
\textit{Topological corner states on kagome lattice based chiral higher-order topological insulator,}
arXiv:1711.09202 (2017).

\bibitem{Ryu2018PRB}
H. Shapourian, Y. Wang, and S. Ryu,
\textit{Topological crystalline superconductivity and second-order topological superconductivity in nodal-loop materials,}
{Phys. Rev. B} \textbf{97}, 094508 (2018).

\bibitem{Taylor2018PRB}
M. Lin and T. L. Hughes,
\textit{Topological quadrupolar semimetals,}
{Phys. Rev. B} \textbf{98}, 241103(R) (2018).

\bibitem{Ezawa2018PRL}
M. Ezawa,
\textit{Higher-order topological insulators and semimetals on the breathing kagome and pyrochlore lattices,}
{Phys. Rev. Lett.} \textbf{120}, 026801 (2018).

\bibitem{Khalaf2018PRB}
E. Khalaf,
\textit{Higher-order topological insulators and superconductors protected by inversion symmetry,}
{Phys. Rev. B} \textbf{97}, 205136 (2018).

\bibitem{Brouwer2018PRB}
M. Geier, L. Trifunovic, M. Hoskam, and P. W. Brouwer,
\textit{Second-order topological insulators and superconductors with an order-two crystalline symmetry,}
{Phys. Rev. B}, \textbf{97}, 205135 (2018).

\bibitem{Fulga2018PRB}
S. Franca, J. van den Brink, and I. C. Fulga,
\textit{An anomalous higher-order topological insulator,}
{Phys. Rev. B} \textbf{98}, 201114(R) (2018).

\bibitem{WangZhong2018PRL}
Z. Yan, F. Song, and Z. Wang,
\textit{Majorana corner modes in a high-temperature platform,}
{Phys. Rev. Lett.} \textbf{121}, 096803 (2018).

\bibitem{ZhangFan2018PRL}
Q. Wang, C.-C. Liu, Y.-M. Lu, and F. Zhang,
\textit{ High-temperature majorana corner states,}
{Phys. Rev. Lett.} \textbf{121}, 186801 (2018).

\bibitem{Roy2019PRB}
D. C{\u{a}}lug{\u{a}}ru, V. Juri{\v{c}}i{\'c}, and B. Roy,
\textit{Higher-order topological phases: A general principle of construction,}
{Phys. Rev. B} \textbf{99}, 041301(R) (2019).

\bibitem{Brouwer2019PRX}
L. Trifunovic and P. W. Brouwer,
\textit{Higher-order bulk-boundary correspondence for topological crystalline phases,}
{Phys. Rev. X} \textbf{9}, 011012 (2019).

\bibitem{Fengliu2019PRL}
F. Liu, H.-Y. Deng, and K. Wakabayashi,
\textit{Helical Topological Edge States in a Quadrupole Phase,}
{Phys. Rev. Lett.} \textbf{122,} 086804 (2019).

\bibitem{SYang2019PRL}
X.-L. Sheng, C. Chen, H. Liu, Z. Chen, Z.-M. Yu, Y. X. Zhao and S. A. Yang,
\textit{Two-Dimensional Second-Order Topological Insulator in Graphdiyne,}
{Phys. Rev. Lett.} 123, 256402 (2019).

\bibitem{Kai2019arxiv}
H. Li and K. Sun,
\textit{Pfaffian formalism for higher-order topological insulators,}
arXiv: 1906.03675 (2019).

%experiment
\bibitem{Huber2018Nature}
M. Serra-Garcia, V. Peri, R. S{\"u}sstrunk, O. R. Bilal, T. Larsen, L. G. Villanueva, and S. D. Huber,
\textit{Observation of a phononic quadrupole topological insulator,}
{Nature} \textbf{555}, 342-345 (2018).

\bibitem{Bahl2018Nature}
C. W. Peterson, W. A. Benalcazar, T. L. Hughes, and G. Bahl,
\textit{A quantized microwave quadrupole insulator with topologically protected corner states,}
{Nature} \textbf{555}, 346-350 (2018).

\bibitem{Thomale2018NP}
S. Imhof, C. Berger, F. Bayer, J. Brehm, L. W. Molenkamp, T. Kiessling, F. Schindler, C. H. Lee, M. Greiter, T. Neupert, and R. Thomale,
\textit{Topolectrical-circuit realization of topological corner modes,}
{Nat. Phys.} \textbf{14}, 925-929 (2018).


%\bibitem{footnote}{In calculation, we need to add a small $\delta$ to break the reflection symmetry so that
%the polarization is not exactly $1/2$, so that we can distinguish which one is which one.}


%edge wannier correspondence
\bibitem{Khalaf2019arXiv}
E. Khalaf, W. A. Benalcazar, T. L. Hughes, and R. Queiroz,
\textit{Boundary-obstructed topological phases,}
arXiv:1908.00011 (2019).

\bibitem{Klich2011PRL}
L. Fidkowski, T. S. Jackson, and I. Klich,
\textit{Model characterization of gapless edge modes of topological insulators using intermediate brillouin-zone functions,}
{Phys. Rev. Lett.} \textbf{107}, 036601 (2011).

%floquet
\bibitem{Levin2013PRX}
M. S. Rudner, N. H. Lindner, E. Berg, and M. Levin,
\textit{Anomalous edge states and the bulk-edge correspondence for periodically driven two-dimensional systems,}
{Phys. Rev. X} \textbf{3}, 031005 (2013).

\bibitem{YDChong2018PRL}
D. Leykam, S. Mittal, M. Hafezi, and Y. D. Chong,
\textit{Reconfigurable Topological Phases in Next-Nearest-Neighbor Coupled Resonator Lattices,}
Phys. Rev. Lett. \textbf{121}, 023901 (2018).

\bibitem{Khanikaev2020NatPho}
M. Li, D. Zhirihin, M. Gorlach, \textit{et al.}
\textit{Higher-order topological states in photonic kagome crystals with long-range interactions,}
Nat. Photonics \textbf{14}, 89 (2020).

\bibitem{Syzranov2014NC}{S. V. Syzranov, M. L. Wall, V. Gurarie and A. M. Rey,
\textit{Spin¨Corbital dynamics in a system of polar molecules},
Nat. Commun. \textbf{5}, 5391 (2014).}

\bibitem{Peter2015PRA}{D. Peter, N. Y. Yao, N. Lang, S. D. Huber, M. D. Lukin, and H. P. B\"{u}chler,
\textit{Topological bands with a Chern number C=2 by dipolar exchange interactions},
Phys. Rev. A \textbf{91}, 053617 (2015).}

\bibitem{Browaeys2016JPB}{A. Browaeys, D. Barredo, and T. Lahaye,
\textit{Experimental investigations of dipole¨Cdipole interactions between a few Rydberg atoms},
J. Phys. B \textbf{49}, 152001 (2016).}

\bibitem{Weber2018QST}{S. Weber, S. de L\'{e}s\'{e}leuc, V. Lienhard, D. Barredo, T. Lahaye, A. Browaeys, and H. P. B\"{u}chler,
\textit{Topologically protected edge states in small Rydberg systems},
Quantum Science and Technology \textbf{3}, 044001 (2018).}

\bibitem{Leseleuc2019Science}{S. de L\'{e}s\'{e}leuc, V. Lienhard, P. Scholl, D. Barredo, S. Weber, N. Lang, H. P. B\"{u}chler, T. Lahaye, A. Browaeys,
\textit{Observation of a symmetry-protected topological phase of interacting bosons with Rydberg atoms},
Science \textbf{365}, 775-780 (2019).}


\bibitem{Menard2015NP}
G. C. Menard, \textit{et al.},
\textit{Coherent long-range magnetic bound states in a superconductor,}
Nat. Phys. \textbf{11,} 1013 (2015).

\bibitem{Menard2017NC}
G.C. M\'enard, \textit{et al.}
\textit{Two-dimensional topological superconductivity in Pb/Co/Si(111),}
Nat. Commun. \textbf{8,} 2040 (2017).

\bibitem{Ojanen2015PRL}
J. R\"ontynen and T. Ojanen,
\textit{Topological Superconductivity and High Chern Numbers
in 2D Ferromagnetic Shiba Lattices,}
Phys. Rev. Lett. \textbf{114,} 236803 (2015).

\bibitem{Franke2018PSS}
B. W. Heinrich, J. I. Pascual, and K. J. Franke,
\textit{Single magnetic adsorbates on s-wave superconductors,}
\textit{Prog. Surf. Sci.} \textbf{93,} 1 (2018).

\bibitem{Holthaus2005PRL}
A. Eckardt, C. Weiss, and M. Holthaus,
\textit{Superfluid-Insulator Transition in a Periodically Driven Optical Lattice,}
Phys. Rev. Lett. \textbf{95}, 260404 (2005).

\bibitem{Smith2011PRA}
M. DiLiberto, O. Tieleman, V. Branchina, and C. M. Smith,
\textit{Finite-momentum Bose-Einstein condensates in shaken two-dimensional square optical lattices,}
Phys. Rev. A \textbf{84,} 013607 (2011).

\bibitem{Sengstock2012PRL}
J. Struck, C. \"Olschl\"ager, M. Weinberg, P. Hauke, J. Simonet, A. Eckardt, M. Lewenstein, K. Sengstock, and P. Windpassinger,
\textit{Tunable Gauge Potential for Neutral and Spinless Particles in Driven Optical Lattices,}
Phys. Rev. Lett. \textbf{108,} 225304 (2012).


\bibitem{Rigol2015NC}{L. D'Alessio and M. Rigol,
\textit{Dynamical preparation of Floquet Chern insulators,}
Nat. Commun. \textbf{6,} 8336 (2015).}

\bibitem{Cooper2015PRL}{M. D. Caio, N. R. Cooper, and M. J. Bhaseen,
\textit{Quantum Quenches in Chern Insulators,}
Phys. Rev. Lett. \textbf{115,} 236403 (2015).}

\bibitem{Vajna2015PRB}{S. Vajna and B. D\'{o}ra,
\textit{Topological classification of dynamical phase transitions,}
Phys. Rev. B \textbf{91,} 155127 (2015).}

\bibitem{Heyl2016PRB}{J. C. Budich and M. Heyl,
\textit{Dynamical topological order parameters far from equilibrium,}
Phys. Rev. B \textbf{93,} 085416 (2016).}

\bibitem{Huang2016PRL}{Z. Huang and A. V. Balatsky,
\textit{Dynamical Quantum Phase Transitions: Role of Topological Nodes in Wave Function Overlaps,}
Phys. Rev. Lett. \textbf{117,} 086802 (2016).}

\bibitem{Hu2016PRL}{Y. Hu, P. Zoller, and J. C. Budich,
\textit{Dynamical Buildup of a Quantized Hall Response from Nontopological States,}
Phys. Rev. Lett. \textbf{117,} 126803 (2016).}

\bibitem{Refael2016PRL}{J. H. Wilson, J. C.W. Song, and G. Refael,
\textit{Remnant Geometric Hall Response in a Quantum Quench,}
Phys. Rev. Lett. \textbf{117,} 235302 (2016).}

\bibitem{Zhai2017PRL}{C. Wang, P. Zhang, X. Chen, J. Yu, and H. Zhai,
\textit{Scheme to Measure the Topological Number of a Chern Insulator from Quench Dynamics,}
Phys. Rev. Lett. \textbf{118,} 185701 (2017).}

\bibitem{Weitenberg2018NP}{N. Fl\"{a}schner, D. Vogel, M. Tarnowski, B. S. Rem, D.-S. L\"{u}hmann, M. Heyl, J. C. Budich, L. Mathey, K. Sengstock, and C. Weitenberg,
    \textit{Observation of a dynamical topological phase transition,}
    Nat. Phys. \textbf{14,} 265 (2018).}

\bibitem{ShuaiChen2018PRL}{W. Sun, C.-R. Yi, B.-Z. Wang, W.-W. Zhang, B. C. Sanders, X.-T. Xu, Z.-Y. Wang, J. Schmiedmayer, Y. Deng, X.-J. Liu, S. Chen, and J.-W. Pan,
    \textit{Uncover Topology by Quantum Quench Dynamics,}
Phys. Rev. Lett. \textbf{121,} 250403 (2018).}

\bibitem{ShuChen2018PRB}{C. Yang, L. Li, and S. Chen,
\textit{Dynamical topological invariant after a quantum quench,}
Phys. Rev. B \textbf{97,} 060304(R) (2018).}

\bibitem{Xiongjun2018scibull}{L. Zhang, L. Zhang, S. Niu, and X.-J. Liu,
\textit{Dynamical classification of topological quantum phases,}
Sci. Bull. \textbf{63,} 1385 (2018).}

\bibitem{Ueda2018PRL}{Z. Gong and M. Ueda,
\textit{Topological Entanglement-Spectrum Crossing in Quench Dynamics,}
Phys. Rev. Lett. \textbf{121,} 250601 (2018).}

\bibitem{Cooper2018PRL}{M. McGinley and N. R. Cooper,
\textit{Topology of One-Dimensional Quantum Systems Out of Equilibrium,}
Phys. Rev. Lett. \textbf{121,} 090401 (2018).}

\bibitem{Cooper2019PRB}{M. McGinley and N. R. Cooper,
\textit{Classification of topological insulators and superconductors out of equilibrium,}
Phys. Rev. B \textbf{99,} 075148 (2019).}

\bibitem{Yong2018PRB}{Y. Xu and Y. Hu,
\textit{Scheme to equilibrate the Hall response of topological systems from coherent dynamics,}
Phys. Rev. B \textbf{99,} 174309 (2019).}

\bibitem{Weitenberg2019NC}{M. Tarnowski, F. N. \"{U}nal, N. Fl\"{a}schner, B. S. Rem, A. Eckardt, K. Sengstock, and C. Weitenberg,
\textit{Measuring topology by dynamics: Chern number from linking number,}
Nat. Commun. \textbf{10,} 1728 (2019).}

%quadrupole moment
\bibitem{Wheeler2018arXiv}
W. A. Wheeler, L. K. Wagner, and T. L. Hughes,
\textit{Many-body electric multipole operators in extended systems,}
{Phys. Rev. B} \textbf{100}, 245135 (2019).

\bibitem{Cho2018arXiv}
B. Kang, K. Shiozaki, and G. Y. Cho,
\textit{Many-body order parameters for multipoles in solids,}
{Phys. Rev. B} \textbf{100}, 245134 (2019).

\bibitem{Watanabe2019PRB}
S. Ono, L. Trifunovic, and H. Watanabe,
\textit{Difficulties in operator-based formulation of the bulk quadrupole moment,}
{Phys. Rev. B} \textbf{100}, 245133 (2019).

\bibitem{Fruchart2016PRB}
M. Fruchart,
\textit{Complex classes of periodically driven topological lattice systems,}
Phys. Rev. B \textbf{93}, 115429 (2016).

\bibitem{WangZhong2017PRB}
S. Yao, Z. Yan, and Z. Wang,
\textit{Topological invariants of floquet systems: General formulation, special properties, and floquet topological defects,}
{Phys. Rev. B} \textbf{96}, 195303 (2017).

\bibitem{Wieder2019}{B. J. Wieder, Z. Wang, J. Cano, X. Dai,
L. M. Schoop, B. Bradlyn, and B. A. Bernevig,
\textit{Strong and "Fragile" Topological Dirac Semimetals with Higher-Order Fermi Arcs,}
arXiv:1908.00016.}

\bibitem{Bernevig2014PRB}{A. Alexandradinata, X. Dai, and B. A. Bernevig,
\textit{Wilson-Loop Characterization of Inversion-Symmetric Topological Insulators,}
Phys. Rev. B \textbf{89,} 155114 (2014).}

\bibitem{StructuredYong}
Y. Xu, F. Zhang, and C. Zhang,
\textit{Structured Weyl Points in Spin-Orbit Coupled Fermionic Superfluids,}
Phys. Rev. Lett. \textbf{115,} 265304 (2015).

\bibitem{ReviewYong}
Y. Xu,
\textit{Topological Gapless Matters in Three-dimensional Ultracold Atomic Gases,}
Front. Phys. \textbf{14,} 43402 (2019).

\bibitem{Fidkowski2010PRL}{L. Fidkowski,
\textit{Entanglement Spectrum of Topological Insulators and Superconductors,}
Phys. Rev. Lett. \textbf{104,} 130502 (2010).}

\bibitem{Bloch2013PRL}{M. Aidelsburger, M. Atala, M. Lohse, J. T. Barreiro, B. Paredes, and I. Bloch,
\textit{Realization of the Hofstadter Hamiltonian with Ultracold Atoms in Optical Lattices,}
Phys. Rev. Lett. \textbf{111,} 185301 (2013).}

\bibitem{Ketterle2013PRL}{H. Miyake, G. A. Siviloglou, C. J. Kennedy, W. C. Burton, and W. Ketterle,
\textit{Realizing the Harper Hamiltonian with Laser-Assisted Tunneling in Optical Lattices,}
Phys. Rev. Lett. \textbf{111,} 185302 (2013).}

\bibitem{Lewenstein2014PRL}{P. Hauke, M. Lewenstein, and A. Eckardt,
\textit{Tomography of Band Insulators from Quench Dynamics,}
Phys. Rev. Lett. \textbf{113,} 045303 (2014).}

\bibitem{Sengstock2016} {N. Fl\"aschner, B. Rem, M. Tarnowski, D. Vogel, D.-S. L\"uhmann, K. Sengstock, and C. Weitenberg,
\textit{Exploring the many-body localization transition in two dimensions,}
Science \textbf{352,} 1091 (2016). }

\bibitem{Bloch2011Nat}{C. Weitenberg, M. Endres, J. F. Sherson, M. Cheneau, P. Schau{\ss}, T. Fukuhara, I. Bloch, and S. Kuhr,
\textit{Single-Spin Addressing in an Atomic Mott Insulator,}
Nature \textbf{471,} 319 (2011).}

\bibitem{Greiner2009Nat}{W. S. Bakr, J. Gillen, A. Peng, S. F\"olling, and M. Greiner,
\textit{A quantum gas microscope - detecting single atoms in a Hubbard regime optical lattice,}
Nature \textbf{462,} 74 (2009). }

%circuits
\bibitem{Thomale2018CP}
C. H. Lee, S. Imhof, C. Berger, F. Bayer, J. Brehm, L. W. Molenkamp, T. Kiessling, and R. Thomale,
\textit{Topolectrical circuits,}
{Commun. Phys.} \textbf{1}, 39 (2018).

\bibitem{Thomale2019PRL}
T. Hofmann, T. Helbig, C. H. Lee, M. Greiter, and R. Thomale,
\textit{Chiral voltage propagation and calibration in a topolectrical chern circuit,}
{Phys. Rev. Lett.} \textbf{122}, 247702 (2019).

\bibitem{ChenBook}
W.-K. Chen,
\textit{The circuits and filters handbook}, 3rd ed.
(CRC Press, Boca Raton, FL, 2009).

\bibitem{Haldane2008PRL}{H. Li and F. D. M. Haldane,
\textit{Entanglement Spectrum as a Generalization of Entanglement Entropy: Identification of Topological Order in Non-Abelian Fractional Quantum Hall Effect States,}
Phys. Rev. Lett. \textbf{101,} 010504 (2008).}
\bibitem{Pollmann2010PRB}{F. Pollmann, A. M. Turner, E. Berg, and M. Oshikawa,
\textit{Entanglement spectrum of a topological phase in one dimension,}
Phys. Rev. B \textbf{81,} 064439 (2010).}

\bibitem{Peschel2003} {I. Peschel,
\textit{Calculation of reduced density matrices from correlation functions,}
J. Phys. A \textbf{36,} L205 (2003).}.


%\bibitem{Schindler2018SASM} {F. Schindler, A. M. Cook, M. G. Vergniory, Z. Wang, S. S. P. Parkin, B. Andrei Bernevig, and T.
%Neupert,
%\textit{Higher-order topological insulators,}
%Sci. Adv. \textbf{4,} 0346, (2018).}

\bibitem{Simon2019PRB}{Y. Lu, N. Jia, L. Su, C. Owens, G. Juzeli\={u}nas, D. I. Schuster, and J. Simon,
\textit{Probing the Berry curvature and Fermi arcs of a Weyl circuit,}
{Phys. Rev. B} \textbf{99,} 020302(R) (2019).}

\bibitem{Thomale2019PRB}{T. Helbig, T. Hofmann, C. H. Lee, R. Thomale, S. Imhof, L. W. Molenkamp, and T. Kiessling,
\textit{Band structure engineering and reconstruction in electric circuit networks,}
{Phys. Rev. B} \textbf{99,} 161114(R) (2019).}

\end{thebibliography}
\end{document}